\documentclass[sts,preprint]{imsart}

\usepackage{amsthm,amsmath,amsfonts,amssymb}

\usepackage[authoryear]{natbib}
\usepackage[colorlinks,citecolor=blue,urlcolor=blue]{hyperref}
\usepackage{graphicx}
\usepackage{enumerate}
\usepackage{dsfont}
\usepackage{cleveref}
\usepackage{tikzit}

\tikzstyle{myvar}=[fill=none, draw=none, shape=circle, tikzit draw=none, tikzit shape=circle]

\tikzstyle{arrow}=[draw=black, ->, tikzit draw=black]
\tikzstyle{probedge}=[draw=black, -, tikzit draw = black, dashed]

\startlocaldefs
\usetikzlibrary{patterns, arrows.meta}

\usepackage{enumitem}

\newcommand{\E}{\mathbb{E}}
\newcommand{\Prob}{\mathbb{P}}
\newcommand{\V}{\mathbb{V}}

\newcommand{\mathspace}[1] {\mathcal{#1}}

\newcommand{\vectorobs}[1] {\mathrm{#1}}
\newcommand{\scalarobs}[1] {#1}
\newcommand{\RV}[1] {#1}


\newcommand{\Xspace}{\mathspace{X}}
\newcommand{\rvX}{\RV{X}}
\newcommand{\rvY}{\RV{Y}}
\newcommand{\rvZ}{\RV{Z}}

\newcommand{\rvN}{\RV{N}}

\newcommand{\obsZ}{\vectorobs{z}}

\newcommand{\obsN}{\scalarobs{n}}
\endlocaldefs

 \def\firstcircle{(90:1.25cm) circle (1.5cm)}
  \def\secondcircle{(210:1.25cm) circle (1.5cm)}
  \def\thirdcircle{(330:1.25cm) circle (1.5cm)}

\newcommand\independent{\protect\mathpalette{\protect\independenT}{\perp}}
\def\independenT#1#2{\mathrel{\rlap{$#1#2$}\mkern2mu{#1#2}}}

\newtheorem{theorem}{Theorem}
\newtheorem{corollary}{Corollary}
\newtheorem{lemma}{Lemma}

\newtheorem{definition}{Definition}
\begin{document}

\begin{frontmatter}
\title{Feature selection in stratification estimators of causal effects: lessons from potential outcomes, causal diagrams, and structural equations}
\runtitle{Feature selection in stratification estimators of causal effects}

\begin{aug}
\author[A]{\fnms{P. Richard} \snm{Hahn}\ead[label=e2]{prhahn@asu.edu}}
\and
\author[B]{\fnms{Andrew} \snm{Herren}\ead[label=e1]{asherren@asu.edu}}
\affiliation{Arizona State University}
\end{aug}

\begin{abstract}
What is the ideal regression (if any) for estimating average causal effects? We study this question in the setting of discrete covariates, deriving expressions for the finite-sample variance of various stratification estimators. This approach clarifies the fundamental statistical phenomena underlying many widely-cited results. Our exposition combines insights from three distinct methodological traditions for studying causal effect estimation: potential outcomes, causal diagrams, and structural models with additive errors.  
\end{abstract}

\begin{keyword}
\kwd{Causal inference}
\kwd{Dimension reduction}
\kwd{Propensity score}
\kwd{Regularized regression}
\kwd{Semi-supervised learning}
\kwd{Variable selection}
\end{keyword}

\end{frontmatter}

\section{Introduction}

This paper considers the problem of estimating an average treatment effect from observational or experimental data, provided that a sufficient set of control variables are available. We pose the question: might the statistical precision of our estimates improve if we used only a subset of the available controls or possibly a dimension reduced transformation of them? This question is evergreen in the applied social sciences (see \cite{leamer1983let} or \cite{hernan2020causal}, page 195), but is surprisingly tricky to navigate for many applied researchers. In this paper, we break the problem down by considering the somewhat stylized situation of discrete covariates with finite support, where we are able to conduct a thorough variance analysis.

This paper examines this question in detail using tools from three distinct formalisms: potential outcomes, causal diagrams, and structural equations. We show (Section 2) that a key condition licensing valid causal inference from observational data can be expressed equivalently in each of the three distinct frameworks (conditional unconfoundedness, the back-door criterion, and exogenous errors), allowing us to alternate between perspectives as is convenient pedagogically. Importantly, this equivalence is established in terms of a generic function of observed covariates, meaning that it covers not only variable selection, but ``feature selection''; this generality means that insights built on this equivalence apply seamlessly to modern methods such as regression trees or neural networks, which implicitly introduce potentially non-invertible transformations of the observed covariates.

For clarity, we focus on the simplified (yet fairly common in practice) setting of discrete covariates with finite support, which allows us to derive finite sample properties of common stratification estimators, including widely-used linear regression and propensity score methods. Section 3 presents two novel-but-elementary results that will be used to re-analyze earlier theoretical results pertaining to regression adjustment for causal effect estimation. The first result defines the notion of a minimal control function, allowing us to distinguish between necessary and sufficient statistical control for causal effect estimation. The second result is a finite-sample analysis of stratification estimators of average causal effects in the setting of discrete control variables with finite support. This finite-sample analysis, presented in Theorem \ref{theorem2}, articulates the conditions by which a control function may be viewed as optimal in the sense of minimum variance.

Section 4 collects concrete examples illustrating practical implications of the theory presented in Section 3, detailing how these results relate to previous literature, both classic and contemporary. By bringing together these profound results in the context of a common statistical framework, we hope to harmonize their insights for practitioners. 
 
 Section 5 concludes by discussing further connections to previous literature. 
 
\section{Formal frameworks for causal inference}
Let $\rvY$ be the outcome/response of interest, $\rvZ$ be a binary treatment assignment, and $\rvX$ be a vector of covariates drawn from covariate space $\Xspace$, all denoted here as random variables. For a sample of size $n$, observations are assumed to be drawn independently as triples $(X_i, Y_i, Z_i)$, for $i = 1, \dots, n$. The goal of causal effect estimation is to understand how the response variable $Y$ changes according to hypothetical manipulations of the treatment assignment variable, $Z$. 
For simplicity, we will refer to our observational units as ``individuals'', although of course in applications that need not be the case.

The essential challenge to causal estimation is that only one of the two possible treatment assignments can be observed; as a consequence, if individuals who happen to receive the treatment differ systematically from those who do not, either in terms of their likely response value or in terms of how they respond to treatment, naive comparisons between the treated and untreated units will not simply reflect the causal impact of the treatment --- the treatment effect is said to be {\em confounded} with other aspects of the population. The field of causal inference has proposed and developed a variety of techniques for coping with this difficulty, the most common of which is some form of regression adjustment (meant here to include propensity score estimators and matching estimators, etc), which entails estimating average causal effects as (weighted) averages of (estimated) conditional expectations. The key assumption that justifies this process is referred to as {\em conditional unconfoundedness}, which asserts that the measured covariates adequately account for all of the systematic differences between the treated and untreated individuals in our observational sample; formalizing this assumption can be approached in a number of ways, which we turn to now. Only after the notation of these formalisms has been introduced can our causal estimand, and the class of estimators we will study, be precisely defined.

\subsection{Potential outcomes} \label{potential_outcome_section}
The potential outcomes framework casts causal inference as a missing data problem: causal estimands are contrasts between pairs of outcomes that are mutually unobservable --- when we see one, we cannot see the other. At present, the standard reference for the potential outcomes framework is \cite{imbens2015causal}, which contains extensive citations to the primary literature.

Let $\rvY^1$ and $\rvY^0$ refer to the ``potential outcomes'' when $\rvZ=1$ and $\rvZ=0$. For individual $i$, the {\em individual treatment effect} will be defined as the difference between the potential outcomes: $$\tau_i = Y^1_i - Y^0_i.$$ Other treatment effects, such as a ratio rather than a difference, are sometimes considered, but in this paper we focus on the difference. Because the potential outcomes $(\rvY^1, \rvY^0)$ are never observed simultaneously, individual treatment effects can never be estimated directly. 

However, {\em average} treatment effects can be identified (learned from data) provided certain assumptions are satisfied. The causal estimand this paper will focus on is the average treatment effect, or ATE:
\begin{equation}
\bar{\tau} \equiv \E[\rvY^1 - \rvY^0].
\end{equation}
The precise population over which this expectation is taken will be discussed in more detail in section \ref{estimands}. The standard assumptions that allow this average effect to be estimated are:
\begin{enumerate}
\item Stable unit treatment value assumption (SUTVA), which consists of two conditions:
\begin{enumerate}
\item {\em Consistency}: The observed data is related to the potential outcomes via the identity 
\begin{equation}\label{gating}
\rvY = \rvY^1 Z + \rvY^0 (1 - Z),
\end{equation}
which describes the ``gating'' role of the observed treatment assignment, $Z$.
\item {\em No Interference}: for any sample of size $\obsN$ with $\rvY \in \mathcal{Y}$ 
and $\rvZ \in \mathcal{Z}$, $(\rvY_i^1, \rvY_i^0) \independent \rvZ_j$ for all $i,j \in \{1, ..., \obsN\}$ with $j \neq i$, which rules out interference between observational units.
\end{enumerate}
\item Positivity: $0 < \Prob(\rvZ=1 \mid \rvX= x) < 1$ for all $x \in \mathcal{X}$
\item Conditional unconfoundedness: $(\rvY^1, \rvY^0) \independent \rvZ \mid \rvX$
\end{enumerate}
Imagining concrete violations of these conditions is intuition-building. Consistency can be violated under non-compliance, so that treatment assignment doesn't match treatment actually received. No interference can be violated, for example, if we were studying the effect of individual tutoring on student grades in a certain classroom and students study together; Jimmy's treatment assignment may impact Sally's grade. Positivity is violated if certain individuals can never receive treatment, rendering their contribution to the average treatment effect unlearnable. And finally, conditional unconfoundedness can be violated, for example, if both treatment assignment and the outcome variable share a common cause. However, this is not the only way conditional unconfoundedness can be violated, and exploring other possibilities in full generality is the topic of the remainder of the paper. 

Taken together, the above assumptions enable identification of average treatment effects because they imply the following equality, the left-hand side of which is estimable: 
\begin{equation*}
\begin{aligned}
\E_X[\E[\rvY \mid \rvX, \rvZ = 1] - \E[\rvY \mid \rvX, \rvZ = 0]] & =  \E[\rvY^1 - \rvY^0].
\end{aligned}
\end{equation*}
In more detail, the equivalence is established as follows:
\begin{equation*}
\begin{aligned}
\E_X[\E[\rvY \mid \rvX, \rvZ = 1] &= \E_X[\E[\rvY^1 Z + \rvY^0 (1-Z) \mid \rvX, \rvZ = 1]]\\
&= \E_X[\E[\rvY^1 \mid \rvX, \rvZ = 1] = \E[\rvY^1].\\
\E_X[\E[\rvY \mid \rvX, \rvZ = 0] &= \E_X[\E[\rvY^1 Z + \rvY^0 (1-Z) \mid \rvX, \rvZ = 0]]\\
&= \E_X[\E[\rvY^0 \mid \rvX, \rvZ = 0] = \E[\rvY^0].
\end{aligned}
\end{equation*}
An alternative parametrization is: $$Y_i = Y_i^0 + \tau_i Z_i$$
where $$\tau_i = Y_i^1 - Y_i^0,$$ which emphasizes that $\tau_i$ itself can differ across units 
and, as a random variable, can be {\em dependent} on the treatment assignment so that $\tau \not\independent Z$. This treatment effect parametrization will be used extensively in our exposition.

This paper is focused on the following question: If $X$ satisfies conditional unconfoundedness, might there be a function of $X$ with a reduced range that also satisfies conditional unconfoundedness? That is, can $X$ be reduced in dimension while still providing valid causal effect estimation? Answering this question requires a more detailed examination of {\em how} conditional unconfoundedness is achieved in any particular data generating process, which is facilitated by the introduction of causal diagrams.

\subsection{Causal diagrams} \label{DAG_section}

\subsubsection{Graph theory for causal identification}

Causal diagrams provide a more fine-grained look at confounding, as they consider the full joint distribution of the response, treatment, and control variables regressors. The graphical approach to causality has its earliest roots in the work of Sewell Wright \citep{wright1918nature, wright1920relative, wright1921correlation}, but attained its mature modern form in the prodigious work of Judea Pearl  \citep{pearl1987embracing, pearl1987logic, pearl1995theory, pearl1995causal}. See \cite{pearl2009causality} for a textbook treatment and comprehensive references. The presentation here loosely follows the expository treatment in \cite{shalizi2021advanced}. 

Recall that any joint density over $p$ random variables may be expressed in {\em compositional form}, as a product of conditional densities:
$$f(x_1, x_2, \dots, x_p) = f(x_1)f(x_2 \mid x_1)f(x_3 \mid x_1, x_2)...f(x_p \mid x_1, x_2, \dots, x_{p-1}),$$
where the density functions $f(\cdot)$ and $f(\cdot \mid \cdot)$ refer to different densities depending on their arguments. 
The labeling of the variables is arbitrary, and so we can chain together these marginal and conditional distributions in any order (though of course that will lead to different forms).
Some of these variables might exhibit {\em conditional independence}, meaning that, for example
$$f(x_1 \mid x_2, x_3) = f(x_1 \mid x_2)$$
which is equivalently expressed as $$X_1 \independent X_3 \mid X_2.$$  The relationship to {\em directed (acyclic) graphs} (DAG) is straightforward: draw a node for each variable and draw a line from $X_j$ going into $X_i$ if $X_j$ appears in the conditional distribution of $X_i$. 
This graph is {\em directed}, with the arrow pointing from $X_j$ {\em to} $X_i$. We say that $X_j$ is a ``parent'' of $X_i$ and that $X_i$ is the ``child'' of $X_j$. 

From the graph, the joint distribution may be expressed as
$$f(x_1, \dots, x_p) = \prod_{j = 1}^p f(x_j \mid \mbox{parents}(x_j)).$$ 
This leads us to the {\em Markov property}, which is
$$X_j \independent \mbox{non-descendants}(X_j) \mid \mbox{parents}(X_j),$$
where ``descendant'' refers to children, grandchildren, great-grandchildren, etc. We can see this by dividing through by the marginal distribution of 
$\mbox{parents}(X_j)$ and observing that the resulting distribution is a product of terms involving either $X_j$ or $\mbox{non-descendants}(X_j)$, but not both. The Markov property allows one to efficiently deduce conditional independence relationships and underpins Pearl's algorithm (which will be described shortly).

Finally, a complete treatment of confounding in the causal diagram framework requires the following definition: 
\begin{definition}
A {\em collider} is a node/variable $V$ in a DAG that sits on an undirected path between two other nodes/variables, $X_j$ and $X_i$, and the paths both have arrows pointing {\em into} $V$. 
\end{definition}
Conditioning on a collider induces dependence between its parents. For a classic example of this phenomenon, suppose that a certain college grants admission only to applicants with high test scores and/or athletic talent. Even if we grant that in the general population these talents may be independent, but among admitted students, these two attributes become highly dependent.  If we know that a student is not athletic, then we know for sure that they must be academically gifted and vice-versa. While this is a basic result in probability theory, Pearl's work emphasized its significance to the problem of regression adjustment for causal effect estimation.

With a DAG in hand, it is possible to deduce -- rather than assume --  conditional unconfoundedness: Pearl developed an algorithm for determining subsets of variables in $X$ (i.e., its coordinate dimensions) that define valid regression estimators. The inputs to this algorithm are a directed acyclic graph (DAG) that characterizes the causal relationships between variables; such a graph describes a particular compositional representation of the joint distribution, reflecting conditional independences that are implied by the {\em stipulated} causal relationships. The prohibition on cycles rules out positive feedback self-causation. Here we present Pearl's algorithm in a somewhat simplified form, assuming that the graph contains no descendants of $Z$ other than $Y$.

Given an input DAG, $\mathcal{G}$ and a subset of nodes $S$, the ``backdoor'' algorithm proceeds as follows:
\begin{enumerate}
\item Identify all (undirected) paths between $Z$ and $Y$. 
\item Consider each variable along each of these paths and make sure that at least one of them is ``blocked''.
\begin{enumerate}
\item A variable $W$ is blocked if 
\begin{enumerate}
\item $W$ is not a collider and is in the set $S$ or
\item $W$ is a collider and neither $W$ nor any of its descendants is in the set $S$.
\end{enumerate}
\end{enumerate}
\item Return {\tt TRUE} if every ``backdoor'' path between $Z$ and $Y$ (all paths except the direct causal arrow from $Z$ to $Y$), is blocked. Otherwise return {\tt FALSE}.
\end{enumerate}
Sets of variables satisfying the backdoor criterion --- those sets where the algorithm returns {\tt TRUE} --- are valid adjustment sets in the sense that $Y$ and $Z$ {\em would be} conditionally independent, given those variables, {\em if there were no causal relationship} between $Y$ and $Z$. By ruling out all other possible sources of association, any observed association may be interpreted as arising from a causal relationship.

\subsubsection{Functional causal models.}\label{fcm}
Causal DAGs may be associated with a functional causal model, 
a set of deterministic functions that take as inputs elements of $X$ as well as independent (``exogenous'') error terms.  
The basic triangle confounding graph corresponding to an $(X, Y, Z)$ triple satisfying conditional unconfoundedness is shown in Figure \ref{graph1}.
\begin{figure}
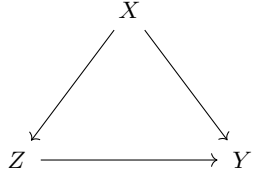

\ctikzfig{graph1}
\caption{A simple triangle confounding diagram, where a control variable $X$ causally influences both the treatment $Z$ and the response $Y$. This graph does not clarify what information contained in (the potentially multidimensional) $X$ is relevant for $Z$ or $Y$ or both or neither, only that knowing the value of $X$ in its entirety permits causal estimation.}\label{graph1}
\end{figure}
The corresponding functional causal model can be expressed as
\begin{equation}
\begin{split}
Z &\leftarrow G(X,\epsilon_z)\\
Y & \leftarrow F(X,Z,\epsilon_y)
\end{split}
\end{equation}
where $X$, $\epsilon_z$ and $\epsilon_y$ are mutually independent (though all three may be vector-valued with non-independent elements). The exogenous errors ($\epsilon_z$ and $\epsilon_y$) that appear in a single equation are suppressed in the graph. All of the stochasticity is inherited from the exogenous variables, while all of the deterministic relationships are reflected in the functions $G(\cdot)$ and $F(\cdot)$, which are explicitly endowed with a causal interpretation. Specifically, the potential outcomes are given by:
\begin{equation}
\begin{split}
Y^1 &\leftarrow F(X,1,\epsilon_y)\\
Y^0 & \leftarrow F(X,0,\epsilon_y)
\end{split}
\end{equation}
where $(X, \epsilon_y)$ are drawn from their marginal distributions, irrespective of the value of the treatment argument. As was mentioned previously, throughout this paper we assume that $X$ does not contain any causal descendants of $Z$.

Consider two ways to conceptualize the data generating process for both the potential outcome pairs, $(Y^0, Y^1)$, and the observed response $Y$. On the one hand, the potential outcomes can be generated from the functional causal model, by fixing the $Z$ argument to 0 or 1, irrespective of the implied distribution of $Z \mid X$. Procedurally, this would look like drawing $X$ from its marginal distribution, drawing $\epsilon_y$, and evaluating $F(X, 0, \epsilon_y)$ and $F(X, 1, \epsilon_y)$. The observed data can then be constructed via the consistency assumption $Y = F(X, 1, \epsilon_y)Z + F(X, 0, \epsilon_y)(1-Z)$. Equivalently, $Y$ may be drawn directly via $F(X, Z, \epsilon_y)$, where $Z$ (the observed treatment assignment) was drawn according to $Z \mid X$ (as specified by the CDAG). This equivalence is especially instructive as to why $Y \mid Z = z$ and $Y^z$ do not generally have the same distribution and, furthermore, why $Y \mid Z = z, X = x$ and $Y^z \mid X= x$ do have the same distribution (assuming, as we have above, that $X$ is causally exhaustive).

The role of $\epsilon_y$ in defining the distribution of the potential outcomes is worth considering in more detail. Note that for a binary $Z$, any functional causal model $F$ may be rewritten as
$$F(X, Z , \epsilon_y) = F(X, 0, \epsilon_y) + Z \left[ F(X, 1, \epsilon_y) - F(X, 0, \epsilon_y) \right] = \mu(X, \epsilon_y) + Z \tau(X, \epsilon_y).$$ This formulation invites us to consider that $\epsilon_y$ may be multivariate, distinct elements of which may affect $\mu(X, \epsilon_y)$ and $\tau(X, \epsilon_y)$.  Three 
particular cases are especially notable:
\begin{enumerate}
\item $\mu(X, \epsilon_y) = \mu(X) + \epsilon_y$ and $\tau(X, \epsilon_y) = \tau(X)$: here, $\epsilon_y$ has the same effect on the two potential outcomes $F(X, 1, \epsilon_y)$ and $F(X, 0, \epsilon_y)$, so that their joint distribution is singular.
\item $\mu(X, \epsilon_y) = \mu(X) + \epsilon_{y,0}$ and $\tau(X, \epsilon_y) = \tau(X) + \left( \epsilon_{y,1} - \epsilon_{y,0} \right)$ where the exogenous error is partitioned as $\epsilon_y = (\epsilon_{y, 0}, \epsilon_{y, 1})$. Here, $\epsilon_{y,0}$ and $\epsilon_{y,1}$ are distinct random variables that separately define the potential outcome distributions so that one effect of the treatment is in changing {\em which} exogenous influences affect the response. 
\item $\mu(X, \epsilon_y) = \mu(X) + \epsilon_{y, \mu}$, $\tau(X, \epsilon_y) = \tau(X) + \epsilon_{y, \tau}$, where the exogenous errors is partitioned as $\epsilon_y = (\epsilon_{y, \mu}, \epsilon_{y, \tau})$. In this case, a distinct set of causal factors dictate exogenous variation in the prognostic (baseline) response and exogenous variation in the treatment effect itself. For example, variation in the baseline response may be due to environmental factors that are independent from genetic factors dictating one's response to a new drug.
\end{enumerate} 
These three cases are visualized in Figure \ref{errors} with $\tau(X) = 1$. Empirically, these cases are indistinguishable in that they are ``observationally equivalent'' --- because the potential outcomes are never jointly observed, most aspects of their joint distribution are fundamentally unidentified. 

\begin{figure}
\includegraphics[width=2.5in]{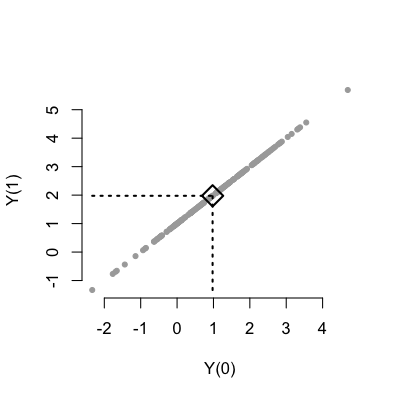}
\includegraphics[width=2.5in]{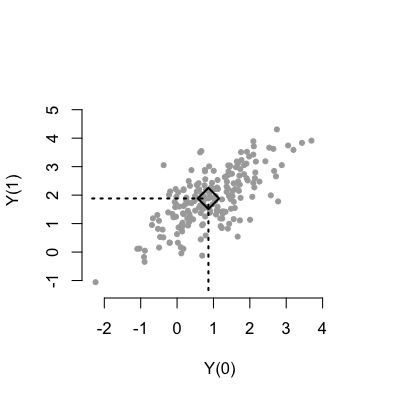}
\caption{Left panel: Potential outcome distributions with a common additive univariate error and a homogeneous treatment effect (which shifts the line up one unit from the diagonal), articulated in Case 1 below. Right panel: Potential outcome distributions with a homogeneous treatment effect and distinct additive bivariate errors, $\epsilon_{y,0}$ and $\epsilon_{y,0}$, shown here with a positive correlation less than one, articulated in Case 2 below.}\label{errors}
\end{figure}

With a more detailed causal graph, a more detailed assessment of conditional unconfoundedness can be made. For instance, consider Figure \ref{graph2}, which is equivalent to the standard triangle digram in the sense that controlling for all of the elements of $X = (X_1, X_2, X_3, X_4)$ indeed satisfies conditional unconfoundedness. However, Pearl's algorithm reveals that $(X_1, X_2)$ would suffice. By positing more information about the joint distribution of $X$, it is possible to absorb $X_3$ into $\epsilon_z$ and $X_4$ into $\epsilon_y$, while redefining $X = (X_1, X_2)$, bringing us back to the triangle graph, but with a reduced set of control variables.
\begin{figure}
\centering
\begin{tikzpicture}[baseline=-0.25em,scale=0.5]
	\begin{pgfonlayer}{nodelayer}
		\node [style=myvar] (1) at (2.5, 2.5) {$X_1$};
		\node [style=myvar] (2) at (2.5, -2.5) {$X_2$};
		\node [style=myvar] (3) at (-2.5, -2.5) {$X_3$};
		\node [style=myvar] (4) at (7.5, 2.5) {$X_4$};
		\node [style=myvar] (5) at (-2.5, 0) {$Z$};
		\node [style=myvar] (6) at (7.5, 0) {$Y$};
	\end{pgfonlayer}
	\begin{pgfonlayer}{edgelayer}
		\draw [style=arrow] (1) to (5);
		\draw [style=arrow] (1) to (6);
		\draw [style=arrow] (2) to (5);
		\draw [style=arrow] (2) to (6);
		\draw [style=arrow] (3) to (5);
		\draw [style=arrow] (4) to (6);
		\draw [style=arrow] (5) to (6);
	\end{pgfonlayer}
\end{tikzpicture}
\caption{An elaboration of the triangle graph, depicting $X_1$ and $X_2$ as confounders, $X_4$ as a pure prognostic variable, and $X_3$ is an instrument.}\label{graph2}
\end{figure}
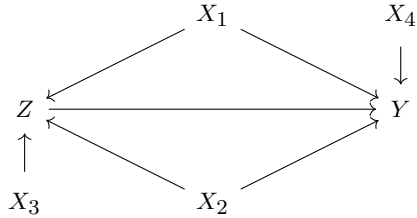

\subsection{Structural equations: Mean regression models with exogenous additive errors} \label{structural_model_section}

Finally, the classic econometric literature approaches causality in terms of mean regression models with additive (but not necessarily homoskedastic) error terms, which are referred to as ``structural'' models (although the term is often used informally and imprecisely in the applied literature).  \cite{heckman2005structural} reviews the structural model approach in econometrics in depth, noting that such methods have their origin in the study of dynamic macroeconomic systems. 
A seminal reference is \cite{haavelmo1943statistical}. The mean regression perspective arises naturally if one takes a linear regression model as a starting point, but is straightforward to motivate starting from a generic functional causal model. 

Define 
\begin{equation}
\begin{split}
\mu(x) &\equiv \E(F(x,0,\epsilon_y)), \\
\tau(x) &\equiv \E(F(x,1,\epsilon_y)) - \mu(x),\\
\upsilon(x,\epsilon_y) &\equiv F(x,0,\epsilon_y) - \mu(x),\\
\delta(x,\epsilon_y) &\equiv F(x,1,\epsilon_y) - F(x,0,\epsilon_y) - \tau(x)
\end{split}
\end{equation}
giving a ``structural model'' 
\begin{equation}\label{structural_eq}
Y = \mu(x) + \upsilon(x,\epsilon_y) + (\tau(x) + \delta(x,\epsilon_y)) Z
\end{equation}
where $\upsilon(x, \epsilon_y)$ and $\delta(x, \epsilon_y)$ are deterministic functions, both of which are mean zero integrating over $\epsilon_y$ (for any $x$): $\E(\upsilon(x, \epsilon_y)) = 0$ and $\E(\delta(x, \epsilon_y)) = 0$. In this formulation, conditional unconfoundedness may be expressed in terms of independence of the treatment, $Z$, and the error terms $\upsilon(x,\epsilon_y)$ and $\delta(x, \epsilon_y)$. Such models are commonly used in a simplified form, where $\delta(x, \epsilon_y)$ is assumed to be identically zero and $\tau(x)$ is assumed to be constant in $x$, but such assumptions are not intrinsic to the formalism.

\subsection{Relating the three frameworks}\label{equivalence}

If every node in a causal diagram is observable, all remaining factors determining $Y$ are attributable to the exogenous errors, which are, by definition, independent of the treatment assignment. In that case, it is easy to forge a connection between the three formalisms, as they all assert that
\begin{equation}
Y^z \mid X=x \;\;\; \,{\buildrel d \over \sim}\, \;\;\; Y \mid X = x, Z = z,
\end{equation} 
where (recall) $Y^z = F(x, z, \epsilon_y)$, with distribution induced by the distribution over $\epsilon_y$. The above assertion  essentially declares that the estimable conditional distributions which appear on the right hand side warrant a causal interpretation. 

For sets of control variables that are {\em not} exhaustive, more care is needed in translating the formalisms, but a precise relationship can be obtained, as spelled out in the following lemma.
\begin{lemma}\label{synthesis}
The assertions below (with their corresponding causal framework labeled in brackets) stand in the following logical relationship: $1 \Rightarrow 2 \Leftrightarrow 3$.
\begin{enumerate}
\item $S = s(X)$ satisfies the back-door criterion. [Causal DAGs]
\item $S= s(X)$ satisfies conditional unconfoundedness: $(Y^0, Y^1) \independent Z \mid S$. [Potential Outcomes]
\item The response $Y$ can be represented in terms of a mean regression model with error terms $(\upsilon(s,X,\epsilon_y),  \delta(s,X, \epsilon_y) ) \independent Z \mid s(X) = s$. [Structural Equations]
\end{enumerate}

\end{lemma}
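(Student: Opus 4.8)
The plan is to prove the two implications separately: first establish the equivalence $2 \Leftrightarrow 3$ by exhibiting the structural error terms as an invertible reparametrization of the potential outcomes, and then establish $1 \Rightarrow 2$ via the soundness of d-separation applied to the functional causal model. Since $2 \Leftrightarrow 3$ is a purely algebraic and measure-theoretic statement about a single fixed joint law, whereas $1 \Rightarrow 2$ requires translating a graphical criterion into a probabilistic one, I would handle them in that order.

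For $2 \Leftrightarrow 3$, I would adopt the reparametrization of Section \ref{structural_model_section}, but anchored at $S$ rather than at the full covariate $\rvX$. Define $\mu(s) \equiv \E[\rvY^0 \mid S = s]$ and $\tau(s) \equiv \E[\rvY^1 - \rvY^0 \mid S = s]$, and set
\begin{equation*}
\upsilon(s, \rvX, \epsilon_y) \equiv \rvY^0 - \mu(s), \qquad \delta(s, \rvX, \epsilon_y) \equiv (\rvY^1 - \rvY^0) - \tau(s).
\end{equation*}
By the consistency assumption \eqref{gating}, $\rvY = \rvY^0 + \rvZ(\rvY^1 - \rvY^0)$, so substituting these definitions yields exactly the mean-regression form $\rvY = \mu(s) + \upsilon + (\tau(s) + \delta)\rvZ$ with $\E[\upsilon \mid S = s] = \E[\delta \mid S = s] = 0$, which is assertion 3. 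The key observation is that, conditional on $S = s$, the map $(\rvY^0, \rvY^1) \mapsto (\upsilon, \delta)$ is an invertible affine transformation, its inverse recovering $\rvY^0 = \mu(s) + \upsilon$ and $\rvY^1 = \mu(s) + \tau(s) + \upsilon + \delta$. Because conditional independence from $\rvZ$ given $S$ is preserved when the conditioned variable is passed through an invertible, $S$-measurable transformation, we conclude that $(\rvY^0, \rvY^1) \independent \rvZ \mid S$ holds if and only if $(\upsilon, \delta) \independent \rvZ \mid S$, giving both directions simultaneously.

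For $1 \Rightarrow 2$, I would argue at the level of the functional causal model of Section \ref{fcm}, where $\rvY^z = F(\rvX, z, \epsilon_y)$ is a deterministic function of $\rvX$ and the exogenous $\epsilon_y$ alone, so that the pair $(\rvY^0, \rvY^1)$ is a measurable function of $(\rvX, \epsilon_y)$. The back-door criterion on $S$ is, by construction, equivalent to $S$ d-separating $\rvZ$ from $\rvY$ in the graph obtained by deleting the direct edge $\rvZ \to \rvY$. In that mutilated graph the only surviving dependence of $\rvY$ on its inputs runs through the non-$\rvZ$ parents of $\rvY$ together with $\epsilon_y$, which are precisely the quantities generating $(\rvY^0, \rvY^1)$. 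Invoking the soundness of d-separation under the Markov property then delivers $(\rvY^0, \rvY^1) \independent \rvZ \mid S$, i.e.\ assertion 2.

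The main obstacle is this graphical step: one must justify that blocking all back-door paths d-separates $\rvZ$ from the generators of the \emph{counterfactual} variables $(\rvY^0, \rvY^1)$, not merely from the factual $\rvY$, and that \emph{joint} (rather than merely marginal) independence of the pair follows. The cleanest route is to note that $\rvY^0$ and $\rvY^1$ share the same arguments $(\rvX, \epsilon_y)$, so their joint law is a pushforward of the law of $(\rvX, \epsilon_y)$; it therefore suffices to d-separate $\rvZ$ from these arguments given $S$, which is exactly what the edge-deleted back-door condition provides. I would also flag why the lemma asserts only a one-way implication here: $1 \Rightarrow 2$ is not reversible in general, since conditional unconfoundedness can hold for a non-graphical reason, such as exact numerical cancellation of associations along unblocked paths, that no back-door argument would certify.
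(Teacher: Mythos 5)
Your proof is correct and follows essentially the same route as the paper's: for $2 \Leftrightarrow 3$ you use the identical $S$-anchored reparametrization (your $\mu(s) = \E[Y^0 \mid S=s]$ and $\tau(s) = \E[Y^1 - Y^0 \mid S=s]$ coincide with the paper's $\E(\mu(X) \mid S=s)$ and $\E(\tau(X) \mid S=s)$ by the mean-zero error property) together with the same one-to-one correspondence between $(Y^0, Y^1)$ and $(\upsilon, \delta)$ for fixed $s$, and for $1 \Rightarrow 2$ you invoke the same mutilated-graph argument the paper encodes in Figure \ref{po_graph}, where $Y$ is replaced by $Y^* = (Y^0, Y^1)$, a node with the same non-$Z$ parents and exogenous input $\epsilon_y$. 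The one phrase to tighten is your claim that the edge-deleted back-door condition ``exactly provides'' d-separation of $Z$ from the arguments $(X, \epsilon_y)$: back-door does not d-separate $Z$ from all of $X$ given $S$ (an instrument such as $X_3$ in Figure \ref{graph2} remains dependent on $Z$ given a valid $S$), so the argument should run --- as your own preceding sentence and the paper's Figure \ref{po_graph} do --- through d-separation of $Z$ from the single node $Y^*$ generated by the non-$Z$ parents of $Y$ together with $\epsilon_y$, rather than from the full vector of generators.
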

\begin{proof}
Let $X$  denote all of the variables in a complete causal diagram with the exception of the treatment variable $Z$ and response variable $Y,$ and consider the following causal model, written in terms of functional equations, potential outcomes, and a structural mean regression with additive exogenous errors:
\begin{equation}
\begin{split}
Z &\leftarrow G(X, \epsilon_z),\\
Y^z &\leftarrow F(X, z, \epsilon_y) =  \mu(X) + \upsilon(X,\epsilon_y) + (\tau(X) + \delta(X,\epsilon_y)) z,\\
\begin{pmatrix} Y^0 \\
 Y^1 \end{pmatrix} &\leftarrow  \begin{pmatrix} \mu(X) + \upsilon(X,\epsilon_y) \\ \mu(X) + \tau(X) +  \upsilon(X,\epsilon_y) + \delta(X, \epsilon_y) \end{pmatrix}.
\end{split}
\end{equation}

To see that 1 implies 2, recall that 1 means that $S$ renders the treatment and response conditionally independent in the modified DAG with no causal arrow between $Z$ and $Y$. But it is precisely such a graph that defines the relationship between $Z$ and the potential outcomes $Y^0 = F(X, 0, \epsilon_y)$ and $Y^1 = F(X, 1, \epsilon_y)$, as shown in Figure \ref{po_graph}.

To see that 2 and 3 are equivalent, re-parametrize the additive error model in terms of $S$, as follows:
\begin{equation}
\begin{split}
Y^z &\leftarrow \mu(s) +  \upsilon(s,X,\epsilon_y) + (\tau(s) + \delta(s, X, \epsilon_y))z\\
\mu(s) &\equiv \E(\mu(X) \mid S(X) = s)\\
\tau(s) &\equiv \E(\tau(X) \mid S(X) = s)\\
\upsilon(s,X,\epsilon_y) &\equiv  \mu(X) - \mu(s) + \upsilon(X,\epsilon_y)\\
 \delta(s,X,\epsilon_y) &\equiv \tau(X) - \tau(s) + \delta(X,\epsilon_y).
 \end{split}
 \end{equation}
For a fixed value of $s$, the mean terms $\mu(s)$ and $\tau(s)$ are constant, so that $(Y^0, Y^1)$ stands in a one-to-one relationship with $\upsilon(s,X,\epsilon_y)$ and $\delta(s,X,\epsilon_y)$; therefore if the former are independent of $Z$, then so must be the latter, and vice-versa.

\end{proof}

\begin{figure}
\centering
\begin{tikzpicture}[baseline=-0.25em,scale=0.5]
	\begin{pgfonlayer}{nodelayer}
		\node [style=myvar] (1) at (2.5, 2.5) {$X_1$};
		\node [style=myvar] (2) at (2.5, -2.5) {$X_2$};
		\node [style=myvar] (5) at (-2.5, 0) {$Z$};
		\node [style=myvar] (6) at (7.5, 0) {$Y$};
	\end{pgfonlayer}
	\begin{pgfonlayer}{edgelayer}
		\draw [style=arrow] (1) to (5);
		\draw [style=arrow] (1) to (6);
		\draw [style=arrow] (2) to (5);
		\draw [style=arrow] (2) to (6);
		\draw [style=arrow] (5) to (6);
	\end{pgfonlayer}
\end{tikzpicture}
\hfill
\begin{tikzpicture}[baseline=-0.25em,scale=0.5]
	\begin{pgfonlayer}{nodelayer}
		\node [style=myvar] (1) at (2.5, 2.5) {$X_1$};
		\node [style=myvar] (2) at (2.5, -2.5) {$X_2$};
		\node [style=myvar] (5) at (-2.5, 0) {$Z$};
		\node [style=myvar] (6) at (7.5, 0) {$Y^*$};
	\end{pgfonlayer}
	\begin{pgfonlayer}{edgelayer}
		\draw [style=arrow] (1) to (5);
		\draw [style=arrow] (1) to (6);
		\draw [style=arrow] (2) to (6);
		\draw [style=arrow] (2) to (5);
	\end{pgfonlayer}
\end{tikzpicture}
\caption{A typical causal DAG (CDAG) and its potential outcome counterpart, where $Y^* = (Y^0, Y^1)$.}\label{po_graph}
\end{figure}
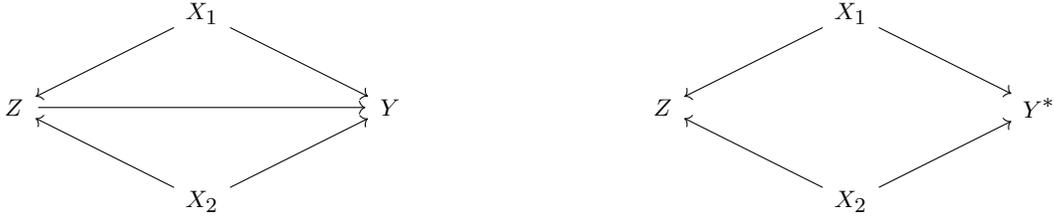

\begin{figure}
\begin{minipage}[b]{180pt}
\centering
\begin{tikzpicture}[baseline=-0.25em,scale=0.5]
	\begin{pgfonlayer}{nodelayer}
		\node [style=myvar] (1) at (-2.5, 2.5) {$X_1$};
		\node [style=myvar] (3) at (2.5, 2.5) {$X_2$};
		\node [style=myvar] (5) at (-2.5, 0) {$Z$};
		\node [style=myvar] (6) at (2.5, 0) {$Y$};
		\node [style=myvar] (7) at (-2.5, -2.5) {$X_3$};
		\node [style=myvar] (8) at (2.5, -2.5) {$X_4$};
	\end{pgfonlayer}
	\begin{pgfonlayer}{edgelayer}
		\draw [style=arrow] (1) to (5);
		\draw [style=arrow] (3) to (6);
		\draw [style=arrow] (5) to (6);
		\draw [style=arrow] (1) to (3);
		\draw [style=arrow] (7) to (5);
		\draw [style=arrow] (8) to (7);
		\draw [style=arrow] (8) to (6);
	\end{pgfonlayer}
\end{tikzpicture}
\caption{The ``box diagram'', which implies several valid control sets: any set containing at least one of $\{ X_1, X_2\}$ and at least one of $\{ X_3, X_4\}$.}
\label{rectangle}
\end{minipage}
\hfill
\begin{minipage}[b]{180pt}
\centering
\begin{tikzpicture}[baseline=-0.25em,scale=0.5]
	\begin{pgfonlayer}{nodelayer}
		\node [style=myvar] (1) at (2.5, 2.5) {$X_2$};
		\node [style=myvar] (2) at (2.5, -2.5) {$X_3$};
		\node [style=myvar] (5) at (-2.5, 0) {$Z$};
		\node [style=myvar] (6) at (7.5, 0) {$Y$};
	\end{pgfonlayer}
	\begin{pgfonlayer}{edgelayer}
		\draw [style=probedge] (1) to (5);
		\draw [style=arrow] (1) to (6);
		\draw [style=arrow] (2) to (5);
		\draw [style=arrow] (5) to (6);
		\draw [style=probedge] (2) to (6);
	\end{pgfonlayer}
\end{tikzpicture}
\caption{The ``box diagram'' with $X_1$ and $X_4$ omitted; a CDAG representation is no longer possible.\vspace{0.2in}}
\label{triangle}
\end{minipage}
\end{figure}

\subsection{Estimands, estimators, and sampling distributions}\label{estimands}
As described previously, by {\em treatment effect}, we mean the difference between the treated and untreated potential outcomes. By {\em average} treatment effect, we mean the average of this difference over some population of individuals. The functional causal model and a distribution over the exogeneous errors define an infinite hypothetical {\em population} from which the observed data is assumed to be a random sample. From this perspective, the population average treatment effect (PATE) may be expressed as $$\E(\tau(X) + \delta(X, \epsilon)) = \E(\tau(X)),$$ where $\tau$ is a fixed-but-unknown function and the expectation is taken with respect to the data generating process defined by the CDAG and the associated functional causal model, so that $X$ and $\epsilon$ are both being averaged over. 

Other average causal effects, differing in terms of the (sub)population over which the average is taken, are likewise readily defined in terms of the functional causal model (FCM). For instance, if we wish to restrict our attention to the average treatment effect among individuals in our observed sample, we may define our estimand as the {\em sample average treatment effect}, or SATE: $$\frac{1}{N} \sum_{i = 1}^{N} \left( \tau(x_i) + \delta(x_i, \epsilon_i) \right ).$$ Note that the SATE and the PATE differ from one another in that, in general,

$$\E(\tau(X)) \neq \frac{1}{N} \sum_{i = 1}^{N} \tau(x_i)$$
and 
$$\frac{1}{N} \sum_{i = 1}^{N} \delta(x_i, \epsilon_i) \neq \E(\delta(X, \epsilon)) = 0.$$
In this paper, we will compare stratification estimators of the PATE, evaluating them in terms of their finite sample variance over repeated sampling of independent draws from $(X_i, Y_i, Z_i)$. While it would be possible to consider the sampling distribution over $(Y_i, Z_i)$ for a fixed vector of observed covariates $x_i$, doing so would make cross comparison of different stratifications impossible, because the sampling distribution would be over-specified relative to the coarser stratification. Because the PATE is of wide applied interest, we argue that averaging over observed control variables $X_i$ is sensible and all of our results are derived in this setting.

Another average treatment effect of broad interest is the {\em conditional average treatment effect} (CATE),
which defines an average treatment effect conditional on a set of covariate values. 
The population CATE,
$$\E(\tau(X) + \delta(X, \epsilon) \mid X = x) = \E(\tau(X) \mid X = x),$$
takes an expectation with respect to a conditional sampling distribution $\tau(X) \mid X = x$, where $\left\{X = x\right\}$ may denote a set of covariates rather than a 
single value. While the focus of this paper is on the PATE, its insights extend automatically to the population CATE.

The CATE is sometimes mistakenly reported in the literature as the {\em individual treatment effect} (ITE), which is a separate estimand that is only identified with more restrictive assumptions. 
The ITE is defined at the unit level as the difference in potential outcomes. For unit $i$, the ITE is given by
$$F(X_i, Z_i = 1, \epsilon_{i,y,1}) - F(X_i, Z_i = 0, \epsilon_{i,y,0}).$$
This is unidentified without further assumptions on the nature of the error term, as in general $\epsilon_{i,y,1} \neq \epsilon_{i,y,0}$; see Figure \ref{errors}.

\section{Minimal and optimal statistical control}

\subsection{The principal deconfounding function}

Although conditional unconfoundedness is central to our conception of causal effect estimation, in fact it is a stronger than necessary assumption for identifying the ATE. More specifically, one only needs a function $s(x)$ that satisfies {\em mean conditional unconfoundedness}.

\begin{definition}
A function $s$ on covariate space $\mathcal{X}$ is said to satisfy {\em mean conditional unconfoundedness} if
\begin{equation}
Z \independent  (\mu(X), \tau(X))  \mid s(X).
\end{equation}
\end{definition}

\begin{lemma}\label{MCU}
Mean conditional unconfoundedness is a sufficient condition for estimating average treatment effects.
\end{lemma}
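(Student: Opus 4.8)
The plan is to show that the PATE, $\E(\tau(X))$, coincides with a functional of the observable joint distribution of $(Y, Z, s(X))$ --- namely the stratification contrast
\begin{equation*}
\psi \equiv \E_{s(X)}\!\left[\,\E(Y \mid s(X), Z = 1) - \E(Y \mid s(X), Z = 0)\,\right],
\end{equation*}
whose inner conditional expectations are identified from data under positivity. Establishing $\psi = \E(\tau(X))$ certifies that stratifying on $s(X)$ suffices to recover the average treatment effect, which is what ``estimating'' means here. To evaluate $\psi$, I would substitute the structural representation \eqref{structural_eq} into each inner conditional expectation and track how its four additive pieces behave under conditioning on $\{s(X) = s, Z = z\}$.

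Two distinct mechanisms do the work, and separating them is the crux of the argument. First, the prognostic and treatment-effect means contribute $\E(\mu(X) \mid s(X) = s, Z = z)$ and $z\,\E(\tau(X) \mid s(X) = s, Z = z)$; by the hypothesis $Z \independent (\mu(X), \tau(X)) \mid s(X)$, both conditional expectations are free of $z$, collapsing to $\E(\mu(X) \mid s(X) = s)$ and $\E(\tau(X) \mid s(X) = s)$. Second, the exogenous-error pieces $\upsilon(X, \epsilon_y)$ and $\delta(X, \epsilon_y)$ must be shown to vanish in expectation. Here I would invoke the defining structure of the functional causal model: since $X$, $\epsilon_z$, and $\epsilon_y$ are mutually independent and $Z = G(X, \epsilon_z)$ is measurable with respect to $(X, \epsilon_z)$, it follows that $\epsilon_y \independent (X, Z)$. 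Combined with $\E_{\epsilon_y}(\upsilon(x, \epsilon_y)) = 0$ for every $x$, iterated expectations give $\E(\upsilon(X, \epsilon_y) \mid s(X) = s, Z = z) = \E(\E(\upsilon(X, \epsilon_y) \mid X, Z) \mid s(X) = s, Z = z) = 0$, and identically for $\delta$.

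Assembling these facts yields $\E(Y \mid s(X) = s, Z = z) = \E(\mu(X) \mid s(X) = s) + z\,\E(\tau(X) \mid s(X) = s)$, so the stratum-level contrast is exactly $\E(\tau(X) \mid s(X) = s)$. A final application of the law of iterated expectations, averaging over the marginal distribution of $s(X)$, delivers $\psi = \E_{s(X)}(\E(\tau(X) \mid s(X))) = \E(\tau(X))$, the PATE.

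I expect the subtle step to be the treatment of the error terms rather than the mean terms. The temptation is to worry that mean conditional unconfoundedness, which constrains only $(\mu(X), \tau(X))$, is too weak to control the stochastic pieces $\upsilon$ and $\delta$; the resolution is that those pieces are annihilated not by the assumption at all, but by the exogeneity baked into the functional causal model, namely $\epsilon_y \independent (X, Z)$. Making this division of labor explicit --- the assumption handles the conditional means while the model's exogenous-error structure disposes of the residuals --- is exactly what shows why the full conditional-independence requirement $(Y^0, Y^1) \independent Z \mid s(X)$ can be relaxed to its mean counterpart without sacrificing identification of the ATE.
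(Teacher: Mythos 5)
Your proposal is correct and follows essentially the same route as the paper's proof: both substitute the structural representation $Y = \mu(X) + \upsilon(X,\epsilon_y) + (\tau(X) + \delta(X,\epsilon_y))Z$ into the stratum-level conditional expectations, dispose of $\upsilon$ and $\delta$ via the exogeneity $\epsilon_y \independent (X, Z)$ together with their conditional mean-zero property, and then invoke $Z \independent (\mu(X), \tau(X)) \mid s(X)$ to drop the conditioning on $Z = z$ from the mean terms. Your ``division of labor'' observation is precisely the structure of the paper's argument, and carrying the computation through to $\E(\tau(X))$ merely makes explicit the final aggregation step the paper leaves implicit.
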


\begin{proof}
Denote the causal model as $$Y^z \leftarrow \mu(X) + \upsilon(X,\epsilon_y) + (\tau(X) + \delta(X,\epsilon_y)) z$$ where $\epsilon_y \independent (Z, X)$, $\E(\upsilon(x,\epsilon_y))  = 0$, and $\E( \delta(x,\epsilon_y)) = 0$ for all $x$. We aim to show that $$\E(Y^z \mid s(X) = s) = \E(Y \mid s(X) = s, Z = z),$$ from which the result follows by the estimability of the right hand side for both $z = 0$ and $z = 1$.  Recalling the relationship between $Y^z$ and $Y \mid Z = z$ described in Section \ref{fcm},  this is equivalent to showing that 
\begin{align*}
\E(\mu(X) + &\upsilon(X,\epsilon_y) + (\tau(X) + \delta(X,\epsilon_y)) z \mid s(X) = s)  =\\
& \E(\mu(X) + \upsilon(X,\epsilon_y) + (\tau(X) + \delta(X,\epsilon_y)) z \mid s(X) = s, Z  = z),
\end{align*} 
where the expectation over $(X, \epsilon_y)$ is with respect to its marginal distribution on the left hand side and with respect to its conditional distribution, given $Z = z$, on the right hand side. By the independence of $\epsilon_y$, the mean zero errors for each $x$, and the linearity of expectation, this reduces to showing that $$\E(\mu(X) + \tau(X)z  \mid s(X) = s) = \E(\mu(X) +  \tau(X)z  \mid s(X) = s, Z  = z).$$ By the assumption of mean conditional unconfoundedness, $Z \independent  (\mu(X), \tau(X))  \mid s(X)$, and the result follows.
\end{proof}

Mean conditional unconfoundedness can be used to define a {\em minimal} control function, but first we must recall the definition of the propensity score \citep{rosenbaum1983central}, which we will denote by $\pi(\cdot)$.
\begin{definition}
The {\em propensity score}, based on a vector of control variables $x$, is the conditional probability of receiving treatment:
\begin{equation}\label{propscore}
\pi(x) \equiv \Prob(\rvZ = 1 \mid \rvX = x).
\end{equation}
\end{definition}
\noindent It is common to interchangeably refer to the propensity {\em score}, which emphasizes a specific numerical value, $\pi(x)$, and the propensity {\em function}, which emphasizes the mapping, $\pi: \mathcal{X} \rightarrow (0, 1)$.

In turn, we have:
\begin{definition}
The {\em principal deconfounding function} is given by following conditional expectation:
$$\lambda(x) = \E(\pi(X) \mid \mu(X) = \mu(x), \tau(X) = \tau(x)).$$
\end{definition}

\begin{theorem} \label{theorem1}
The principal deconfounding function is the coarsest function satisfying mean conditional unconfoundedness. 
\end{theorem}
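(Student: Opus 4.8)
The plan is to recognize $\lambda$ as the Rosenbaum--Rubin propensity score for the \emph{reduced} covariate $W \equiv (\mu(X), \tau(X))$ and to adapt the classical balancing-score argument. The first observation is that, by the tower property (since $W$ is a deterministic function of $X$),
\[
\lambda(X) = \E(\pi(X) \mid W) = \E(\Prob(\rvZ=1\mid \rvX)\mid W) = \Prob(\rvZ = 1 \mid W),
\]
so that $\lambda$ is literally the conditional probability of treatment given $(\mu(X), \tau(X))$. Establishing the theorem then amounts to two claims: (i) $\lambda$ satisfies mean conditional unconfoundedness (\emph{sufficiency}); and (ii) every function of $(\mu(X), \tau(X))$ that satisfies mean conditional unconfoundedness is a refinement of $\lambda$ (\emph{coarseness}), so that $\lambda$ carries the least information among all such functions.

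For (i), I would exploit that $\rvZ$ is binary, so that $\rvZ \independent W \mid \lambda(X)$ is equivalent to the almost-sure identity $\Prob(\rvZ=1\mid W, \lambda(X)) = \Prob(\rvZ = 1 \mid \lambda(X))$. Because $\lambda(X)$ is a function of $W$, the left-hand side collapses to $\Prob(\rvZ=1\mid W) = \lambda(X)$; and by the tower property, using $\Prob(\rvZ=1\mid W)=\lambda(X)$, the right-hand side equals $\E(\lambda(X)\mid \lambda(X)) = \lambda(X)$ as well. Both sides agreeing gives the conditional independence.

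For (ii), let $s = s(W)$ be any function satisfying $\rvZ \independent W \mid s$. The same binary-$\rvZ$ reduction turns this into $\Prob(\rvZ=1\mid W, s) = \Prob(\rvZ=1\mid s)$ almost surely. Since $s$ is a function of $W$, the left-hand side is just $\Prob(\rvZ=1\mid W) = \lambda(X)$, so $\lambda(X) = \Prob(\rvZ=1\mid s)$ is a measurable function of $s$; equivalently $\sigma(\lambda) \subseteq \sigma(s)$, i.e.\ $\lambda$ is coarser than $s$. Combined with (i), this exhibits $\lambda$ as the minimum element, in the refinement order, of the family of mean-conditionally-unconfounding functions of $(\mu, \tau)$. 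A short companion remark confirms $\lambda$ cannot be coarsened without breaking the property: if a strict coarsening $g(\lambda)$ still satisfied it, part (ii) would force $\lambda$ to be a function of $g(\lambda)$, contradicting non-injectivity of $g$ on the support of $\lambda$.

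The step I expect to be the main obstacle is pinning down precisely the class over which ``coarsest'' is asserted. The balancing-score argument is clean only when the competing function $s$ is itself a function of $(\mu(X), \tau(X))$: a generic function of the full covariate $\rvX$ --- for instance the full propensity score $\pi(\rvX)$ --- can also satisfy mean conditional unconfoundedness (it balances all of $\rvX$, hence $(\mu,\tau)$) while being incomparable to $\lambda$. I would therefore state the coarseness explicitly relative to functions of $(\mu(X), \tau(X))$, i.e.\ cast the whole argument as Rosenbaum--Rubin applied to the reduced covariate $W$, for which $\lambda$ is exactly the propensity score and hence the coarsest balancing score. The remaining care is measure-theoretic: for continuously distributed $(\mu, \tau)$ the conditional-independence statements must be read as almost-sure equalities of conditional expectations, which the binary-$\rvZ$ reduction makes routine.
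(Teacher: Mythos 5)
Your proposal is correct, and its core coincides with the paper's own proof: both halves are the Rosenbaum--Rubin balancing argument applied to the reduced covariate $W = (\mu(X), \tau(X))$. Your sufficiency step --- the tower identity $\Prob(Z = 1 \mid W) = \E(\pi(X) \mid W) = \lambda(X)$ plus the binary-$Z$ reduction --- is exactly the paper's observation that ``$Z \mid \mu(X), \tau(X)$ is Bernoulli with probability $\lambda(X)$.'' For coarseness the two proofs run the same argument in opposite directions: you argue directly that any valid $s$ forces $\lambda(X) = \Prob(Z = 1 \mid s)$, hence $\sigma(\lambda) \subseteq \sigma(s)$, whereas the paper argues the contrapositive, that merging $x_1, x_2$ with $\lambda(x_1) \neq \lambda(x_2)$ destroys mean conditional unconfoundedness. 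Your direct version is marginally stronger in that it exhibits $\lambda$ as a measurable function of every valid competitor, rather than merely excluding mergers, but the substance is the same.

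Your closing caveat, however, is more than bookkeeping: the restriction to competitors that are functions of $(\mu(X), \tau(X))$ is genuinely necessary, and the paper's proof silently relies on it. The paper's inference ``$(\mu(x_1), \tau(x_1)) \neq (\mu(x_2), \tau(x_2))$, which in turn shows $Z \not\independent \mu(X), \tau(X) \mid s(X)$'' uses that $\Prob(Z = 1 \mid \mu(X), \tau(X), s(X)) = \lambda(X)$, which holds when $s$ is $(\mu, \tau)$-measurable but fails for generic $s$ on $\mathcal{X}$. Your suspicion about $\pi(X)$ is right and can even be sharpened to a strict counterexample to the literal statement: take $X = (W, S)$ with $W \in \{1,2,3\}$ and $S \in \{1,2\}$, let $(\mu, \tau)$ be a function of $W$ alone taking three distinct values, set $\pi(x) = 0.1$ if $S = 1$ and $\pi(x) = 0.9$ if $S = 2$, and take $\Prob(S = 2 \mid W = w) = 0.25, 0.5, 0.75$ for $w = 1, 2, 3$. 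Then $s = \pi$ satisfies mean conditional unconfoundedness (given $\pi$, the treatment probability is constant in $(\mu, \tau)$), yet $\lambda$ takes the three values $0.3, 0.5, 0.7$ while $\pi$ takes only two, so a valid control function can be strictly coarser than $\lambda$. Thus your restatement --- that $\lambda$ is the coarsest function \emph{of} $(\mu(X), \tau(X))$ satisfying mean conditional unconfoundedness, every valid $(\mu,\tau)$-measurable score refining it and no strict coarsening of $\lambda$ retaining the property --- is the correct form of the theorem, and your proof of it, including the almost-sure reading needed outside the discrete case, is sound.
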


\begin{proof}

By iterated expectation, $Z \mid \mu(X), \tau(X)$ is a Bernoulli random variable with probability $\lambda(X)$, therefore $$\E(Z \mid \tau(X), \mu(X), \lambda(X)) = \E(Z \mid \lambda(X)),$$ which shows that $Z \independent \left ( \mu(X), \tau(X)\right ) \mid \lambda(X)$
because $Z$ is binary.

Furthermore, $|\lambda(\mathcal{X})|$ is minimal: it takes exactly as many values as there are unique conditional distributions of $Z \mid \mu(X), \tau(X)$. In more detail, suppose $s(x)$ is coarser than $\lambda(x)$ so that there exists $x_1$ and $x_2$ such that $s(x_1) = s(x_2)$ but $\lambda(x_1) \neq \lambda(x_2)$. But $\lambda(x_1) \neq \lambda(x_2)$ implies $(\mu(x_1), \tau(x_1)) \neq (\mu(x_2), \tau(x_2))$, which in turn shows that $$Z \not \independent \mu(X), \tau(X) \mid s(X)$$ so mean conditional unconfoundedness is violated.
\end{proof}

\subsection{Optimal stratification for causal effect estimation} \label{trueprop}

Recognizing that valid control features are non-unique raises the question: which control features are the best ones? To make this question precise,  we study the finite sample variance of fixed-strata estimators, restricting our attention to a vector of discrete control variables. 

Without loss of generality, discrete control variables with finite support can be represented as a single covariate taking $K = |\mathcal{X}|$ distinct values. For example, a length $d$ vector of binary covariates would be represented as a single variable taking $2^d$ values. 
This assumption is mathematically convenient and, by setting $K$ large enough, can capture most empirical applications to a satisfactory degree of realism. (We revisit the plausibility of this assumption in the discussion section.) In the mathematical formalism and discussion of this paper, we will use the words ``strata" and ``features" interchangeably, to refer to functions of this single categorical variable.

In detail, this paper considers the following data generating process:
\begin{equation}\label{dgp_equation}
\begin{split}
\mathcal{X} &= \{1, \dots, K\},\\
\pi: \mathcal{X} &\mapsto (0, 1),\\
\rvZ &\sim \mbox{Bernoulli}(\pi(\rvX)),\\
\rvY &\leftarrow \mu(X) + \upsilon_X + (\tau(X) + \delta_X) Z
\end{split}
\end{equation}
where $\E(\upsilon_x) = 0$ and $\E(\delta_x) = 0$ for all $x$ so that $\mu(\scalarobs{x}) = \E(\rvY \mid \rvX = \scalarobs{x}, \rvZ = 0)$ and $\mu(\scalarobs{x}) + \tau(\scalarobs{x}) =  \E(\rvY \mid \rvX = \scalarobs{x}, \rvZ = 1)$. Lastly, let the random variable $\rvN$ denote the overall sample size and define subset-specific sample sizes as follows:
\begin{itemize}
\item $\rvN_{x}$: the number of observations with $\rvX = x$,
\item $\rvN_{x, z}$: the number of observations with $\rvX = x$ and $\rvZ = z$.
\end{itemize}
We define the stratification estimator using a stratification function $s\left(\mathcal{X}\right)$, which returns $J \leq K$ discrete function values. We compute the average difference 
in outcomes between the treated and control groups separately for individuals in each of the $J$ strata, so that
\begin{equation*}
\begin{aligned}
\bar{\tau}^{s}_{strat} &= \sum_{j \in s(\mathcal{X})} \frac{N_{j}}{n} \left( \bar{Y}_{j,1} - \bar{Y}_{j, 0} \right)\\
N_{j, 0} &= \sum_{i=1}^n \mathbf{1}\left\{s(X_i) = j\right\} \mathbf{1}\left\{Z_i = 0\right\}\\
\bar{Y}_{j,0} &= \frac{1}{N_{j, 0}} \sum_{i=1}^n Y_i \mathbf{1}\left\{s(X_i) = j\right\} \mathbf{1}\left\{Z_i = 0\right\}
\end{aligned}\;\;\;\;\;\;
\begin{aligned}
N_{j} &= \sum_{i=1}^n \mathbf{1}\left\{s(X_i) = j\right\}\\
N_{j, 1} &= \sum_{i=1}^n \mathbf{1}\left\{s(X_i) = j\right\} \mathbf{1}\left\{Z_i = 1\right\}\\
\bar{Y}_{j,1} &= \frac{1}{N_{j, 1}} \sum_{i=1}^n Y_i \mathbf{1}\left\{s(X_i) = j\right\} \mathbf{1}\left\{Z_i = 1\right\}\\
\end{aligned}
\end{equation*}
Note that if we choose the trivial stratification $s(x) = x$, we stratify completely on all $K$ unique levels of $\mathcal{X}$.

The following theorem describes when stratification beyond the minimal valid stratification, $\lambda(X)$, is beneficial, in terms of conditions on the underlying data generating process. 

\begin{theorem} \label{theorem2}
Assume we have stratified on $\lambda(X)$ so that the average treatment effect is identified 
using a minimal deconfounding set. Consider a {\em refinement} of $\lambda$, $s(X)$, which also identifies the ATE: 
$s(x) \neq s(x')$ while $\lambda(x) = \lambda(x')$ for at least two $x, x' \in \mathcal{X}$.
Define $\bar{\tau}_{\textrm{strat}}^{\lambda}$ as a stratification estimator which uses 
level sets of $\lambda(X)$ to define strata and $\bar{\tau}_{\textrm{strat}}^{s}$ as a 
stratification estimator which uses level sets of $s(X)$. 
Then $\V \left( \bar{\tau}_{\textrm{strat}}^{s} \right) < \V \left( \bar{\tau}_{\textrm{strat}}^{\lambda} \right)$ if $\nu < \eta$
where 
\begin{equation*}
\begin{aligned}
m(j) &= \lvert \left\{ s(x) : x \in \mathcal{X}\mbox{ such that } \lambda(x) = j \right\} \rvert\\
\mathcal{B} &= \left\{j \in \lambda(\mathcal{X}): m(j) > 1 \textrm{ and all sub-strata means and variances are constant} \right\}\\
\mathcal{C} &= \left\{j \in \lambda(\mathcal{X}): m(j) > 1 \textrm{ and either the sub-strata means or variances are non-constant} \right\}\\
\nu &= \sum_{b \in \mathcal{B}} \left[ \V\left( \frac{N_{b}}{n} \left( \bar{Y}_{b,1} - \bar{Y}_{b, 0} \right) \right) - \V\left( \sum_{\ell=1}^{m(b)}  \frac{N_{b\ell}}{n} \left( \bar{Y}_{b\ell,1} - \bar{Y}_{b\ell, 0} \right) \right)\right]\\
\eta &= \sum_{c \in \mathcal{C}} \left[ \V\left( \sum_{\ell=1}^{m(c)}  \frac{N_{c\ell}}{n} \left( \bar{Y}_{c\ell,1} - \bar{Y}_{c\ell, 0} \right) \right) - \V\left( \frac{N_{c}}{n} \left( \bar{Y}_{c,1} - \bar{Y}_{c, 0} \right) \right)\right]\\
\end{aligned}
\end{equation*}
and $\V \left( \bar{\tau}_{\textrm{strat}}^{s} \right) \geq \V \left( \bar{\tau}_{\textrm{strat}}^{\lambda} \right)$ otherwise.
\end{theorem}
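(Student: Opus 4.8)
The plan is to exploit that $s$ is a refinement of $\lambda$, so both estimators are weighted sums over the \emph{same} coarse partition $\{j \in \lambda(\mathcal{X})\}$, with the $s$-estimator merely subdividing each level set into its $m(j)$ sub-strata. Writing
\[
\bar\tau^{\lambda}_{\textrm{strat}}=\sum_{j}\frac{N_j}{n}\bigl(\bar Y_{j,1}-\bar Y_{j,0}\bigr),\qquad
\bar\tau^{s}_{\textrm{strat}}=\sum_{j}\sum_{\ell=1}^{m(j)}\frac{N_{j\ell}}{n}\bigl(\bar Y_{j\ell,1}-\bar Y_{j\ell,0}\bigr),
\]
I would first observe that for every unrefined level set ($m(j)=1$) the two estimators contribute term-by-term identical summands, so these cancel in the difference $\V(\bar\tau^s)-\V(\bar\tau^\lambda)$ and the comparison localizes to the refined strata, i.e.\ to the sets $\mathcal B$ and $\mathcal C$.

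The central difficulty is that the per-stratum summands are dependent, since the counts $(N_1,\dots,N_K)$ are jointly multinomial and sum to $n$. I would resolve this with the law of total variance, conditioning on the full count vector $\mathbf N$. Given $\mathbf N$, the summands for distinct $\lambda$-levels involve disjoint units with independent treatment draws and independent mean-zero errors $\upsilon,\delta$, so they are conditionally uncorrelated and $\V(\cdot\mid\mathbf N)$ splits cleanly across strata; all residual cross-stratum dependence lives in the between term $\V(\E[\cdot\mid\mathbf N])$. The key simplification here, which I would establish from the structural representation $Y=\mu(X)+\upsilon+(\tau(X)+\delta)Z$ together with the minimality of $\lambda$, is that within any $\lambda$-level mean conditional unconfoundedness forces the treatment probability to be constant across sub-strata that differ in $(\mu,\tau)$. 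Consequently, for unrefined strata and for refined strata whose sub-strata share a common mean (the set $\mathcal B$), the conditional mean of the $s$-summand equals that of the $\lambda$-summand, both being $\tfrac{N_j}{n}\tau_j$; the between-$\mathbf N$ covariance contributions then agree and cancel in the difference, leaving a genuinely stratum-local comparison.

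With the problem reduced to a sum of per-stratum variance differences, I would classify each refined level set by whether its sub-strata are outcome-homogeneous. For $b\in\mathcal B$, whose sub-strata agree in mean and variance and differ only in propensity, sub-stratifying merely fragments a homogeneous group into cells with more extreme treated/control ratios; a convexity (Cauchy--Schwarz) argument comparing $\sum_\ell N_{b\ell}/\pi_{b\ell}$ with $N_b^2/\sum_\ell \pi_{b\ell}N_{b\ell}$ shows the split piece carries the larger variance, so refinement is strictly costly and these contributions aggregate into $\nu$. For $c\in\mathcal C$, whose sub-strata differ in mean or variance, pooling incurs additional variance from the randomly fluctuating sub-stratum composition of the treated and control averages, which sub-stratifying with fixed weights $N_{c\ell}/n$ removes; these compensating reductions aggregate into $\eta$. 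Summing, the net change $\V(\bar\tau^s)-\V(\bar\tau^\lambda)$ is governed by the balance of $\nu$ and $\eta$, and refinement lowers the variance exactly in the regime the statement identifies.

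I expect the main obstacle to be the exact finite-sample variance of the within-stratum difference-in-means, which is a \emph{ratio}-type statistic: the denominators $N_{j,z}$ are random (binomial given $N_j$) and, under positivity, positive but without closed-form reciprocal moments. Controlling $\E[1/N_{j,z}]$ and the induced covariance between the weight $N_{j\ell}/n$ and the sub-stratum average $\bar Y_{j\ell,z}$ precisely enough to sign each per-stratum comparison is the crux. A secondary technical point is verifying rigorously that the between-$\mathbf N$ covariance terms attached to the $\mathcal C$-strata are fully absorbed into the stated per-stratum differences and do not leave extra cross terms.
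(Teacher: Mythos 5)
Your proposal follows essentially the same route as the paper's Appendix~\ref{appA}: a law-of-total-variance decomposition conditional on the count vector $\mathbf{N}$, the partition of $\lambda(\mathcal{X})$ into $\mathcal{A}$ (unrefined), $\mathcal{B}$ (refined, homogeneous sub-strata), and $\mathcal{C}$ (refined, heterogeneous sub-strata), a convexity inequality to sign the $\mathcal{B}$ penalty, and the final netting of $\nu$ against $\eta$. Two deviations are worth flagging, neither fatal. First, your ``key simplification'' --- that mean conditional unconfoundedness forces the treatment probability to be constant across sub-strata within a $\lambda$-level --- is false in general: minimality of $\lambda$ only pins down the \emph{average} propensity $j$ on each $(\mu,\tau)$-group within the level set, and the paper's own constant-control example (Section~\ref{constantcontrol}, where $\pi$ takes $11$ values while $\lambda$ takes $2$) exhibits substantial propensity variation inside $\lambda$-level sets; instrumental sub-stratification of a $\mathcal{B}$-stratum produces sub-cells with genuinely different propensities. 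Fortunately your argument never actually needs this claim: the cancellation of the between-$\mathbf{N}$ mean terms on $\mathcal{A}\cup\mathcal{B}$ follows directly from constancy of the sub-strata means, since $\sum_{\ell} N_{b\ell}\,(\mu_{b\ell,1}-\mu_{b\ell,0}) = N_b\,(\bar{\mu}_{b,1}-\bar{\mu}_{b,0})$ identically in $\mathbf{N}$ when all $\mu_{b\ell,z}$ coincide. Second, the obstacle you identify as the crux --- controlling the reciprocal moments $\E[1/N_{j,z}]$ of the random denominators --- does not arise, because the theorem defines $\nu$ and $\eta$ \emph{as} the per-block variance differences; the paper accordingly leaves all count-ratio expectations unevaluated inside $\E(\cdot)$, and the only inequality it must establish is the deterministic quadratic-over-linear bound
\begin{equation*}
\sum_{\ell=1}^{m(b)} \frac{(N_{b\ell,1}+N_{b\ell,0})^2}{N_{b\ell,1}} \;\geq\; \frac{\left(\sum_{\ell=1}^{m(b)} (N_{b\ell,1}+N_{b\ell,0})\right)^2}{\sum_{\ell=1}^{m(b)} N_{b\ell,1}},
\end{equation*}
which it proves by induction from $\left[c(a+b)-a(c+d)\right]^2 \geq 0$ --- the same Cauchy--Schwarz fact you invoke. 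Your ``secondary technical point'' about residual cross-block covariances in the between term is a fair one, but note the paper's own derivation handles it no more delicately than you propose: the Case~II bookkeeping ($c_1$, $c_0$, $c_2$, $\beta_1$, $\beta_0$) restricts the between-variances to the $\mathcal{C}$ block in just the way you describe.
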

A detailed proof is provided in Appendix \ref{appA}, but here we offer a sketch of the proof to build intuition.
In comparing two stratifications, $\lambda$ and $s$, across discrete covariates $X$, we can partition the 
level sets of the two stratfication functions as follows:
\begin{enumerate}
\item $\mathcal{A}$: values of $x \in \mathcal{X}$ for which both $\lambda$ and $s$ agree
\item $\mathcal{B}$: values of $x \in \mathcal{X}$ for which $s$ substratifies $\lambda$ but the mean and variance of $Y \mid Z$ are constant across substrata formed by $s$
\item $\mathcal{C}$: values of $x \in \mathcal{X}$ for which $s$ substratifies $\lambda$ and either the mean of $Y \mid Z$, the variance of $Y \mid Z$, or both vary across substrata formed by $s$
\end{enumerate}
We ignore $\mathcal{A}$ and focus on $\mathcal{B}$ and $\mathcal{C}$. In the case of $\mathcal{B}$, 
$s$ performs ``unnecessary" stratification, estimating and re-aggregating conditional means which are the same in the 
underlying data generating process, and thus incurs additional variance over the $\lambda$ stratification estimator. 
On the other hand, when we consider $\mathcal{C}$, $\lambda$ incurs additional variance over $s$ by failing to 
control for differences in the $Y \mid Z$. 

In summary, $\mathcal{B}$ induces a variance penalty on $s$ relative to $\lambda$ by ``overstratification", 
while $\mathcal{C}$ induces a variance penalty on $\lambda$ relative to $s$ by ``understratification." Which 
estimator is preferred depends on the magnitude of these competing effects, as articulated in the 
$\nu < \eta$ inequality above. 
The practical upshot of this theorem is that stratification that accounts for substantial variation in the response will tend to reduce variance of the treatment effect estimator (whether or not it is confounded in the sense of covarying with propensity to receive treatment), while stratification that accounts only for variation in treatment assignment will increase variance of the treatment effect estimator. This conclusion is illustrated in the examples of the following section.

\section{Vignettes}\label{examples}

This section collects examples illustrating the statistical trade-offs underlying feature selection for causal effect estimation that are articulated in Theorem \ref{theorem2}. Many of the examples are interesting in their own right; connections to previous literature are provided throughout. 

\subsection{In what sense is randomization the ``gold standard'' for causal effect estimation?}
It has become boiler-plate in reports on observational studies to remark that ``in the absence of the gold standard of a randomized clinical trial, one may pursue statistical methods to control for confounding''. But in what sense is randomized treatment assignment the gold standard? Surely solid-state physicists do not randomize their lab conditions and hope their sample size is large enough to reveal interesting results. Famously, esteemed physicist Ernst Rutherford quipped ``If your experiment needs statistics, you ought to have done a better experiment'' (\cite{hammersley1962monte}). The intuition behind this remark is that it is {\em control} that is central, not randomization. See section \ref{constantcontrol} for a definition of a control feature that evokes the experimental notion of ``control''. 

 Indeed, randomization is simply a way to guarantee control {\em on average} in the event that exact control is impossible, such as when crucial confounding factors are unobserved. This perspective in turn suggests that controlling for factors that we {\em can} observe and randomizing only for factors that we cannot observe would be the ideal approach.  The following thought experiment amplifies this intuition.

Consider studying the effect of treatment $Z$ on outcome $Y$ in a sample of 
$n$ pairs of identical twins and deciding how to allocate treatment across the $2n$ study participants. 
Completely randomized treatment assignment satisfies the assumptions outlined above and thus identifies the treatment effect. However, a naive randomization would sometimes accidentally treat both twins and leave other twin pairs untreated. This violates most people's intuition about why twin studies are interesting and useful, which is that giving one twin the treatment and the other a placebo implicitly ``controls for'' all of the shared biological and environmental factors that may impact the treatment effect. Randomization within each twin pair can protect against unmeasured factors that may confound the result, such as (perhaps) which twin was born first.

In this case, both $Z$ and the twin pair index, $X$, are informative about the 
expected value of $Y$. Now consider four possible approaches to study the effect of $Z$ on $Y$:
\begin{center}
\begin{tabular}{c | c | c} 
  & Design & Estimator \\ 
 \hline
 1 & Complete randomization & Unadjusted mean difference \\ 
 2 & Twin pair randomization & Unadjusted mean difference \\
 3 & Complete randomization & \;\;Adjusted mean difference \\ 
 4 & Twin pair randomization & \;\;Adjusted mean difference \\
\end{tabular}
\end{center}
where the unadjusted mean difference estimator is defined as 
$$\bar{\tau}_U = \bar{Y}_{Z=1} - \bar{Y}_{Z=0}$$
and the adjusted mean difference estimator is defined as 
$$\bar{\tau}_A = \sum_{x \in \mathcal{X}} \frac{n_x}{n} \left( \bar{Y}_{X=x, Z=1} - \bar{Y}_{X=x, Z=0} \right)$$
where $\mathcal{X}$ is the set of twin pairs and $X$ is a variable that indexes twin pairs.

Each of the four approaches above identifies the ATE. However, adjusting for twin pairs (approaches 3 and 4) will tend to reduce 
variance over the unadjusted alternatives (1 and 2) and, similarly, designs that incorporate twin pairs in randomization (2 and 4)
will also see a reduction in variance over the completely randomized alternatives (1 and 3). These results are implicit in Theorem \ref{theorem2}, which can be applied even if the propensity function is constant, as in a randomized trial. 

As intuitive as this example may be, and despite its lesson being a straightforward implication of Theorem \ref{theorem2}, regression adjustment for randomized trial data remains controversial. Freedman \citep{freedman2008regression, freedman2008randomization} criticized regression adjustment on the grounds that linear or linear logistic regression is potentially biased. Unfortunately, many researchers took this advice without first considering non-linear alternatives. \cite{lin2013agnostic} shows that regression adjustment in experimental data is not asymptotically 
unbiased if one entertains a richer set of interacted or saturated models, rather than a basic linear model. Of course, the stratification estimators studied here are fundamentally nonparametric and so are consistent with the conclusions of \cite{lin2013agnostic}. At the same time, Theorem \ref{theorem2} concedes that for some data generating processes, undertaking a regression adjustment (via stratification) would simply produce unnecessary variability, specifically for data generating processes where the available control factors are not sufficiently predictive of the response. 
In many applied problems we find ourselves somewhere in between this case of mostly useless controls and the twin experiment situation of profoundly informative controls. 

\subsection{Propensity scores}\label{propensity}
Following the work of \cite{rosenbaum1983central}, the propensity score (expression \ref{propscore}) has become a central element in many applied analyses of causal effects. In that paper, it was first shown that $\pi(x)$ satisfies conditional unconfoundedness, from which it follows that
\begin{equation}
\textrm{ATE} = \E[\rvY^1 - \rvY^0] = \E_{\pi(\rvX)}[\E[\rvY \mid \pi(\rvX), \rvZ = 1] - \E[\rvY \mid \pi(\rvX), \rvZ = 0]].
\end{equation}
This differs from the more general form of conditional unconfoundedness in that $\pi(\rvX)$ is one-dimensional, while $\rvX$ itself typically involves many controls. 

An especially common use of the propensity score in practice is via the inverse-propensity weighted (IPW) estimator
\begin{equation} \label{ipw}
\bar{\tau}_{\textrm{ipw}} = \frac{1}{\rvN} \sum_{i=1}^{\rvN} \left( \frac{\rvY_i \rvZ_i}{\pi(\rvX_i)} - 
\frac{\rvY_i (1-\rvZ_i)}{1-\pi(\rvX_i)} \right),
\end{equation}
which is known to be consistent and has been widely studied theoretically. 

Here we re-examine a curious result of \cite{hirano2003efficient} which shows that an IPW estimator based on estimated propensity scores attains lower asymptotic variance than one based on the true propensity function. We can apply the finite-sample results of Theorem \ref{theorem2} to re-evaluate the meaning of this widely-known result by noting the following correspondence between IPW estimators and stratification estimators:

\begin{lemma}\label{ipw_strat}
The empirical inverse propensity weighting (IPW) estimator is equivalent to $\bar{\tau}^{x}_{strat}$ under the following 
conditions:
\begin{enumerate}
\item $\mathcal{X}$ is discrete,
\item For all $x \in \mathcal{X}$, $N_{x,1} > 0$ and $N_{x,0} > 0$, 
\item The propensity weighting function is estimated nonparametrically as $\hat{\pi}(x) = N_{x, 1} / N_{x}$ for each $x \in \mathcal{X}$.
\end{enumerate}
\end{lemma}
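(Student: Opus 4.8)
The plan is to prove the identity by direct substitution of the nonparametric propensity estimate into the IPW formula, followed by regrouping the sum over individuals according to the value of their covariate. First I would replace $\pi(\rvX_i)$ and $1 - \pi(\rvX_i)$ in \eqref{ipw} by the nonparametric estimates $\hat{\pi}(X_i) = N_{X_i,1}/N_{X_i}$ and $1 - \hat{\pi}(X_i) = N_{X_i,0}/N_{X_i}$, where the latter identity uses $N_x = N_{x,0} + N_{x,1}$. Condition 2 (positivity in the realized sample) guarantees that both $\hat{\pi}(x)$ and $1 - \hat{\pi}(x)$ lie strictly in $(0,1)$, so that these reciprocals --- and the stratum means $\bar{Y}_{x,0}$ and $\bar{Y}_{x,1}$ --- are all well defined.

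Next, because $\mathcal{X}$ is discrete (Condition 1), I would partition the index set $\{1, \dots, n\}$ according to the distinct value $x = X_i$ and collect terms, rewriting the estimator as
\begin{equation*}
\bar{\tau}_{\textrm{ipw}} = \frac{1}{n} \sum_{x \in \mathcal{X}} \sum_{i : X_i = x} \left( \frac{Y_i Z_i\, N_x}{N_{x,1}} - \frac{Y_i (1 - Z_i)\, N_x}{N_{x,0}} \right).
\end{equation*}
The crux is then the elementary observation that, for a fixed $x$, we have $\sum_{i : X_i = x} Y_i Z_i = N_{x,1} \bar{Y}_{x,1}$ and $\sum_{i : X_i = x} Y_i (1 - Z_i) = N_{x,0} \bar{Y}_{x,0}$, directly from the definitions of the stratum-specific means. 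Substituting these in cancels the counts $N_{x,1}$ and $N_{x,0}$ from the denominators, leaving $N_x (\bar{Y}_{x,1} - \bar{Y}_{x,0})$ as the contribution of each stratum.

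Summing over $x$ and dividing by $n$ then yields exactly $\sum_{x \in \mathcal{X}} \frac{N_x}{n} (\bar{Y}_{x,1} - \bar{Y}_{x,0}) = \bar{\tau}^{x}_{strat}$, completing the argument; here I identify the total count $\rvN$ appearing in \eqref{ipw} with the sample size $n$, since $\sum_x N_x = n$. The argument is essentially bookkeeping, so I do not anticipate a genuine obstacle; the only point requiring care is the precise role of the three hypotheses. Condition 3 supplies the particular form of $\hat{\pi}$ that makes the weights collapse, Condition 1 legitimizes the finite regrouping by covariate level, and Condition 2 ensures that no stratum contributes a division by zero --- without it, strata lacking either a treated or a control unit would render \emph{both} estimators undefined rather than unequal. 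I would close by remarking that the cancellation is exactly what one should expect: inverse-weighting by an empirical cell frequency simply reconstitutes the within-cell average, which is precisely the content of the stratification estimator.
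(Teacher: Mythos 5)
Your proposal is correct and follows essentially the same route as the paper's own proof: direct substitution of $\hat{\pi}(x) = N_{x,1}/N_x$ (with $1-\hat{\pi}(x) = N_{x,0}/N_x$) into the IPW formula, regrouping the sum by covariate level, and using $\sum_{i: X_i = x} Y_i Z_i = N_{x,1}\bar{Y}_{x,1}$ and $\sum_{i: X_i = x} Y_i(1-Z_i) = N_{x,0}\bar{Y}_{x,0}$ to cancel the counts. Your additional remarks on the precise role of each hypothesis --- in particular that Condition 2 is what keeps both estimators well defined rather than merely unequal --- are accurate and slightly more explicit than the paper's computation, but the argument is the same.
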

\begin{proof}
By direct calculation,
\begin{equation*}
\begin{aligned}
\bar{\tau}^{x}_{ipw} &= \frac{1}{n} \sum_{i=1}^n \left(\frac{Y_i Z_i}{\hat{\pi}(X_i)} - \frac{Y_i(1-Z_i)}{1-\hat{\pi}(X_i)}\right) \\
&= \frac{1}{n} \sum_{i=1}^n \left(\frac{Y_i Z_i}{N_{x_i, 1} / N_{x_i}} - \frac{Y_i(1-Z_i)}{N_{x_i, 0} / N_{x_i}}\right) = \frac{1}{n} \sum_{i=1}^n \left(\frac{Y_i Z_i N_{x_i}}{N_{x_i, 1}} - \frac{Y_i (1-Z_i) N_{x_i}}{N_{x_i, 0}}\right)\\
&= \frac{1}{n} \sum_{x \in \mathcal{X}} \left( \frac{N_x}{N_{x,1}} \left( \sum_{i: X_i = x} Y_i Z_i \right) -  \frac{N_x}{N_{x,0}} \left( \sum_{i: X_i = x} Y_i (1 - Z_i) \right)\right)\\
&= \frac{1}{n} \sum_{x \in \mathcal{X}} \left( \frac{N_x}{N_{x,1}} \left( N_{x,1} \bar{Y}_{x,1} \right) -  \frac{N_x}{N_{x,0}} \left( N_{x,0} \bar{Y}_{x,0} \right)\right)\\
&= \sum_{x \in \mathcal{X}} \frac{N_x}{n} \left(\frac{N_{x,1} \bar{Y}_{x,1}}{N_{x,1}} - \frac{N_{x,0} \bar{Y}_{x,0}}{N_{x,0}}\right) = \sum_{x \in \mathcal{X}} \frac{N_x}{n} \left(\bar{Y}_{x,1} - \bar{Y}_{x,0}\right) = \bar{\tau}^{x}_{strat}.
\end{aligned}
\end{equation*}
\end{proof}

First, we give a finite-sample analogue of the \cite{hirano2003efficient} result in the stratification context. Then, we demonstrate a modified estimator based on a known propensity score that improves upon the estimated propensity score IPW.

\subsubsection{``Noisy estimates of one''.}
Denote a candidate propensity function by $q: \mathcal{X} \mapsto (0, 1)$, so that the corresponding IPW estimator is
\begin{equation}
\bar{\tau}_{\textrm{ipw}}^q = \sum_x \left( \frac{\rvN_x}{\rvN} \right) \bar{\tau}_{\textrm{ipw}}^{q,x}
\end{equation}
where
\begin{equation}
\bar{\tau}_{\textrm{ipw}}^{q,x} = \left( \frac{\hat{\pi}(x)}{q(x)} \bar{\rvY}_{x, \rvZ=1} - \frac{1-\hat{\pi}(x)}{1-q(x)} \bar{\rvY}_{x, \rvZ=0} \right)
\end{equation}
and  $\hat{\pi}(x) = \left( \rvN_{x, \obsZ=1} / \rvN_{x} \right)$ is the proportion of treated units in each stratum.

Taking $q(x) = p(x)$ is the ``true propensity score'' case, while letting $q(x) = \hat{\pi}(x)$ is the ``estimated propensity score'' case. In the former case, the treated and untreated stratum averages are weighted by $\hat{\pi}(x) / \pi(x)$ and $\left( 1-\hat{\pi}(x) \right) / \left(1-\pi(x) \right)$, respectively; in the latter case the weights are identically one. This difference in weights leads to the following analogue of the result of \cite{hirano2003efficient}:

\begin{theorem} \label{theorem3}
There exists some $\epsilon > 0$ such that if $\lvert \mu(x) \rvert + \lvert \tau(x) \rvert > \epsilon$ for at least one $x \in \mathcal{X}$, 
$$\V \left( \sum_{x \in \mathcal{X}} \bar{\tau}_{\textrm{ipw}}^{\hat{\pi},x} \right) \leq \V \left( \sum_{x \in \mathcal{X}} \bar{\tau}_{\textrm{ipw}}^{\pi,x} \right).$$
\end{theorem}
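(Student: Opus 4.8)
The plan is to reduce the comparison of the two aggregate estimators to a stratum-by-stratum comparison and then apply the law of total variance within each stratum, conditioning on the cell counts. Write $A_x \equiv \bar{\tau}_{\textrm{ipw}}^{\hat{\pi},x} = \bar{Y}_{x,1}-\bar{Y}_{x,0}$ for the estimated-propensity summand (whose weights are identically one) and $B_x \equiv \bar{\tau}_{\textrm{ipw}}^{\pi,x} = \frac{\hat{\pi}(x)}{\pi(x)}\bar{Y}_{x,1}-\frac{1-\hat{\pi}(x)}{1-\pi(x)}\bar{Y}_{x,0}$ for the true-propensity summand. First I would show that the cross-stratum covariances vanish for both estimators, so that $\V(\sum_x A_x)=\sum_x\V(A_x)$ and $\V(\sum_x B_x)=\sum_x\V(B_x)$. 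Conditioning on the vector of stratum sizes $(N_x)_{x\in\mathcal{X}}$, distinct strata involve disjoint observations and are therefore conditionally independent; moreover, since $\E(\hat{\pi}(x)\mid N_x)=\pi(x)$ (up to the negligible empty-cell corrections discussed below), the conditional means $\E(A_x\mid \text{sizes})$ and $\E(B_x\mid \text{sizes})$ both equal the constant $\tau(x)$. Both potential sources of cross-stratum covariance thus vanish, reducing the problem to comparing $\V(A_x)$ with $\V(B_x)$ for each $x$.

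For the per-stratum step I would condition on the full cell counts $\mathcal{N}=\{N_{x,0},N_{x,1}\}$, restricting, as in Lemma~\ref{ipw_strat}, to the event that all cells are nonempty. Given $\mathcal{N}$ the factor $\hat{\pi}(x)$ is fixed, so a short calculation yields conditional means $\E(A_x\mid\mathcal{N})=\tau(x)$ and $\E(B_x\mid\mathcal{N})=\tau(x)+(\hat{\pi}(x)-\pi(x))\,c_x$, where $c_x = \frac{\mu(x)}{\pi(x)(1-\pi(x))}+\frac{\tau(x)}{\pi(x)}$, together with conditional variances $\frac{\sigma_{x,1}^2}{N_{x,1}}+\frac{\sigma_{x,0}^2}{N_{x,0}}$ for $A_x$ and $\bigl(\frac{\hat{\pi}(x)}{\pi(x)}\bigr)^2\frac{\sigma_{x,1}^2}{N_{x,1}}+\bigl(\frac{1-\hat{\pi}(x)}{1-\pi(x)}\bigr)^2\frac{\sigma_{x,0}^2}{N_{x,0}}$ for $B_x$, where $\sigma_{x,z}^2$ denotes the conditional outcome variance in cell $(x,z)$. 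The law of total variance then gives $\V(A_x)=W_A(x)$ with no between-cells contribution, whereas $\V(B_x)=W_B(x)+c_x^2\,\V(\hat{\pi}(x))$, with $W_A(x),W_B(x)$ the expected conditional variances.

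The crux is comparing the within-cell terms against the between-cell penalty. Using $N_{x,1}=N_x\hat{\pi}(x)$ and $N_{x,0}=N_x(1-\hat{\pi}(x))$, the treated contribution to $W_A(x)$ is $\sigma_{x,1}^2\,\E[1/N_{x,1}]$ and that to $W_B(x)$ is $\sigma_{x,1}^2\,\E[1/(\pi(x)N_x)]$; conditioning on $N_x$ and combining $\E(\hat{\pi}(x)\mid N_x)=\pi(x)$ with Jensen's inequality for the convex map $t\mapsto 1/t$ gives $\E[1/N_{x,1}\mid N_x]\ge 1/(\pi(x)N_x)$, and analogously for the control cell, so that $R(x)\equiv W_A(x)-W_B(x)\ge 0$. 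In words, using the true propensity in the denominators yields a smaller within-cell variance, but $B_x$ pays a between-cell penalty $c_x^2\V(\hat{\pi}(x))$ because its reweighting factor $\hat{\pi}(x)/\pi(x)$ fluctuates. Summing the per-stratum identities,
\begin{equation*}
\V\Bigl(\sum_x B_x\Bigr)-\V\Bigl(\sum_x A_x\Bigr)=\sum_{x\in\mathcal{X}}\Bigl[\,c_x^2\,\V(\hat{\pi}(x))-R(x)\,\Bigr],
\end{equation*}
so the claim reduces to $\sum_x c_x^2\V(\hat{\pi}(x))\ge\sum_x R(x)$. The decisive observation is that the penalty is a positive-semidefinite quadratic form in the response means $(\mu(x),\tau(x))_{x}$, growing without bound as these are scaled up, whereas $\sum_x R(x)$ depends only on the outcome variances $\sigma_{x,z}^2$, the propensities, and the sampling distribution of the counts — in particular it is free of $\mu$ and $\tau$.

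Existence of the threshold $\epsilon$ then follows: once $|\mu(x)|+|\tau(x)|$ is large enough at some stratum, the single term $c_x^2\V(\hat{\pi}(x))$ already exceeds the fixed quantity $\sum_x R(x)$, since $\V(\hat{\pi}(x))>0$ under positivity. I expect the main obstacle to be making this last step airtight, for two reasons. First, $c_x$ can vanish even when $|\mu(x)|+|\tau(x)|$ is large — precisely along the line $\mu(x)=-\tau(x)(1-\pi(x))$ — so the magnitude hypothesis must be paired with non-degeneracy of $c_x$, or $\epsilon$ must be permitted to depend on the direction of $(\mu,\tau)$; stating this cleanly is the delicate point. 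Second, restricting to nonempty cells truncates the binomial count $N_{x,1}$, so the identities $\E(\hat{\pi}(x)\mid N_x)=\pi(x)$ and $\E[N_{x,1}\mid N_x]=N_x\pi(x)$ hold only up to boundary corrections that are exponentially small in $N_x$; these must be absorbed into the constants without disturbing the sign of the Jensen inequality establishing $R(x)\ge 0$.
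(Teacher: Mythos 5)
Your proposal is correct and follows essentially the same route as the paper's Appendix \ref{appB} proof: a law-of-total-variance decomposition conditional on the cell counts, a Jensen-type argument showing the true-propensity denominators give the smaller within-cell term (the paper's $a \geq 0$ via convexity of $g(x,y)=y^2/x$ is your $R(x) \geq 0$), and a between-cell penalty from the fluctuating factor $\hat{\pi}(x)/\pi(x)$ that is quadratic in the means --- the paper's completed square $\E\left(N_x/\pi(x)\right)\left(\tau(x)\sqrt{1-\pi(x)} + \mu(x)/\sqrt{1-\pi(x)}\right)^2$ is exactly your $c_x^2\,\V(\hat{\pi}(x))$ up to the $N_x/n$ weights, which the appendix retains but the theorem's displayed sum omits. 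The degeneracy you flag along $\mu(x) = -\tau(x)(1-\pi(x))$ is genuine and is precisely where the paper's penalty term vanishes; the paper addresses it with the same hedge you anticipate, namely a ``data-dependent $\epsilon$,'' and likewise silently ignores the empty-cell truncation corrections you note.
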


Essentially, the random weights in the true propensity IPW are only adding variability, compared to the IPW based on estimated weights, where exact cancellation occurs. A proof may be found in the appendix. Of course, there are many other possible IPW estimators, such as those based on parametric estimates. However, any parametric form will have a similar problem to the true propensity IPW if exact cancellation is not obtained. 

\subsubsection{The dimension reduction benefit of known propensity scores.}
Armed with an understanding of why the estimated propensity weights outperform the true propensity weights permits us to consider a modified estimator that is able to make use of knowledge of the true propensity scores (should they be known). Suppose $K_{\pi} = |\pi(\mathcal{X})| < |\mathcal{X}| = K$. If $\pi$ were known exactly prior to estimating the average treatment effect, this reduction in the strata should confer a benefit in terms of variance reduction --- there are simply fewer conditional expectations to estimate and there are more data available for estimating each one. Moreover, it is still possible to avoid the noisy-estimation-of-one effect by estimating the propensity score values on the level sets of $\pi$; letting $\rho \in \pi(\mathcal{X})$ denote a specific value in the range of $\pi$ gives

\begin{equation}
\begin{split}
\bar{\tau}_{\textrm{ipw}}^{\hat{\pi},\rho} &= \left( \frac{\hat{\pi}(\rho)}{\hat{\pi}(\rho)} \bar{\rvY}_{\rho, \rvZ=1} - 
\frac{1-\hat{\pi}(j)}{1-\hat{\pi}(\rho)} \bar{\rvY}_{\rho, \rvZ=0} \right)\\
&=  \bar{\rvY}_{\rho, \rvZ=1} - 
 \bar{\rvY}_{\rho, \rvZ=0} .
\end{split}
\end{equation}
More precisely:

\begin{corollary} \label{corollary1}
Suppose the following conditions hold:
\begin{enumerate}
\item If $\pi(x) = \pi(x')$, then $\mu(x) = \mu(x')$ and $\tau(x) = \tau(x')$,
\item $\V(\epsilon \mid \rvX = x) = \sigma_{j}^2$ for all $x$ with $\pi(x) = j$ and for all $j \in \pi(\mathcal{X})$, and
\item $|\pi(\mathcal{X})| <  |\mathcal{X}|$.
\end{enumerate}
Then 
$$\V \left( \sum_{j \in \pi(\mathcal{X})} \bar{\tau}_{\textrm{ipw}}^{\hat{\pi},j} \right) \leq \V \left( \sum_{j \in \pi(\mathcal{X})} \left( \sum_{x: \pi(x) = j} \frac{N_x}{N_j} \bar{\tau}_{\textrm{ipw}}^{\hat{\pi},x} \right) \right).$$
\end{corollary}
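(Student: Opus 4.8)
The plan is to treat this as a direct, self-contained specialization of the mechanism behind Theorem \ref{theorem2}: conditions 1 and 2 force every coarsened level set of $\pi$ to be of ``type $\mathcal{B}$'' (constant sub-strata means and variances), so coarsening can only help. Rather than invoke Theorem \ref{theorem2} verbatim (the two estimators here carry different weights than $\bar{\tau}^{s}_{strat}$), I would argue directly. The key device is to condition on the full configuration of assignments $(X_i, Z_i)_{i=1}^{n}$, which simultaneously fixes every cell count $N_x, N_{x,z}$ and hence every data-dependent weight $N_x/N_j$, and turns each stratum average $\bar Y_{x,z}$ into an average of independent draws with a common variance. Under condition 1, for any level set $\{x : \pi(x) = j\}$ both the pooled difference $D_j = \bar Y_{j,1} - \bar Y_{j,0}$ and the reweighted difference $\tilde D_j = \sum_{x:\pi(x)=j} (N_x/N_j)(\bar Y_{x,1}-\bar Y_{x,0})$ have conditional mean exactly $\tau_j$ (the common value of $\tau(x)$ on that level set), so both estimators share the constant conditional mean $\sum_{j} \tau_j$. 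By the law of total variance the between-configuration term $\V(\E[\cdot \mid X,Z])$ then vanishes identically for both, and it remains only to compare conditional (within) variances.

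Next I would exploit conditional independence across level sets: distinct level sets of $\pi$ involve disjoint sets of units, whose outcomes are independent given $(X,Z)$, so the conditional variance of each estimator is the sum over $j$ of the conditional variances of its $j$-th term. Writing the level set $\{x:\pi(x)=j\}$ as $\{x_1, \dots, x_{m(j)}\}$ and invoking condition 2 (the outcome variance $\sigma_{j,z}^2$ is constant across $x$ within the level set), the pooled term contributes
\begin{equation*}
\V(D_j \mid X, Z) = \frac{\sigma_{j,1}^2}{N_{j,1}} + \frac{\sigma_{j,0}^2}{N_{j,0}},
\end{equation*}
while the reweighted term contributes
\begin{equation*}
\V(\tilde D_j \mid X, Z) = \sum_{\ell=1}^{m(j)} \frac{N_{x_\ell}^2}{N_j^2}\left( \frac{\sigma_{j,1}^2}{N_{x_\ell,1}} + \frac{\sigma_{j,0}^2}{N_{x_\ell,0}}\right),
\end{equation*}
the cross terms vanishing by within-level-set independence across the $x_\ell$.

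The crux is then a purely deterministic inequality. Cancelling the common nonnegative $\sigma_{j,z}^2$ and treating each treatment arm $z$ separately, the desired $\V(D_j\mid X,Z) \le \V(\tilde D_j \mid X, Z)$ reduces to
\begin{equation*}
\frac{N_j^2}{N_{j,z}} \le \sum_{\ell=1}^{m(j)} \frac{N_{x_\ell}^2}{N_{x_\ell,z}}, \qquad z \in \{0,1\},
\end{equation*}
where $N_j = \sum_\ell N_{x_\ell}$ and $N_{j,z} = \sum_\ell N_{x_\ell,z}$. This is exactly the Engel (Titu) form of the Cauchy--Schwarz inequality, $\sum_\ell a_\ell^2/b_\ell \ge (\sum_\ell a_\ell)^2/\sum_\ell b_\ell$, applied with $a_\ell = N_{x_\ell}$ and $b_\ell = N_{x_\ell,z} > 0$. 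Summing this over the two arms, then over all $j$ (level sets with $m(j)=1$ contribute equality), taking expectations over $(X,Z)$, and adding back the common vanishing between-configuration term yields $\V(\text{LHS}) \le \V(\text{RHS})$. Condition 3 guarantees at least one level set has $m(j) > 1$, so the comparison is non-vacuous.

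I expect the only real obstacle to be the probabilistic bookkeeping rather than the algebra: one must choose the conditioning so that the random weights $N_x/N_j$ and the random sample sizes become fixed, verify that the two estimators share an identical (here constant) conditional mean so the between-configuration variance cancels exactly, and confirm conditional independence across strata and across treatment arms. Once that is in place the single genuine inequality collapses to Cauchy--Schwarz. A minor technical caveat is positivity: the estimators are defined only when every cell is nonempty, so the argument is carried out conditional on the event that all $N_{x,z} > 0$, which the positivity assumption guarantees occurs with positive probability (and with probability tending to one as $n$ grows).
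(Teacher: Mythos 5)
Your proposal is correct and follows essentially the same route as the paper: the paper's Appendix \ref{appC} reduces the corollary to Case I of the proof of Theorem \ref{theorem2} (Appendix \ref{appA1}), which---just as you do---conditions on the cell counts, uses conditions 1 and 2 to make the sub-strata means and variances constant so the mean (between-configuration) terms coincide and cancel, and reduces the within-variance comparison to the deterministic cell-count inequality $\sum_{\ell} (N_{j\ell,1}+N_{j\ell,0})^2/N_{j\ell,1} \geq \bigl(\sum_{\ell} (N_{j\ell,1}+N_{j\ell,0})\bigr)^2/\sum_{\ell} N_{j\ell,1}$. The only difference is cosmetic: the paper proves that inequality by induction on the number of sub-strata with an algebraic base case, whereas you obtain it in one step as the Engel (Titu) form of Cauchy--Schwarz; your explicit handling of positivity of the cell counts is a small point of care the paper leaves implicit.
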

This result formalizes the intuition that fewer strata implies a greater degree of aggregation and that, with larger sample sizes in the remaining strata, estimation should be accordingly more efficient. In other words, knowledge of the true propensity function permits feature selection, after which the empirical propensities can be used in an IPW estimator (which is equivalent to the stratification estimator on the selected features).

Condition one requires some explanation: in the fixed-strata case ``over-stratification'' can actually be beneficial if the additional strata are predictive of the response itself and condition one rules out this possibility, as it states that $\mu\mbox{-}\tau$ is at least as coarse as $\pi$. That is, in addition to the ``noisy-estimation-of one'' phenomenon, empirical estimates of the propensity score can benefit from being defined on strata that are predictive of the response, but {\em not} the treatment assignment; this benefit is not directly related to the true-versus-actual propensity score question, but merely reflects the fact that controlling for prognostic factors can benefit treatment effect estimation. 

\subsubsection{The inefficiency of instrumental controls.}
While a known propensity score can potentially aid IPW estimation by preventing unnecessary stratification, an additional corollary of Theorem \ref{theorem2} tells us that stratification based on a known propensity function may produce unnecessary stratification as a result of {\em unconfounded} variation in propensity scores, which we refer to as ``instrumental'' stratification.

\begin{corollary} \label{corollary2}
Define a stratification $s$ such that $\lvert s(\mathcal{X}) \rvert < \lvert \pi(\mathcal{X}) \rvert$ and define 
$g: \pi(\mathcal{X}) \rightarrow s(\mathcal{X})$ as a function that collapses level sets of $\pi$ into level sets of $s$. 
Let $m(j) = \lvert \left\{ \pi(x): g(\pi(x)) = j \right\} \rvert$ and suppose the following conditions hold:
\begin{enumerate}
\item There exist $x, x'$ such that $\pi(x) \neq \pi(x')$ while $s(x) = s(x')$, $\mu(x) = \mu(x')$, and $\tau(x) = \tau(x')$,
\item $\V(\epsilon \mid \pi(\rvX) = p) = \sigma_{j}^2$ for all $x$ with $\pi(x) = p$ and $g(\pi(x)) = j$ and for all $j \in s(\mathcal{X})$
\end{enumerate}
Then, 
$$\V \left( \sum_{j \in s(\mathcal{X})} \bar{\tau}_{\textrm{ipw}}^{\hat{\pi},j} \right) \leq \V \left( \sum_{j \in s(\mathcal{X})} \left( \sum_{\pi: g(\pi) = j} \frac{N_{\pi}}{N_j} \bar{\tau}_{\textrm{ipw}}^{\hat{\pi},\pi} \right) \right).$$
\end{corollary}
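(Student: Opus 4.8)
The plan is to recognize this as the ``pure overstratification'' (the $\mathcal{B}$-type) scenario of Theorem \ref{theorem2}: refining $s$ into $\pi$ splits each $s$-stratum into $\pi$-substrata that, by Conditions 1--2, share a common conditional mean structure ($\mu$ and $\tau$) and a common conditional outcome variance. Because this added resolution is purely instrumental --- it tracks variation in treatment assignment but none in the response surface --- it can only inflate variance. Rather than invoke Theorem \ref{theorem2} verbatim (whose base stratification is $\lambda$, not $s$), I would give a self-contained argument paralleling the proof of Corollary \ref{corollary1}, with $s$ and $\pi$ now playing the roles that $\pi$ and $x$ played there. Throughout I read $\bar\tau_{\textrm{ipw}}^{\hat\pi,j}$ as carrying its population weight, so that the two quantities being compared are the genuine stratification estimators $\bar\tau^{s}_{strat}=\sum_j \frac{N_j}{n}(\bar Y_{j,1}-\bar Y_{j,0})$ (the left side) and $\bar\tau^{\pi}_{strat}=\sum_j \frac{N_j}{n}\sum_{\ell=1}^{m(j)}\frac{N_{\pi_\ell}}{N_j}(\bar Y_{\pi_\ell,1}-\bar Y_{\pi_\ell,0})$ (the right side), where $\pi_1,\dots,\pi_{m(j)}$ index the level sets of $\pi$ that $g$ collapses into $s$-stratum $j$.

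First I would condition on the full vector of cell counts $\{N_{\pi_\ell,z}\}$ and apply the law of total variance. The key observation is that, because $\mu$ and $\tau$ are constant across the substrata of each $s$-stratum, both estimators have the \emph{identical} conditional mean: writing $\tau_j$ for the common value of $\tau$ on stratum $j$, one gets $\E[\bar\tau^{s}_{strat}\mid\textrm{counts}]=\E[\bar\tau^{\pi}_{strat}\mid\textrm{counts}]=\sum_j \frac{N_j}{n}\tau_j$, since the substratum weights $N_{\pi_\ell}/N_j$ sum to one and wash out. Consequently the between-counts component $\V(\E[\,\cdot\mid\textrm{counts}])$ is literally the same random variable for both estimators, and the entire comparison reduces to the within-counts component $\E[\V(\,\cdot\mid\textrm{counts})]$.

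Then I would exploit that, given the counts, cells are built from disjoint and mutually independent observations, so each conditional variance is a sum over $s$-strata $j$ and treatment arms $z$. Using the constant within-stratum outcome variance $\sigma_{j,z}^2$ supplied by Condition 2, the comparison collapses, for each $j$ and $z$, to the pointwise inequality $\frac{N_j^2}{N_{j,z}}\le \sum_{\ell=1}^{m(j)}\frac{N_{\pi_\ell}^2}{N_{\pi_\ell,z}}$. Since $N_j=\sum_\ell N_{\pi_\ell}$ and $N_{j,z}=\sum_\ell N_{\pi_\ell,z}$, this is exactly the Cauchy--Schwarz (Engel/Titu) inequality $\frac{(\sum_\ell a_\ell)^2}{\sum_\ell b_\ell}\le \sum_\ell \frac{a_\ell^2}{b_\ell}$ with $a_\ell=N_{\pi_\ell}$ and $b_\ell=N_{\pi_\ell,z}$. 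Holding for every count configuration, it survives taking expectations, yielding $\E[\V(\bar\tau^{s}_{strat}\mid\textrm{counts})]\le \E[\V(\bar\tau^{\pi}_{strat}\mid\textrm{counts})]$ and hence the claimed variance ordering.

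The main obstacles I anticipate are organizational rather than conceptual. The decisive step --- the one that makes everything cancel --- is choosing to condition on the cell counts so that the two estimators share a conditional mean; without that choice one compares estimators differing both in their mean-weighting and in their variance, which is far messier. I would also handle positivity carefully: both the substratum means $\bar Y_{\pi_\ell,z}$ and the Cauchy--Schwarz step require $N_{\pi_\ell,z}>0$, so the argument should be carried out on the event that every cell is nonempty, consistent with the positivity assumption and the hypotheses of Lemma \ref{ipw_strat}. Finally, I would flag that Condition 1 must be read as applying to \emph{all} $\pi$-levels collapsed within a common $s$-stratum, not merely one pair: were any collapsed substrata to differ in $\mu$ or $\tau$, the coarser estimator $\bar\tau^{s}_{strat}$ would itself be confounded and the conditional means would no longer coincide, breaking the cancellation in the second paragraph.
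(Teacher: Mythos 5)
Your proposal is correct and follows essentially the same route as the paper: its Appendix \ref{appD} reduces the corollary to the Case I argument in the proof of Theorem \ref{theorem2} (Appendix \ref{appA1}), which likewise conditions on the cell counts, applies the law of total variance so that the constant sub-strata means and variances make the mean terms cancel, and then compares $\sum_{\ell} (N_{j\ell,1}+N_{j\ell,0})^2/N_{j\ell,z}$ against its collapsed counterpart. The only cosmetic difference is that you dispatch that count inequality in one line via Cauchy--Schwarz (Engel form), whereas the paper proves it by induction whose base case is algebraically the same two-term Engel inequality; your flags about the implicit $N_j/n$ weights, the positivity of all cells, and reading Condition 1 as holding across \emph{all} $\pi$-levels collapsed within an $s$-stratum match what the paper's proof tacitly assumes.
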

This corollary and the other results of this subsection provide rigorous finite-sample corroboration of the advice offered in \cite{hernan2020causal} quoted in the introduction.

\subsection{Generalized prognostic scores}
In data generating processes where variation in $\tau$ is independent of $Z$, the {\em prognostic score}, $\E(Y^0 \mid X = x) = \mu(x)$, is a sufficient control function. This follows because mean conditional unconfoundedness is satisfied trivially by $s(X) = \mu(X)$ when $\tau(X) \independent Z$; see Lemma \ref{MCU}.  Like the propensity score, the prognostic score can be estimated from partially observed data --- the propensity score can be estimated from $(X, Z)$ pairs and the prognostic score can be estimated from control units only, $(X, Z = 0, Y)$, which in many contexts are more readily available than treated observations. See \cite{hansen2008prognostic} for a rigorous exposition of prognostic scores. 

The vector-valued function $(\mu, \tau)$ is a ``generalized'' prognostic score, containing both the usual prognostic score, as well as the treatment effect itself. This version of the prognostic score has received little attention, presumably because it ``begs the question'', in that one of its elements is the very estimand of interest. However, note that conditioning on a random variable is not about the values of that variable per se, but is rather about the level sets of the function defining that random variable. In particular, any one-to-one function of $\mu$-$\tau$ also satisfies mean conditional unconfoundedness; knowledge of the treatment effect itself is not required, merely knowledge of which strata have distinct treatment effects. Note also that Theorem \ref{theorem2} suggests that prognostic strata are more desirable from an estimation variance perspective, suggesting, perhaps counterintuitively, that large control groups may be advantageous in practice and that investing in data collection of prognostic factors should be prioritized in cases where randomization of treatment assignment is not possible.

\subsection{Constant control function}\label{constantcontrol}

The previous two examples showed that propensity scores and prognostic scores are sufficient control functions; this example demonstrates a function that may be coarser than either one. Consider a function on $\mathcal{X}$ defined as follows:
\begin{definition}
A function $s$ on $\mathcal{X}$ is a {\em constant control} function if for all $x, x' \in \mathcal{X}$ such that $s(x) = s(x')$ at least one of the following holds
\begin{itemize}
\item $\pi(x) = \pi(x')$,
\item $\mu(x) = \mu(x')$ and $\tau(x) = \tau(x')$.
\end{itemize}
\end{definition}
In other words, a constant control function is a coarsening of $\mathcal{X}$ such that on each level set defined by $s$, either $\pi(x)$ or $(\mu(x), \tau(x))$ are constant. The following lemma shows that a constant control function defines a random variable $S = s(X)$ such that $\E(Y \mid Z = z, S) = \E(Y^z \mid S)$. 

\begin{lemma} \label{lemma4}
Assume $X$ satisfies conditional unconfoundedness and consider the random variable $S = s(X)$, where $s$ is a constant control function; then $S$ satisfies conditional unconfoundedness. 
\end{lemma}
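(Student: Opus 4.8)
The plan is to reduce the claim to \emph{mean} conditional unconfoundedness, $Z \independent (\mu(X),\tau(X)) \mid S$, since by Lemma \ref{MCU} this already delivers the identification $\E(Y \mid Z = z, S) = \E(Y^z \mid S)$ promised in the lead-in. Because $\mathcal{X}$ is discrete, I would argue level set by level set: fix a value $j$ in the range of $s$, write $L_j = \{x \in \mathcal{X} : s(x) = j\}$, establish the conditional independence of $Z$ and $(\mu(X),\tau(X))$ given $\{S = j\}$ separately on each $L_j$, and then combine across the finitely many level sets.

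The crux, and the step I expect to be the main obstacle, is to upgrade the \emph{pairwise} constant-control condition into a \emph{global} dichotomy on each level set: either $\pi$ is constant on all of $L_j$, or $(\mu,\tau)$ is constant on all of $L_j$. This is not automatic, since the definition only constrains pairs. The argument I would give is a short transitivity. Suppose $\pi$ is not constant on $L_j$, so there are $a,b \in L_j$ with $\pi(a) \neq \pi(b)$; the pairwise condition then forces $(\mu(a),\tau(a)) = (\mu(b),\tau(b)) =: v$. For an arbitrary third point $c \in L_j$, applying the condition to the pairs $(a,c)$ and $(b,c)$ shows that if $(\mu(c),\tau(c)) \neq v$ then necessarily $\pi(a) = \pi(c)$ and $\pi(b) = \pi(c)$, whence $\pi(a) = \pi(b)$, a contradiction. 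Hence $(\mu(c),\tau(c)) = v$ for every $c \in L_j$, so $(\mu,\tau)$ is constant on $L_j$. This establishes the clean dichotomy.

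With the dichotomy in hand the two cases are routine. On a ``mean-constant'' level set, $(\mu(X),\tau(X))$ is conditionally degenerate given $\{S=j\}$ and is therefore trivially independent of $Z$. On a ``propensity-constant'' level set, $\pi(x) \equiv p_j$ for all $x \in L_j$; since $Z \mid X = x$ is $\mbox{Bernoulli}(\pi(x))$, the conditional law of $Z$ does not depend on which $x \in L_j$ obtains, so $Z \independent X \mid \{S=j\}$, and hence $Z \independent (\mu(X),\tau(X)) \mid \{S=j\}$ because $(\mu(X),\tau(X))$ is a function of $X$. Combining the two cases over all level sets yields $Z \independent (\mu(X),\tau(X)) \mid S$, and Lemma \ref{MCU} then gives $\E(Y \mid Z = z, S) = \E(Y^z \mid S)$, as claimed.

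Finally, I would remark on a subtlety in the wording: on the propensity-constant strata one in fact obtains full conditional unconfoundedness $(Y^0,Y^1) \independent Z \mid \{S=j\}$, by combining $Z \independent X \mid \{S=j\}$ with the assumed unconfoundedness $(Y^0,Y^1) \independent Z \mid X$ of $X$; but on the mean-constant strata only the conditional \emph{means} $(\mu,\tau)$ are controlled, since the residual terms $\upsilon(X,\epsilon_y)$ and $\delta(X,\epsilon_y)$ may still vary across $L_j$ while $Z$ covaries with $X$. The common denominator that holds on every level set, and all that ATE identification requires, is therefore mean conditional unconfoundedness, which is precisely what the proof above secures.
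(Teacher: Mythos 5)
Your proof is correct, and it reaches the conclusion by a more elementary route than the paper. The paper's argument is graphical: it augments the causal diagram with nodes $X_p = \pi(X)$, $X_y = (\mu(X), \tau(X))$, and $X_c = s(X)$, integrates out $X$ (Figure \ref{decomposed_graphs}), and concludes by showing $X_p \independent X_y \mid X_c$, asserting that on each level set of $X_c$ either $X_p$ or $X_y$ is constant ``immediately from the definition.'' Your level-set case analysis establishes the same conditional independence directly, and it improves on the paper in two places. First, the pairwise-to-global upgrade you isolate as the crux is precisely the step the paper glosses over: the definition of a constant control function constrains only pairs $x, x'$, and the per-stratum dichotomy (either $\pi$ constant on all of $L_j$, or $(\mu,\tau)$ constant on all of $L_j$) genuinely requires your short transitivity argument through a third point $c$; without it, the paper's claim that the conditional law of $(X_p, X_y)$ given $X_c$ is ``trivially a product distribution'' is unjustified. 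Second, your closing remark resolves a looseness in the statement itself: what the argument (the paper's included) delivers on mean-constant strata is mean conditional unconfoundedness, $Z \independent (\mu(X), \tau(X)) \mid S$, not the full distributional condition $(Y^0, Y^1) \independent Z \mid S$, and this matches the lead-in claim $\E(Y \mid Z = z, S) = \E(Y^z \mid S)$ rather than the lemma's literal wording. The DGP of Section \ref{meanCDAG} --- constant $(\mu, \tau)$, $X$-dependent variance, confounded quantile treatment effect --- confirms that the full condition can fail on a mean-constant stratum, while your observation that full unconfoundedness does hold on propensity-constant strata (combining $Z \independent X \mid S = j$ with $(Y^0, Y^1) \independent Z \mid X$) is also correct. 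In short: you use the same key dichotomy as the paper, but argue it probabilistically rather than graphically, and you make rigorous the two steps the paper leaves implicit.
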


\begin{figure}
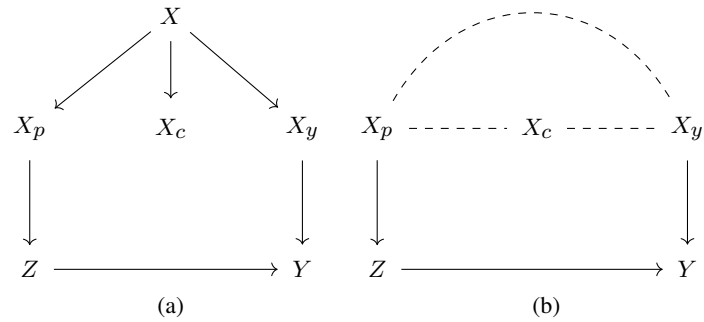

\tikzfig{tikgraph}\tikzfig{tikgraph2}
\caption{Causal graphical model and partially causal graphical model after integrating out $X$.}
\label{decomposed_graphs}
\end{figure}

\begin{proof}
Consider the causal diagram of $(X, Z, Y)$ expanded to include random variables $X_p = \pi(X)$, $X_y = (\mu(X), \tau(X))$, and $X_c = s(X)$ for $s$ defined above, depicted in panel (a) of Figure \ref{decomposed_graphs}. Integrating out $X$ leads to a probabilistic graphical model as shown in panel (b) of Figure \ref{decomposed_graphs}; dashed lines denote not-necessarily causal probabilistic dependence and solid arrows denote causal relationships. The result follows by showing that $X_p \independent X_y \mid X_c$; in terms of the diagram this means that the curved dashed line does not exist. But this follows immediately from the definition of $X_c$. For any value of $X_c$, either $X_p$ or $X_y$ is constant, and so the conditional distribution of $X_p$ and $X_y$ is trivially a product distribution. 
\end{proof}
\noindent The intuition behind a constant control function is that one way to control for ``systematic co-variation'' is simply to remove all variation. Clearly, both $\pi(X)$ and $(\mu(X), \tau(X))$ are themselves constant control functions, as is $X$ itself. However, a constant control function may be coarser than either, as illustrated in Figure \ref{CCDR}, which shows an example of a simple data generating process that has a constant control function. In this example, the CCDR comprises just two strata, although $\mu$ and $\pi$ take 10 and 11 unique values, respectively, and $|\mathcal{X}| = 20$.  The treatment effect is heterogeneous but unconfounded: $\tau \sim \mbox{U}(5,10)$. The second panel of Figure \ref{CCDR} shows the sampling distributions of four different stratification estimators: one using level sets of $\mu$, one using level sets of $\pi$, one using all 20 values of $x$, and one using the two values of the minimal constant control function, indicating if $x \leq 11$. All four stratification estimators are unbiased, but their differing variances exhibit a pattern consistent with Theorem \ref{theorem2}: $\mu$ gives the lowest variance, followed by $x$, followed by the constant control function, followed by $\pi$.

\begin{figure}
\hspace*{-0.5cm} 
\includegraphics[width=3in]{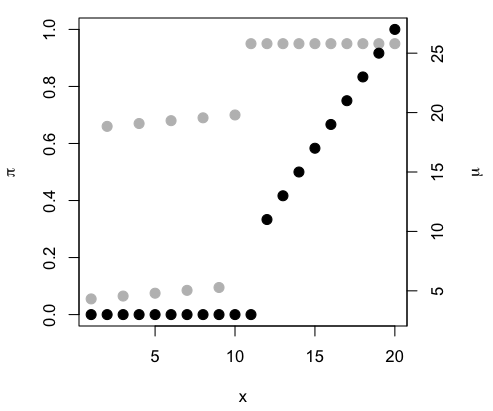}\includegraphics[width=3in]{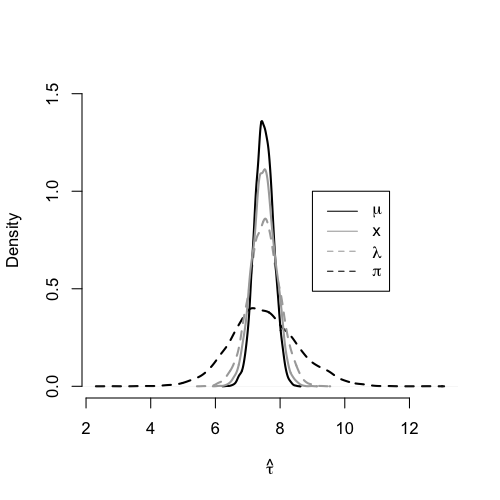}
\caption{An example of a DGP admitting a simple constant control function, $\lambda = \mathds{1}(x \leq 11)$. Here $\tau \sim \mbox{U}(5,10)$ is heterogeneous and $x \in \lbrace 1, \dots 20 \rbrace$. The left panel shows the $\mu$ values in black and the $\pi$ values in gray. The right panel shows the sampling distributions of stratification estimators based on the level sets of different function: $\mu$ (solid black), $x$ (solid gray), $\lambda$ (dashed gray) and $\pi$ (dashed black). All four estimators are unbiased, with variances that differ in line with the results of Theorem \ref{theorem2}.}
\label{CCDR}
\end{figure} 

\begin{figure}
 \begin{tikzpicture}
   \begin{scope}
    \fill[lightgray]  \thirdcircle;
    \fill[lightgray]  \secondcircle;
          \end{scope}
          \begin{scope}
          \fill[white] \firstcircle;
          \end{scope}
             \begin{scope}
                 \clip \thirdcircle;
    \fill[draw=black, pattern=north east lines]  \firstcircle;
          \end{scope}
                 \begin{scope}
                 \clip \secondcircle;
    \fill[draw=black, pattern=north east lines] \firstcircle;
          \end{scope}
      \draw \firstcircle node[text=black,above] {$\pi$};
      \draw \secondcircle node [text=black,below left] {$\mu$};
      \draw \thirdcircle node [text=black,below right] {$\tau$};
    \end{tikzpicture}
    \caption{If the coordinate dimensions of $\rvX$ are independent, then which variables $X_j$ appear (or not) in the eight possible combinations of $\pi$, $\mu$, and $\tau$ can be used to characterize four relevant variable types with respect to treatment effect estimation: necessary controls, pure prognostic variables, instruments, and extraneous (or noise) variables. The above Venn diagram depicts these eight regions, shaded according to these designations. Variables in the cross-hatched region are necessary controls, as they appear in both $\pi$ and either $\mu$ or $\tau$ (or both). The gray shaded region corresponds to pure prognostic variables, appearing in $\mu$ or $\tau$ (or both), but not appearing in $\pi$. The white region corresponds to instruments, variables which appear in $\pi$, but neither in $\mu$ nor $\tau$. Variables outside of the three circles are entirely irrelevant to either the outcome or the treatment. Such designations become considerably more complicated when the elements of $X$ are not independent (cf. Example \ref{noncausalSEM}).}
\label{venn}
\end{figure}
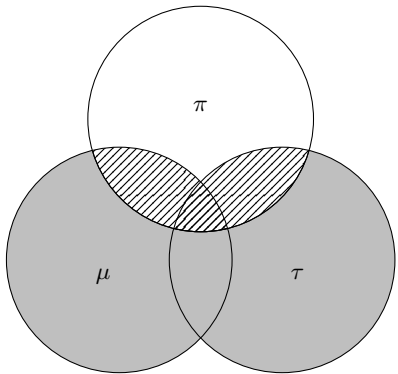

\subsection{Independent variables in both $\pi$ and $(\mu, \tau)$.}\label{independent}
When the coordinates of $\rvX$ (the nodes in the graph) are all mutually independent, a valid control set is the elements $X_j$ occurring in {\em both} the propensity model and (at least one of) the prognostic  and moderation models. For example, this was the strategy used in concocting the example DGP presented in section \ref{partial}. As a more general example, if $\pi(x_1, \dots, x_d) = \pi(x_1, x_2)$, $\mu(x_1, \dots, x_d) = \mu(x_2, x_3)$, and $\tau(x_1, \dots, x_d) = \tau(x_4, x_5)$, and $X_1 \independent X_{j}$ for $j \neq 1$, then $X_2$ is a sufficient control. This is because $X_1$ can be integrated out of the model without inducing dependence between $\pi(x_2) = \E(\pi(X_1, x_2) \mid X_2 = x_2)$ and $(\mu, \tau)$, because $X_1$ is independent of variables appearing in $\mu$ and/or $\tau$ (and does not itself appear). A similar integration could be performed for variables in $\mu$ and/or $\tau$ that do not appear in $\pi$, so long as it too was independent. In fact, only conditional independence is necessary; in the present example, $X_1 \independent (X_3, X_4, X_5) \mid X_2$. Figure \ref{venn} depicts the characterization of variables into four categories (necessary controls, pure prognostics variables, instruments, and extraneous) in the case that they are mutually independent. 

\subsection{Sets satisfying the back-door criterion according to a given CDAG.}\label{noncausalSEM}
Consider the causal diagram in Figure \ref{rectangle}. Either the propensity controls $(X_1, X_3)$ or the prognostic-moderation controls $(X_2, X_4)$ are adequate for statistical control. However, Pearl's algorithm tells us that ``mixed'' variables also suffice, such as $(X_1, X_4)$ or $(X_2, X_3)$. Interestingly, such examples show that the notion of ``instrumental'' variables and ``prognostic'' variables are context dependent. Specifically, relative to a conditioning set of $(X_2, X_3)$, additional stratification using $X_4$ is prognostic, while additional stratification on $X_1$ would be instrumental. Theorem \ref{theorem2} suggests that adding prognostic controls is often desirable, while adding instruments should be avoided, but such designations will fluctuate depending on what has already been included.

This example also illustrates a limitation of the triangle graph. Suppose that only $(X_2, X_3)$ were available for measurement. The resulting diagram for just those two controls (Figure \ref{triangle}) is {\em not} the usual causal diagram, because $X_2$ has no causal impact on $Z$, while $X_3$ has no causal impact on $Y.$ Accordingly, there is no unaugmented CDAG describing $(X_2, X_3, Z, Y)$; instead, we must denote merely statistical relationships using dashed lines. When a practitioner invokes conditional unconfoundedness in the potential outcomes framework, it therefore does not imply the triangular CDAG. 

Similarly, invoking (conditionally) exogenous errors does not imply that the resulting mean components of the structural model are causal. In more detail, if the potential outcomes are defined in terms of the CDAG on the full set $(X_1, X_2, X_3, X_4)$, a structural model can be derived that only involves $(X_2, X_3)$, as follows:

\begin{equation}
\begin{split}
Y^0 &=  F(x_1, x_2, x_3, x_4, z = 0, \epsilon_y) = F(x_2, x_3, z = 0, \epsilon_y)\\
Y^1 &= F(x_1, x_2, x_3, x_4, z = 1, \epsilon_y)  = F(x_2, x_3, z = 1, \epsilon_y)\\
\mu(x_2, x_3) &\equiv \E(Y^0 \mid X_2, = x_2, X_3 = x_3)\\
\tau(x_2, x_3) &\equiv \E(Y^1 \mid X_2 = x_2, X_3 = x_3) -  \E(Y^0 \mid X_2, = x_2, X_3 = x_3)\\
\upsilon(X_1, x_2, x_3, X_4, \epsilon_y) &\equiv F(X_1, x_2, x_3, X_4, z = 0, \epsilon_y) - \mu(x_2, x_3)\\
& = F(x_2, X_4, z = 0, \epsilon_y) - \mu(x_2, x_3)\\
\delta(X_1, x_2, x_3, X_4, \epsilon_y) &\equiv F(X_1, x_2, x_3, X_4, z = 1, \epsilon_y)  - F(X_1, x_2, x_3, X_4, z = 0, \epsilon_y) - \tau(x_2, x_3)\\
&= F(x_2, X_4, z = 1, \epsilon_y)  - F(x_2, X_4, z = 0, \epsilon_y) - \tau(x_2, x_3).
\end{split}
\end{equation}
Noting that the resulting error terms now depend not only $\epsilon_y$, but also on $X_4$, it is necessary to show that $$(X_4, \epsilon_y) \independent Z \mid (X_2, X_3).$$ But this follows from the fact that $\E(Z \mid X_2 = x_2, X_3 = x_3) = \E(\pi(X_1, X_3) \mid X_2 = x_2, X_3 = x_3) \equiv \pi(x_2, x_3)$ is free of $X_4$. In this model, $\mu(x_2, x_3)$, $\tau(x_2, x_3)$ and $\pi(x_2, x_3)$ must not be interpreted as causal functions, despite yielding the required exogenous errors; specifically, from the graph we know that $X_2$ has no causal impact on $Z$ and $X_3$ has no causal impact on $Y,$ as depicted in Figures \ref{rectangle} and \ref{triangle}.

\subsection{Sets satisfying the back-door criterion according to a transformed CDAG.} \label{transformedCDAG}

This example considers a data generating process that admits distinct CDAGs, depending on how the control variables are parametrized. This scenario is not commonly discussed, presumably because observed measurements are taken to be designated by ``nature'', so to speak. However, reflecting on invertible transformations such as $(x_1, x_2) \rightarrow (x_1, x_1/x_2)$ highlights that functional causal models are certainly subject to changes of variables.

More concretely, consider the following DGP: 
\begin{equation*}
\begin{aligned}
X_j &\stackrel{iid}{\sim}\mbox{Bernoulli}\left(p_j\right)\\
\pi(X) &= \beta_0 + \beta_1 (2X_1X_2 - X_1 - X_2 + 1) + \beta_2 X_3\\
Z &\sim \mbox{Bernoulli}\left(\pi(X)\right)\\
\mu(X) &= \alpha_0 + \alpha_1 (2X_1X_2 - X_1 - X_2 + 1) + \alpha_2 X_4\\
\tau(X) &= \tau \;\; \mbox{(constant treatment effect)}\\
Y &= \mu(X) + \tau(X) Z + \epsilon, \;\;\;  \epsilon \sim \mathcal{N}\left(0, \sigma^2_{\epsilon}\right)\\
\end{aligned}
\end{equation*}

Next, define random variable $W = (2X_1X_2 - X_1 - X_2 + 1)$, regarding $X_2$ as the 
exogenous variable in the functional model for $W \mid X_1$. Additionally,  suppressing 
$X_3$ and $X_4$, as they represent exogenous variation, yields the 
causal graph in Figure \ref{graph5}.

\begin{figure}
\begin{minipage}[b]{180pt}
\centering
\begin{tikzpicture}[baseline=-0.25em,scale=0.45]
	\begin{pgfonlayer}{nodelayer}
		\node [style=myvar] (1) at (2.5, 2.5) {$X_1$};
		\node [style=myvar] (2) at (2.5, -2.5) {$X_2$};
		\node [style=myvar] (3) at (-2.5, -2.5) {$X_3$};
		\node [style=myvar] (4) at (7.5, 2.5) {$X_4$};
		\node [style=myvar] (5) at (-2.5, 0) {$Z$};
		\node [style=myvar] (6) at (7.5, 0) {$Y$};
	\end{pgfonlayer}
	\begin{pgfonlayer}{edgelayer}
		\draw [style=arrow] (1) to (5);
		\draw [style=arrow] (1) to (6);
		\draw [style=arrow] (2) to (5);
		\draw [style=arrow] (2) to (6);
		\draw [style=arrow] (3) to (5);
		\draw [style=arrow] (4) to (6);
		\draw [style=arrow] (5) to (6);
	\end{pgfonlayer}
\end{tikzpicture}
\caption{Causal graph in terms of original covariates}
\label{graph5a}
\end{minipage}
\hfill
\begin{minipage}[b]{180pt}
\centering
\begin{tikzpicture}[baseline=-0.25em,scale=0.75]
	\begin{pgfonlayer}{nodelayer}
		\node [style=myvar] (1) at (-2.5, 2.5) {$X_1$};
		\node [style=myvar] (2) at (2.5, 2.5) {$W$};
		\node [style=myvar] (7) at (0.66, 0) {$Z$};
		\node [style=myvar] (8) at (4.33, 0) {$Y$};
	\end{pgfonlayer}
	\begin{pgfonlayer}{edgelayer}
		\draw [style=arrow] (1) to (2);
		\draw [style=arrow] (2) to (7);
		\draw [style=arrow] (2) to (8);
		\draw [style=arrow] (7) to (8);
	\end{pgfonlayer}
\end{tikzpicture}
\caption{Causal graph under transformed covariates}
\label{graph5}
\end{minipage}
\end{figure}

From this graph, it is clear that conditioning on $W$ satisfies conditional unconfoundedness.
Most interestingly, $\lvert \mu(\mathcal{X}) \rvert = \lvert \pi(\mathcal{X}) \rvert = 4$.
while $\lvert \mathcal{W} \rvert = 2$; thus $W$ provides the smallest possible random variable. Indeed, the level sets of $W$ are exactly the level sets of $\lambda$: $$\E(\pi(X) \mid \mu(X)) = \E(\pi(X) \mid W, X_4) = \E(\pi(X) \mid W).$$

\subsection{Sets that induce collider bias in a graph without colliders}\label{pseudoCollider}

We see in Section \ref{transformedCDAG} that conditioning on synthetic ``features" that combine existing 
variables can lead to smaller control sets than their component variables. 
It is thus perhaps natural to consider machine learning useful in searching for and constructing such sets. 
It is true that \textit{certain} combinations of confounding variables may create a synthetic, 
minimal deconfounders. However, it is also possible to combine two independent variables to create a 
``collider" (defined in Section \ref{DAG_section}) which confounds the causal effect of $Z$ on $Y$ after 
conditioning.

Consider the graph in Figure \ref{pseudo_collider_graph} and define its data generating equations as 
\begin{equation*}
\begin{aligned}
Y &\sim \mathcal{N}\left(\alpha X_2 + \tau Z , \sigma^2 \right)\\
Z &\sim \mbox{Bernoulli}\left(1 / 4 + X_1 / 2\right)\\
X_1, X_2 &\sim \mbox{Bernoulli}\left(1 / 2\right)
\end{aligned}
\end{equation*}

\begin{figure}
\centering
\begin{tikzpicture}[baseline=-0.25em,scale=0.45]
	\begin{pgfonlayer}{nodelayer}
		\node [style=myvar] (1) at (-3, 3) {$X_1$};
		\node [style=myvar] (2) at (3, 3) {$X_2$};
		\node [style=myvar] (3) at (-3, 0) {$Z$};
		\node [style=myvar] (4) at (3, 0) {$Y$};
	\end{pgfonlayer}
	\begin{pgfonlayer}{edgelayer}
		\draw [style=arrow] (1) to (3);
		\draw [style=arrow] (2) to (4);
		\draw [style=arrow] (3) to (4);
	\end{pgfonlayer}
\end{tikzpicture}
\caption{Graph with no confounding and no colliders}
\label{pseudo_collider_graph}
\end{figure}
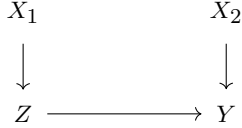

From this graph, we can see that the average causal effect of $Z$ on $Y$ is identified unconditional of $X_1$ and $X_2$, though 
we may condition on either or both variables. 
Suppose we construct two synthetic variables
\begin{equation*}
\begin{aligned}
\tilde{X}_A &= \min\left\{X_1, X_2\right\}\\
\tilde{X}_B &= a \left[ \mathbf{1}\left\{X_1 == 1\right\} \mathbf{1}\left\{X_2 == 1\right\} + \mathbf{1}\left\{X_1 == 0\right\}\mathbf{1}\left\{X_2 == 0\right\} \right]\\
&\;\;\;\;\; + b \left( \mathbf{1}\left\{X_1 == 1\right\} \mathbf{1}\left\{X_2 == 0\right\} \right) + c \left( \mathbf{1}\left\{X_1 == 0\right\} \mathbf{1}\left\{X_2 == 1\right\} \right)
\end{aligned}
\end{equation*}
where the unique values of categorical $\tilde{X}_B$ may be treated as strata of a conditioning set.
We show in the simulation results in Table \ref{tab:table2} that conditioning on either $\tilde{X}_A$ or $\tilde{X}_B$ biases the average treatment effect, 
while conditioning on both $X_1$ and $X_2$ does not. 
Note that both $\tilde{X}_A$ and $\tilde{X}_B$ are constructed in a manner not unlike the ``feature learning" step of common machine learning algorithms, 
such as neural networks and decision trees. 

\begin{table}\caption{\label{tab:table2}Simulation results for $10,000$ simulations of $n = 1,000$ of the above DGP}
\centering
\begin{tabular}{c|cc}
\hline
Control Set & Bias & Log-likelihood\\
\hline
$\varnothing$ & 0.00 & -3,143\\
$X_2$ & 0.00 & -2,352\\
$X_1, X_2$ & 0.00 & -2,350\\
$\tilde{X}_A$ & -1.72 & -2,967\\
$\tilde{X}_B$ & 2.50 & -2,798\\
\end{tabular}
\end{table}
%

\subsection{Sets satisfying the back-door criterion with respect to a mean CDAG.}\label{meanCDAG}
The structural model perspective permits us to produce, starting from a given CDAG, a modified causal diagram that reflects only the mean dependencies. For estimation of average causal differences, such a graph suffices to identify valid control variable sets that are potentially smaller than any control set satisfying the back-door criterion on the original CDAG. 

For example, consider the following data generating process:
\begin{equation*}
\begin{aligned}
X &\sim \mbox{Bernoulli}(1/2),\\
Z &\sim \mbox{Bernoulli}\left(1/4 + X/2 \right)\\
Y &\sim \mathcal{N}\left(\tau Z, (\sigma + X)^2\right)
\end{aligned}
\end{equation*}
For this DGP, $\mu(X) = 0$ and $\tau(X) = \tau$ are both constant in $X$, which implies that the null set satisfies mean conditional unconfoundedness; even though $X$ is a common cause of $Z$ and $Y$, it only affects the variance of $Y$, but not the mean. Therefore, the full joint distribution of $X, Z, Y$ is the triangle diagram of Figure \ref{triangle}, 
while Figure \ref{graph16} depicts the joint distribution of $(X, Z, \E (Y \mid X, Z))$, in which $X$ is unconnected to $\E(Y \mid X, Z) = \E(Y \mid Z)$.

Note that while mean conditional unconfoundedness identifies the ATE, it does not identify other causal estimands. For instance, consider the
quantile treatment effect (QTE), for $q \in (0,1)$:
$$F^{-1}_{Y^1}(q) - F^{-1}_{Y^0}(q)$$
where $F^{-1}$  denotes an inverse cumulative distribution function.
Integrating out $X$, $Y \mid Z = z$ is a mixture of two normal random variables, with PDF and CDF defined as 
\begin{equation*}
\begin{aligned}
f(y \mid Z = z) &= w_z \phi(y, \tau z, (\sigma + 1)^2) + (1 - w_z) \phi(y, \tau z, \sigma^2),\\
F(y \mid Z = z) &= w_z \Phi(y, \tau z, (\sigma + 1)^2) + (1 - w_z) \Phi(y, \tau z, \sigma^2)
\end{aligned}
\end{equation*}
where $w_z = \Prob\left( X = 1 \mid Z = z\right)$. By contrast, the PDF and CDF of $Y^z$ are given by
\begin{equation*}
\begin{aligned}
f(y^z \mid Z = z) &= \frac{1}{2} \phi(y, \tau z, (\sigma + 1)^2) + \frac{1}{2} \phi(y, \tau z, \sigma^2),\\
F(Y^z \mid Z = z) &= \frac{1}{2} \Phi(y, \tau z, (\sigma + 1)^2) + \frac{1}{2} \Phi(y, \tau z, \sigma^2).
\end{aligned}
\end{equation*}
Because $X \not\independent Z$, $w_z \neq 1/2$ and therefore 
\begin{equation*}
\begin{aligned}
F^{-1}_{Y^1}(q) - F^{-1}_{Y^0}(q) &\neq F^{-1}_{Y \mid Z=1}(q) - F^{-1}_{Y \mid Z=0}(q),
\end{aligned}
\end{equation*}
as illustrated in Figure \ref{QTE}.
\begin{figure}[h]
\centering
\begin{tikzpicture}[baseline=-0.5em,scale=0.5]
	\begin{pgfonlayer}{nodelayer}
		\node [style=myvar] (1) at (-2, -1) {$Z$};
		\node [style=myvar] (3) at (0, 1) {$X$};
		\node [style=myvar] (2) at (2, -1) {$\E (Y \mid X, Z)$};
	\end{pgfonlayer}
	\begin{pgfonlayer}{edgelayer}
		\draw [style=arrow] (3) to (1);
		\draw [style=arrow] (1) to (2);
	\end{pgfonlayer}
\end{tikzpicture}\vspace{-0.6cm}
\caption{Mean causal graph}\label{graph16}
\end{figure}
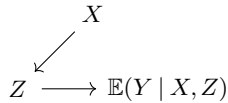

\begin{figure}
\includegraphics[width=2.5in]{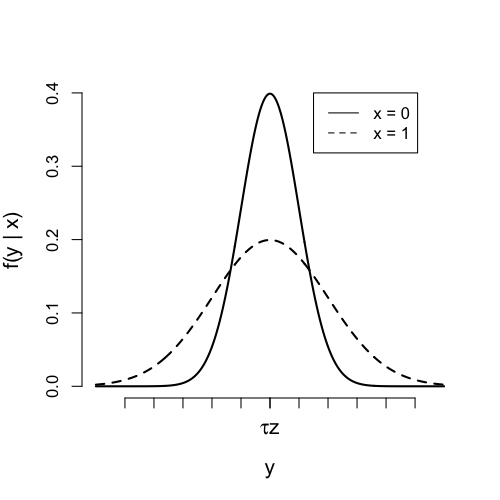}
\includegraphics[width=2.5in]{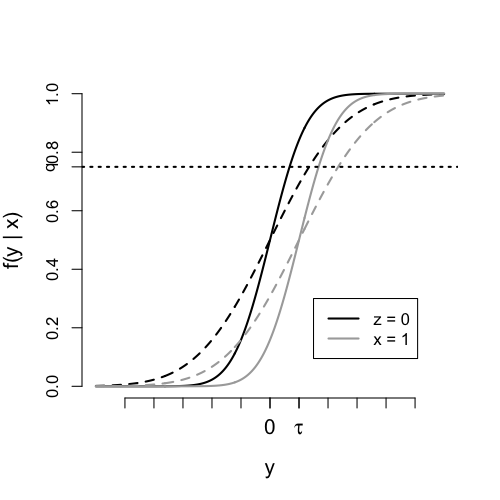}\\
\includegraphics[width=2.5in]{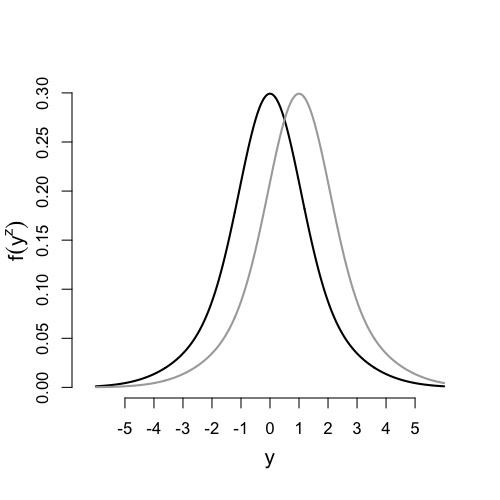}
\includegraphics[width=2.5in]{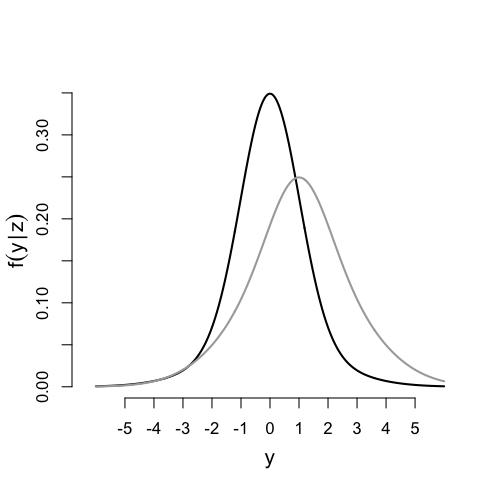}\\
\includegraphics[width=2.5in]{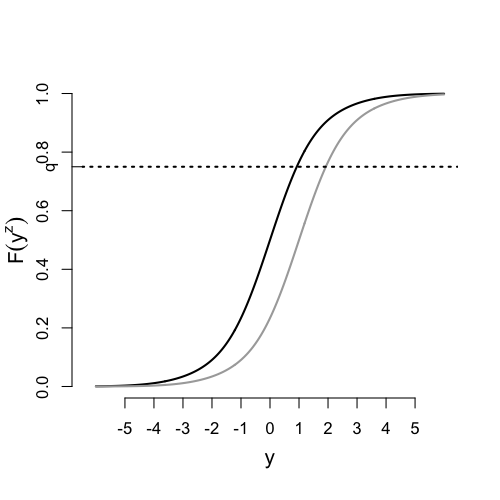}
\includegraphics[width=2.5in]{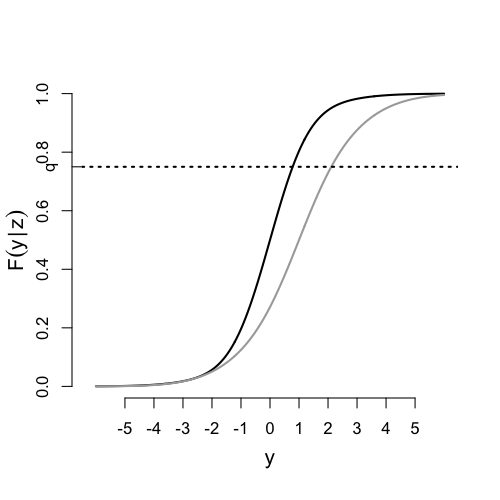}
\caption{An illustration of a confounded quantile treatment effect with unconfounded ATE. The top two panels depict the density and CDF functions of the DGP from section \ref{meanCDAG} for the four combinations of $X \in \{ 0, 1 \}$ and $Z \in \{ 0, 1 \}$. For each value of $X$ the change in the quantile is a constant shift to the right. The second row shows the densities of the potential outcome distributions and the conditional distribution of $Y \mid Z$, respectively, with $X$ integrated out. In both cases, the resulting density is a mixture of two normals with different variances and a common mean. However, the potential outcomes densities are just translations of the same mixture density, whereas the conditional distribution of $Y \mid Z$ also differs in terms of the mixture weights. The bottom row depicts the same relationship, but in terms of the CDFs. Attempts to estimate the quantile treatment effect --- shown here as the distance between the black and grey curves at the horizontal dashed line in left panel --- using the analogous distance from the right panel would misestimate the effect.}\label{QTE}
\end{figure}

\subsection{Partial randomization}\label{partial}
Some estimands require weaker assumptions than estimating the average treatment effect over the whole population does. For example, the {\em average treatment effect among the treated}, or ATT, is defined as $\E(Y^1 - Y^0 \mid Z = 1) = \E(Y^1 \mid Z = 1) - \E(Y^0 \mid Z = 1)$\footnote{In our experience, this potential outcomes notation for the ATT can give students fits, particularly the $\E(Y^0 \mid Z = 1)$ term. Such students may find the structural equation notation to be somewhat more transparent: $\E(\tau(X) \mid Z = 1)$ makes it clear that the probabilistic impact of conditioning on $Z = 1$ is to modify the distribution over $X$ defining the expectation; there is no opportunity for cognitive interference from the fact that the ``$z$'' in $Y^z$ is different from that in the condition $Z = z$.}. This estimand is important in the program evaluation literature, see for example \cite{heckman1996identification} and \cite{heckman1997matching}.

Here we use structural model notation to compare the ATT to the ATE, as relates to the ``naive'' contrast that compares the average response among treated individuals to the average response among the untreated individuals. In terms of the population, the naive contrast estimates $\E(Y \mid Z = 1) - \E(Y \mid Z = 0)$. In terms of the structural model, this is equivalent to $$\E(\mu(X) + \tau(X) \mid Z = 1) - \E(\mu(X) \mid Z = 0).$$ By definition, the exogenous errors are mean zero and vanish from the above expression. Now, randomization of $Z$ implies that $(\mu(X), \tau(X)) \independent Z$, which in turn implies that $\E(\mu(X) \mid Z = 1) = \E(\mu(X) \mid Z = 0)$ and therefore that $$\E(\mu(X) + \tau(X) \mid Z = 1) - \E(\mu(X) \mid Z = 0) = \E(\tau(X) \mid Z = 1),$$ the ATT.  Randomization further implies that $\E(\tau(X) \mid Z = 1) = \E(\tau(X))$, so that the ATE and the ATT are the same.

However, the above derivation also reveals that to estimate the ATT one only needs $\E(\mu(X) \mid Z = 1) = \E(\mu(X) \mid Z = 0)$, or what we might call {\em mean prognostic unconfoundedness}, which itself follows from $\mu(X) \independent Z$, or {\em prognostic unconfoundedness}. Thus, when the ATT is the sole interest, one only needs to rule out prognostic confounding. Meanwhile, treatment effect confounding, $\tau(X) \not \independent  Z$, entails that the ATT and ATE are different, so that the ATE remains unknown even with the ATT in hand.

Note that a similar argument works for $\E(\tau(X) \mid Z = 0)$, the average effect of the treatment on the control (untreated) population, or ATC. This is easiest to see by reparametrizing the structural model in terms of: $Z^* = 1 - Z$, $\mu^*(X) = \mu(X) + \tau(X)$, and $\tau^*(X) = -\tau(X)$. It then follows that the ATC may be estimated from the naive contrast so long as $\mu^*(X) \independent Z$.

As it relates to feature selection, it is notable that a smaller feature set may allow estimating the ATT than would be required for estimating the ATE. The following DGP is a concrete example:
\begin{align*}
X_1 \sim \mbox{Bernoulli}(1/2)&,\;\;\;X_2 \sim \mbox{Bernoulli}(1/2),\\
Z \mid X_1, X_2 &\sim \mbox{Bernoulli}(0.25 + 0.5 X_2),\\
Y \mid X_1, X_2, Z &\sim \mathcal{N}(X_1 + (1 + 2 X_2)Z, \sigma^2).
\end{align*}
In this example, $\tau(X) = \tau(X_2) = 1 + 2 X_2$, $\mu(X) = \mu(X_1) = X_1$, and the ATE is $\E(\tau(X)) = 1 + 2\E(X_2) = 2$. The ATT, on the other hand, is $\E(\tau(X) \mid Z = 1) = 1 + 2\E(X_2 \mid Z = 1) = 3/2$. It is a nice simulation exercise to demonstrate that the naive contrast is consistent for the ATT, but not the ATE.

\subsection{A two-stage estimator using two distinct control features.}\label{split_sample}
This example builds upon the ideas presented in the previous one, but returns to the goal of regression adjustments for the ATE.

Suppose we know that $\mu(X) \independent Z \mid s_1(X)$ and  $\tau(X) \independent Z \mid s_2(X)$, for distinct functions (features) $s_1$ and $s_2$. One approach to estimating the ATE under this assumption would be to stratify on the common refinement of $s_1(X)$ and $s_2(X)$, thus guaranteeing that  $(\mu(X), \tau(X)) \independent Z \mid s(X) = s_1(X) \vee s_2(X)$. But an alternative two-stage approach is possible, which requires estimating fewer individual strata means. The procedure is:
\begin{enumerate}
\item Estimate $\mu(s_1(X)) = \E(Y \mid Z = 0, s_1(X))$ from the control data. 
\item Define $R = Y - \mu(s_1(X))$. 
\item Estimate $\E(R \mid Z = 1, s_2(X))$ from the treated data. 
\item Compute the ATE as $\E_X(\E(R \mid Z = 1, s_2(X)))$, where the outer expectation is over $X$, with respect to its marginal distribution. 
\end{enumerate}
We may verify the validity of this estimator by first expressing the procedure as the following iterated expectation:
\begin{equation*}
\begin{aligned}
&\E_{X}\left( \E(Y - \E(Y \mid Z = 0, s_1(X)) \mid Z = 1, s_2(X)) \right)\\
&\quad= 
\E_{X}\left( \E(Y \mid Z = 1, s_2(X)) \right) - \E_{X}\left( \E(Y \mid Z = 0, s_1(X)) \right)\\
&\quad= \E_{X}\left( \E(\mu(X) + \tau(X) \mid Z = 1, s_2(X)) \right) - \E_{X}\left( \E(\mu(X) \mid Z = 0, s_1(X)) \right)\\
&\quad= \E_{X}\left( \E(\tau(X) \mid Z = 1, s_2(X)) \right) + \E_{X}\left( \E(\mu(X) \mid Z = 1, s_2(X)) \right) - \E_{X}\left( \E(\mu(X) \mid Z = 0, s_1(X)) \right).
\end{aligned}
\end{equation*}

By the assumption that  $\mu(X) \independent Z \mid s_1(X)$, we find that $\E(\mu(X) \mid Z = 0, s_1(X)) = \E(\mu(X) \mid Z = 1, s_1(X))$, which in turn implies that the second and third terms above are both equal to $\E_X(\mu(X) \mid Z = 1)$ (just expressed as distinct iterated expectations) and thus cancel. By the assumption that  $\tau(X) \independent Z \mid s_2(X)$, the remaining term is equal to $\E(\tau(X) \mid s_2(X))$ and the desired result follows after taking the outer expectation: $\E(\tau(X)) = \E_X(\E(\tau(X) \mid s_2(X))$.  

\section{Discussion}
To conclude, we synopsize our results and discuss further relationships to previous literature.

\subsection{Famous results or debates revisited}
The discrete covariate setting studied here allowed us to revisit several important existing results from a unique perspective. 

\subsubsection*{Virtues of the propensity score.} \cite{rosenbaum1983central} is often cited in support of propensity score methods for causal inference, but its results are often over-stated. First, there is not one propensity score, but many, one corresponding to each valid set of control features. Second, a propensity score need not be minimal; it is the minimal balancing score for the complete set of features used to create it, but balancing on those features is not necessary to estimate causal effects. Third, a propensity score method that disregards important prognostic features can be much less efficient than a method that does incorporate such features. 

\subsubsection*{Estimated versus True propensity scores.} In practice, the propensity score (corresponding to a given set of control features) is rarely known and so must be estimated. \cite{hirano2003efficient} is sometimes cited to put a positive spin on this state of affairs: estimating a propensity function is better than knowing it exactly! But the actual situation is more nuanced. The asymptotic analysis of \cite{hirano2003efficient} comparing the IPW estimator using true versus estimated propensity scores conceals the variety of specific ways the two estimators differ. Viewing the IPW as a stratification estimator in the discrete covariate setting puts these distinctions into immediate relief. One, the IPW using the true propensity scores uses different strata weights than the one using the estimated propensity scores, resulting in a higher variance estimator. Two, the IPW based on a true propensity score is able to collapse unnecessary strata, which can reduce the variance of the estimator. Three, collapsing unnecessary strata does not {\em always} reduce the variance, because the ``extraneous'' strata may be informative about {\em unconfounded} variation in the response. That is, an IPW estimator based on estimated propensity scores can have lower variance than one based on a true propensity score because it performs an implicit regression adjustment that is essentially unrelated to the propensity score. To be sure, the mathematics of \cite{hirano2003efficient} are consistent with our analysis, and one can parse their expressions for such meaning, but their analysis does not expose the importance of either variable selection or prognostic stratification. 

\subsubsection*{Regression adjustments for randomized experiments.}  \cite{freedman2008regression} is sometimes cited as a reason to avoid regression adjustment for causal effect estimation altogether. However, Freedman's result was more about model specification --- or {\em mis}specification --- than it was about regression adjustment per se. Provided that one undertakes a nonparametric adjustment, as advocated by \cite{lin2013agnostic}, Freedman's main concerns are addressed. However, nonparametric adjustment poses its own challenges, in the form of high-variance estimators. Whether or not the inclusion of strong prognostic features is enough to offset the increased variability that comes with estimating a nonparametric model with limited data is impossible to say in any generality.  Theorem \ref{theorem2} approaches this question quantitatively. 

\subsubsection*{The peril of colliders.} \cite{greenland1999causal} introduce the ``M-Graph'' and the problem of conditioning on unblocked colliders. The issue was vigorously debated in a series of articles and replies in {\it Statistics in Medicine} between 2007 and 2009. \cite{rubin2007design} suggested that all available pre-treatment covariates should be included in the conditioning set of any observational causal analysis, while others (\cite{shrier2008letter}; \cite{sjolander2009propensity}; \cite{pearl2009remarks}) contended that such a strategy could incur collider bias. 
\cite{rubin2009should} responded that unblocked colliders are a stylized problem that has few practical ramifications. This exchange in turn motivated further research, including \cite{ding2015adjust}, \cite{rohde2019bayesian}, and \cite{cinelli2020crash}. Here, we observed that should colliders appear in a set of control variables --- along with the associated blocking variables ---  regularization can unintentionally induce collider bias, revealing that colliders are not only a problem when their parents are unobserved. In particular, regularized regression approaches will struggle with colliders that are blocked by only a propensity-side ancestor. 
Additionally, Section \ref{pseudoCollider} demonstrated that composite features that combine non-collider variables can ``feature engineer'' a pseudo-collider; how likely this is to occur in practice for particular supervised learning algorithms is an interesting open question.

\subsubsection*{Conditional unconfoundedness versus mean conditional unconfoundedness.} In a discussion of \cite{angrist1996identification}, Heckman \citep{heckman1996identification} makes a point similar to the one we make in section \ref{meanCDAG}, that conditional unconfoundedness is stronger than necessary for estimating certain treatment effects. Angrist rejoins that identification based on ``functional form'' is undesirable. Here, we have taken the perspective of Heckman, as mean conditional unconfoundedness is the key notion for defining the principal deconfounding function, so it is perhaps worthwhile to unpack why. Our interest was in understanding the conditions according to which a particular set of control variables would yield a valid stratification estimator. From this perspective, a more {\em specific} assumption is {\em weaker} than a more general one: Conditional unconfoundedness implies mean conditional unconfoundedness, but not the other way around. It is the specificity of the {\em estimand} that permits the weaker (more general) assumption on the DGP. As we explored in section \ref{meanCDAG}, mean conditional unconfoundedness does not permit estimation of quantile treatment effects. In order for mean conditional unconfoundedness to license estimation of quantile treatment effects, one would need to impose additional restrictions on the DGP, such as a fixed distributional shape around the unconfounded mean. But that is not our suggestion (nor do we believe it was Heckman's).

Interestingly, this distinction between conditional unconfoundedness and mean conditional unconfoundedness is at the heart of the the difference between general causal diagrams and more traditional path analysis. By focusing on correlations, the path diagram must only respect the mean causal relationships. Sometimes this is described by saying that path analysis ``has a structural model, but no measurement model'' (Wikipedia). \\

Additionally, a number of elementary, but easily-overlooked, facts were clarified: regression, propensity score weighting (and, {\em a fortiori}, double robust estimators based thereon) are identical in the case of discrete covariates (cf. lemma \ref{ipw_strat}); CDAGs are non-unique (cf. Section \ref{transformedCDAG}), and instrumental and prognostic variable designations are inherently contingent (cf. Section \ref{noncausalSEM}).

\subsection{Methodological ecumenicalism}\label{ecumenicalism}
In section \ref{equivalence}, it was shown that the potential outcomes, CDAG, and exogenous errors definitions of conditional unconfoundedness are substantively equivalent. This result allows us to conveniently move between the conventions of these alternative frameworks, which implicitly emphasize distinct aspects of the problem they all address --- estimating treatment effects from data. 

For example, the causal graph approach reminds us that sets of valid control variables are not unique and, consequently, we must not speak of {\em the} propensity score, but rather {\em a} propensity score and, perhaps, many candidate propensity scores (cf. section \ref{propensity}). This observation is fundamental to understanding how regularization will impact bias due to feature selection on graphs including colliders and instruments.

The potential outcomes approach reminds us that the exogenous errors need not be common among the treatment arms (cf. figure \ref{graph1}). More generally, because the potential outcome notation is intrinsically individualized, it emphasizes the idea that some individuals in a population may have distinct causal diagrams; in particular, some arrows may not appear in every individual's graph. This is not at odds with the graphical formalism; rather it emerges simply because the graph alone does not fully determine the data generating process. In this paper, this distinction is not particularly important, but in estimation techniques relying on instrumental variables, it becomes critical \citep{angrist1996identification}. 

From the exogeneous errors approach, we are reminded that full conditional unconfoundedness is not actually necessary to estimate particular causal effects (cf. section \ref{meanCDAG}); we leverage this result in defining the principal deconfounding function. 

Synthesizing the three methods also clarifies common misunderstandings that can occur when operating solely within a single framework; for example, a mean regression model with exogeneous additive errors need not be structural (e.g., causal) in all of its arguments --- rather, the exogeneity of the errors narrowly licenses a causal interpretation in the treatment variable (cf. section \ref{noncausalSEM}).

\subsection{On discrete covariates with finite support}
The approach in this paper has been to consider stratification estimators in the case of discrete control variables with finite support. Discrete covariates are both common in practice (indeed, more common than continuous covariates) and pedagogically illuminating, and therefore worthy of careful study. We are aware that not everyone agrees; we read in the textbook of Imbens and Rubin (Section 12.2.2):
\small
\begin{quote}
If...we view the covariates as having a discrete
distribution with finite support, the implication of unconfoundedness is simply that one
should stratify by the values of the covariates. In that case there will be, with high probability,
in sufficiently large samples, both treated and control units with the exact same
values of the covariates. In this way we can immediately remove all biases arising from
differences between covariates, and many adjustment methods will give similar, or even
identical, answers. \\

However, as we stated before, this case rarely occurs in practice. In
many applications it is not feasible to stratify fully on all covariates, because too many
strata would have only a single unit. \\

The differences between various adjustment methods
arise precisely in such settings where it is not feasible to stratify on all values of
the covariates, and mathematically these differences are most easily analyzed in settings
with random samples from large populations using effectively continuous distributions
for the covariates...[Therefore] for the purpose of discussing various frequentist approaches to estimation and inference
under unconfoundedness...it is helpful to view the covariates as having been randomly drawn from an approximately
continuous distribution. 
\end{quote}
\normalsize

To paraphrase, the two main premises of this quote are: a) confounding --- and, more specifically, {\em de}confounding --- is relatively easy to understand in the case of discrete covariates with finite support, and b) complete stratification is infeasible in many applications. We agree with these statements. But the conclusion --- that the stylized setting of continuous covariates is therefore better suited to studying statistical methods for causal inference --- does not necessarily follow. Indeed, we employ a different stylized mathematical assumption --- that each strata has at least one treated-control contrast --- and find that, even in that case, bias variance trade-offs emerge. More importantly, these trade-offs can be studied directly, without resorting to asymptotic arguments, which may be untrustworthy guides to a method's operating characteristics in practice. For example, \cite{hahn2004functional} concludes that foreknowledge of which variables are instruments is asymptotically irrelevant for regression estimators of average treatment effects. As we have seen in Section \ref{examples} of this paper, being able to distinguish instruments from confounders is certainly relevant for finite-sample performance.

\subsection{Relationship to semi-supervised learning}
This paper considers the problem of feature selection for causal effect estimation when a propensity function is available, but a causal diagram is not. This assumption is of course implausible in many practical scenarios, although there are cases where it may be approximately true. For example, suppose that a researcher has a dataset with $n$ complete observations of $(X, Z, Y)$ and $m$ ``partially observed" samples, where $m \gg n$. Partial samples of $(X,Z)$ pairs could be used to more accurately estimate $\pi(X)$, bringing their applied problem closer to the setting studied above. Similarly, partial samples on $(X, Z = 0, Y)$ could be used to better estimate $\mu(X)$, which is particularly useful in the situation described in Section \ref{split_sample}. Such scenarios may be plausible in electronic health records, for instance, in which a treatment (say, a new blood pressure medicine) is rarely administered but an outcome (say, blood pressure) is very commonly measured. 

The idea of using large auxiliary datasets is common in machine learning, where it is known as semi-supervised learning \citep{zhu2009introduction, belkin2006manifold, liang2007use}. 
Using unlabeled data to estimate a propensity function in conjunction with machine learning or other regularization methods represent an exciting application of semi-supervised learning to the problem of causal effect estimation. 
While it is often easier to formalize and motivate the use of auxiliary data for prediction, rather than estimation, this paper shows that there is a role for function estimation techniques in machine learning causal inference.

\newpage

\section*{Acknowledgements}

This work was partially supported by NSF Grant DMS-1502640.

\bibliographystyle{imsart-nameyear} 
\bibliography{semi-supervised-propensity}

\begin{appendix}
\label{appendix-var}

\section{Proof of Theorem \ref{theorem2}} \label{appA}

We consider a sample of $n$ i.i.d. observations of $(X, Y, Z)$ from the data generating process in Equation \ref{dgp_equation}.
We assume that there exists a function $\lambda$ defined on $\mathcal{X}$ such that the ATE is identified conditional on $\lambda(X)$. 
We assume that $\lvert \lambda(X) \rvert = J < K = \lvert \mathcal{X} \rvert$ so that the unique values of $\lambda(X)$ define a non-trivial ``coarsening" of $X$. 
Consider a function $s(X)$ such that there exists at least one pair $x, x' \in \mathcal{X}$ such that 
$s(x) \neq s(x')$ while $\lambda(x) = \lambda(x')$. We assume that $s(X)$ also identifies the ATE, so that conditioning on 
$s(X)$ does not induce collider bias.

For each $j \in \lambda(\mathcal{X})$, 
there exist $m(j) \geq 1$ unique values of $s(\mathcal{X})$, which we denote as $\left\{j_1, \dots, j_m \right\}$. By the definition 
of $s(X)$, there exists at least one $j \in \lambda(\mathcal{X})$ such that $m(j) > 1$.

We define two stratification estimators of the ATE as follows:
\begin{equation*}
\begin{aligned}
\bar{\tau}^{\lambda}_{strat} &= \sum_{j \in \lambda(\mathcal{X})} \frac{N_j}{n} \left( \bar{Y}_{j,1} - \bar{Y}_{j, 0} \right)\\
N_j &= \sum_{i=1}^n \mathbf{1}\left\{\lambda(X_{i}) = j\right\}\\
N_{j, 1} &= \sum_{i=1}^n \mathbf{1}\left\{\lambda(X_{i}) = j\right\} \mathbf{1}\left(Z_{i} = 1\right\}\\
N_{j, 0} &= \sum_{i=1}^n \mathbf{1}\left\{\lambda(X_{i}) = j\right\} \mathbf{1}\left(Z_{i} = 0\right\}\\
\bar{Y}_{j,1} &= \frac{1}{N_{j, 1}} \sum_{i=1}^n Y_i \mathbf{1}\left\{\lambda(X_{i}) = j\right\} \mathbf{1}\left(Z_{i} = 1\right\}\\
\bar{Y}_{j,0} &= \frac{1}{N_{j, 0}} \sum_{i=1}^n Y_i \mathbf{1}\left\{\lambda(X_{i}) = j\right\} \mathbf{1}\left(Z_{i} = 0\right\}
\end{aligned}
\;\;\;\;\;
\begin{aligned}
\bar{\tau}^{s}_{strat} &= \sum_{j \in \lambda(\mathcal{X})} \left( \sum_{\ell=1}^{m(j)} \frac{N_{j\ell}}{n} \left( \bar{Y}_{j\ell,1} - \bar{Y}_{j\ell, 0} \right)\right)\\
N_{j\ell} &= \sum_{i=1}^n \mathbf{1}\left\{s(X_{i}) = j_{\ell}\right\}\\
N_{j\ell, 1} &= \sum_{i=1}^n \mathbf{1}\left\{s(X_{i}) = j_{\ell}\right\} \mathbf{1}\left\{Z_{i} = 1\right\}\\
N_{j\ell, 0} &= \sum_{i=1}^n \mathbf{1}\left\{s(X_{i}) = j_{\ell}\right\} \mathbf{1}\left\{Z_{i} = 0\right\}\\
\bar{Y}_{j\ell,1} &= \frac{1}{N_{j\ell, 1}} \sum_{i=1}^n Y_{i} \mathbf{1}\left\{s(X_{i}) = j_{\ell}\right\} \mathbf{1}\left(Z_{i} = 1\right\}\\
\bar{Y}_{j\ell,0} &= \frac{1}{N_{j\ell, 0}} \sum_{i=1}^n Y_{i} \mathbf{1}\left\{s(X_{i}) = j_{\ell}\right\} \mathbf{1}\left(Z_{i} = 0\right\}
\end{aligned}
\end{equation*}

Now, we consider a $j \in \lambda(\mathcal{X})$ with $m(j) > 1$.
We introduce some notation. 
\begin{equation*}
\begin{aligned}
\mu_{j,1} &= \E\left(Y \mid \lambda(X) = j, Z = 1\right)\\
\mu_{j,0} &= \E\left(Y \mid \lambda(X) = j, Z = 0\right)\\
\sigma^2_{j,1} &= \V\left(Y \mid \lambda(X) = j, Z = 1\right)\\
\sigma^2_{j,0} &= \V\left(Y \mid \lambda(X) = j, Z = 0\right)\\
\end{aligned}\;\;\;\;\;
\begin{aligned}
\mu_{j\ell,1} &= \E\left(Y \mid s(X) = j\ell, Z = 1\right)\\
\mu_{j\ell,0} &= \E\left(Y \mid s(X) = j\ell, Z = 0\right)\\
\sigma^2_{j\ell,1} &= \V\left(Y \mid s(X) = j\ell, Z = 1\right)\\
\sigma^2_{j\ell,0} &= \V\left(Y \mid s(X) = j\ell, Z = 0\right)\\
\end{aligned}
\end{equation*}

By the law of iterated expectations and the law of total variance, it follows that 
\begin{equation*}
\begin{aligned}
\mu_{j,1} &= \E\left(\E\left(Y \mid s(X) = j, Z = 1\right) \mid \lambda(X) = j, Z = 1\right) = \E\left(\mu_{j\ell,1} \mid \lambda(X) = j, Z = 1\right)\\
\sigma^2_{j,1} &= \E\left(\V\left(Y \mid s(X) = j, Z = 1\right) \mid \lambda(X) = j, Z = 1\right) + \V\left(\E\left(Y \mid s(X) = j, Z = 1\right) \mid \lambda(X) = j, Z = 1\right)\\
&= \E\left(\sigma^2_{j\ell,1} \mid \lambda(X) = j, Z = 1\right) + \V\left(\mu_{j\ell, 1} \mid \lambda(X) = j, Z = 1\right)
\end{aligned}
\end{equation*}

We denote
\begin{equation*}
\begin{aligned}
\bar{\mu}_{j, 1} &= \E\left(\mu_{j\ell,1} \mid \lambda(X) = j, Z = 1\right) = \mu_{j, 1}\\
\bar{\sigma}^2_{j, 1} &= \E\left(\sigma^2_{j\ell,1} \mid \lambda(X) = j, Z = 1\right)\\
\mathbf{v}\left( \mu_{j\ell, 1} \right) &= \V\left(\mu_{j\ell, 1} \mid \lambda(X) = j, Z = 1\right)
\end{aligned}
\end{equation*}

Conditioning on $\mathbf{N} = \left\{N_{j1, 1}, \dots, N_{jm, 1}, N_{j1, 0}, \dots, N_{jm, 0}\right\}$, we see
\begin{equation*}
\begin{aligned}
\V\left(\sum_{\ell=1}^{m(j)} N_{j\ell} \bar{Y}_{j\ell,1} \mid \mathbf{N} \right) &= \sum_{\ell=1}^{m(j)} N_{j\ell}^2 \V\left(\bar{Y}_{j\ell,1}\right) = \sum_{\ell=1}^{m(j)} N_{j\ell}^2 \frac{\sigma^2_{j\ell,1}}{N_{j\ell, 1}} = \sum_{\ell=1}^{m(j)} \frac{(N_{j\ell, 1}+N_{j\ell, 0})^2}{N_{j\ell, 1}} \sigma^2_{j\ell,1}\\
\V\left(N_j \bar{Y}_{j,1} \mid \mathbf{N} \right) &= N_j^2 \V\left(\bar{Y}_{j,1}\right) = N_j^2 \frac{\sigma^2_{j\ell,1}}{N_{j,1}} = \frac{(N_{j,1}+N_{j,0})^2}{N_{j,1}} \sigma^2_{j\ell,1} 
= \frac{(\sum_{\ell=1}^{m(j)} (N_{j\ell,1}+N_{j\ell,0}))^2}{\sum_{\ell=1}^{m(j)} N_{j\ell,1}} \sigma^2_{j\ell,1} \\ 
&= \frac{(\sum_{\ell=1}^{m(j)} (N_{j\ell,1}+N_{j\ell,0}))^2}{\sum_{\ell=1}^{m(j)} N_{j\ell,1}} \left(\bar{\sigma}^2_{j, 1} + \mathbf{v}\left( \mu_{j\ell, 1} \right) \right)\\
\end{aligned}
\end{equation*}

and

\begin{equation*}
\begin{aligned}
\E\left(\sum_{\ell=1}^{m(j)} N_{j\ell} \bar{Y}_{j\ell,1} \mid \mathbf{N} \right) &= \sum_{\ell=1}^{m(j)} N_{j\ell} \E\left(\bar{Y}_{j\ell,1}\right) = \sum_{\ell=1}^{m(j)} N_{j\ell} \E\left(Y \mid s(X) = j\ell, Z = 1\right) = \sum_{\ell=1}^{m(j)} N_{j\ell} \mu_{j\ell,1}\\
\E\left(N_j \bar{Y}_{j,1} \mid \mathbf{N} \right) &= N_j \E\left(\bar{Y}_{j,1}\right) = N_j \E\left(Y \mid \lambda(X) = j, Z = 1\right) = N_j \mu_{j, 1} = \left(\sum_{\ell=1}^{m(j)} N_{j\ell,1}\right) \bar{\mu}_{j, 1}\\ 
\end{aligned}
\end{equation*}

Thus, we have that 
\begin{equation*}
\begin{aligned}
\V\left(\sum_{\ell=1}^{m(j)} N_{j\ell} \bar{Y}_{j\ell,1}\right) &= \E\left(\V\left(\sum_{\ell=1}^{m(j)} N_{j\ell} \bar{Y}_{j\ell,1} \mid \mathbf{N} \right)\right) + \V\left(\E\left(\sum_{\ell=1}^{m(j)} N_{j\ell} \bar{Y}_{j\ell,1} \mid \mathbf{N} \right)\right)\\
&= \E\left(\sum_{\ell=1}^{m(j)} \frac{(N_{j\ell, 1}+N_{j\ell, 0})^2}{N_{j\ell, 1}} \sigma^2_{j\ell,1}\right) + \V\left(\sum_{\ell=1}^{m(j)} N_{j\ell} \mu_{j\ell,1}\right)\\
\V\left(N_j \bar{Y}_{j,1}\right) &= \E\left(\V\left(N_j \bar{Y}_{j,1} \mid \mathbf{N} \right)\right) + \V\left(\E\left(N_j \bar{Y}_{j,1} \mid \mathbf{N} \right)\right)\\
&= \E\left(\frac{(\sum_{\ell=1}^{m(j)} (N_{j\ell,1}+N_{j\ell,0}))^2}{\sum_{\ell=1}^{m(j)} N_{j\ell,1}} \left(\bar{\sigma}^2_{j, 1} + \mathbf{v}\left( \mu_{j\ell, 1} \right) \right)\right) + \V\left(\left(\sum_{\ell=1}^{m(j)} N_{j\ell,1}\right) \bar{\mu}_{j, 1}\right)
\end{aligned}
\end{equation*}

We can broaden this to entire set of observations:
\begin{equation*}
\begin{aligned}
\V\left(\bar{\tau}^{s}_{strat}\right) &= \V\left(\sum_{j \in \lambda(\mathcal{X})} \sum_{\ell=1}^{m(j)} \frac{N_{j\ell}}{n} \left( \bar{Y}_{j\ell,1} - \bar{Y}_{j\ell,0} \right)\right)\\
&= \frac{1}{n^2} \left[ \E\left( \sum_{j \in \lambda(\mathcal{X})} \sum_{\ell=1}^{m(j)} (N_{j\ell, 1}+N_{j\ell, 0})^2 \left(\frac{\sigma^2_{j\ell,1}}{N_{j\ell, 1}} + \frac{\sigma^2_{j\ell,0}}{N_{j\ell, 0}} \right) \right) + \V\left(\sum_{j \in \lambda(\mathcal{X})} \sum_{\ell=1}^{m(j)} N_{j\ell} \left( \mu_{j\ell,1} - \mu_{j\ell,0}\right)\right) \right]\\
\V\left(\bar{\tau}^{\lambda}_{strat}\right) &= \V\left(\sum_{j \in \lambda(\mathcal{X})} \frac{N_j}{n} \left( \bar{Y}_{j,1} - \bar{Y}_{j,0} \right) \right)\\
&= \frac{1}{n^2}\E\left(\sum_{j \in \lambda(\mathcal{X})} \left(\sum_{\ell=1}^{m(j)} (N_{j\ell,1}+N_{j\ell,0})\right)^2 \left( \frac{\bar{\sigma}^2_{j, 1} + \mathbf{v}\left( \mu_{j\ell, 1} \right)}{\sum_{\ell=1}^{m(j)} N_{j\ell,1}} + \frac{\bar{\sigma}^2_{j, 0} + \mathbf{v}\left( \mu_{j\ell, 0} \right)}{\sum_{\ell=1}^{m(j)} N_{j\ell,0}} \right)\right)\\
&\;\;\;\;\; + \frac{1}{n^2}\V\left(\sum_{j \in \lambda(\mathcal{X})}\left(\sum_{\ell=1}^{m(j)} N_{j\ell}\right) \left( \bar{\mu}_{j, 1} - \bar{\mu}_{j, 0} \right) \right)
\end{aligned}
\end{equation*}

\subsection{Case I: equal sub-strata means and variances}  \label{appA1}

If $\sigma^2_{j\ell,i} = \bar{\sigma}^2_{j,i}$ and $\mu_{j\ell,i} = \bar{\mu}_{j,i}$ for all $\ell, i \in \left\{1, \dots, m(j)\right\} \times \left\{0, 1\right\}$, then $\mathbf{v}\left( \mu_{j\ell, i} \right) = 0$ and the variance and expectation 
terms factor out of both expressions and we are left to compare the nonlinear sums of the strata cell sizes. 
Focusing on $\lambda(X) = j$ and $Z = 1$, we show by induction that 
$\sum_{\ell=1}^{m(j)} \frac{(N_{j\ell, 1}+N_{j\ell, 0})^2}{N_{j\ell, 1}} \geq \frac{(\sum_{\ell=1}^{m(j)} (N_{j\ell,1}+N_{j\ell,0}))^2}{\sum_{\ell=1}^{m(j)} N_{j\ell,1}}$ for 
positive cell sizes $N_{j\ell, 1}$.

For the base case, suppose that $m(j) = 2$ so that $\sum_{\ell=1}^{m(j)} \frac{(N_{j\ell, 1}+N_{j\ell, 0})^2}{N_{j\ell, 1}} = \frac{(N_{j1, 1}+N_{j1, 0})^2}{N_{j1, 1}} + \frac{(N_{j2, 1}+N_{j2, 0})^2}{N_{j2, 1}}$ 
and $\frac{(\sum_{\ell=1}^{m(j)} (N_{j\ell,1}+N_{j\ell,0}))^2}{\sum_{\ell=1}^{m(j)} N_{j\ell,1}} = \frac{((N_{j1,1}+N_{j1,0}) + (N_{j2,1}+N_{j2,0}))^2}{N_{j1,1} + N_{j2,1}}$.
For ease of exposition, we let
\begin{equation*}
\begin{aligned}
a &= N_{j_1, 1}\\
c &= N_{j_2, 1}
\end{aligned}\;\;\;\;\;\;\;
\begin{aligned}
b &= N_{j_1, 0}\\
d &= N_{j_2, 0}
\end{aligned}
\end{equation*}
and we thus compare $\frac{(a + b)^2}{a} + \frac{(c + d)^2}{c}$ with $\frac{(a + b + c + d)^2}{a + c}$ where $a,b,c,d > 0$

\begin{equation*}
\begin{aligned}
0 &\leq \left[c(a+b) - a(c+d)\right]^2\\
0 &\leq c^2(a+b)^2 + a^2 (c+d)^2 - 2ac(a+b)(c+d)\\
2ac(a+b)(c+d) &\leq c^2(a+b)^2 + a^2 (c+d)^2\\
ac(a+b)^2 + 2ac(a+b)(c+d) + ac(c+d)^2 &\leq ac(a+b)^2 + c^2(a+b)^2 + a^2 (c+d)^2 + ac(c+d)^2\\
ac\left[(a+b) + (c+d)\right]^2 &\leq \left[c(a+b)^2 + a(c+d)^2\right] (a+c)\\
\frac{\left[(a+b) + (c+d)\right]^2}{(a+c)} &\leq \frac{\left[c(a+b)^2 + a(c+d)^2\right]}{ac}\\
\frac{\left[(a+b) + (c+d)\right]^2}{(a+c)} &\leq \frac{(a+b)^2}{a} + \frac{(c+d)^2}{c}
\end{aligned}
\end{equation*}

Now, we proceed to the induction case. Assume that $\sum_{\ell=1}^{m(j)} \frac{(N_{j\ell, 1}+N_{j\ell, 0})^2}{N_{j\ell, 1}} \geq \frac{(\sum_{\ell=1}^{m(j)} (N_{j\ell,1}+N_{j\ell,0}))^2}{\sum_{\ell=1}^{m(j)} N_{j\ell,1}}$. 
We consider a new stratum, indexed $m(j) + 1$ and we see that 
\begin{equation*}
\begin{aligned}
\sum_{\ell=1}^{m(j)} \frac{(N_{j\ell, 1}+N_{j\ell, 0})^2}{N_{j\ell, 1}} &\geq \frac{(\sum_{\ell=1}^{m(j)} (N_{j\ell,1}+N_{j\ell,0}))^2}{\sum_{\ell=1}^{m(j)} N_{j\ell,1}}\\
\sum_{\ell=1}^{m(j)} \frac{(N_{j\ell, 1}+N_{j\ell, 0})^2}{N_{j\ell, 1}} + \frac{(N_{m(j)+1, 1}+N_{m(j)+1, 0})^2}{N_{m(j)+1, 1}} &\geq \frac{(\sum_{\ell=1}^{m(j)} (N_{j\ell,1}+N_{j\ell,0}))^2}{\sum_{\ell=1}^{m(j)} N_{j\ell,1}} + \frac{(N_{m(j)+1, 1}+N_{m(j)+1, 0})^2}{N_{m(j)+1, 1}}\\
\sum_{\ell=1}^{m(j)+1} \frac{(N_{j\ell, 1}+N_{j\ell, 0})^2}{N_{j\ell, 1}} &\geq \frac{(\sum_{\ell=1}^{m(j)} (N_{j\ell,1}+N_{j\ell,0}))^2}{\sum_{\ell=1}^{m(j)} N_{j\ell,1}} + \frac{(N_{m(j)+1, 1}+N_{m(j)+1, 0})^2}{N_{m(j)+1, 1}}
\end{aligned}
\end{equation*}

Letting
\begin{equation*}
\begin{aligned}
a &= \sum_{\ell=1}^{m(j)} N_{j\ell,1}\\
c &= N_{m(j)+1, 1}
\end{aligned}\;\;\;\;\;\;\;
\begin{aligned}
b &= \sum_{\ell=1}^{m(j)} N_{j\ell,0}\\
d &= N_{m(j)+1, 0}
\end{aligned}
\end{equation*}
we see that 
\begin{equation*}
\begin{aligned}
\frac{(\sum_{\ell=1}^{m(j)} (N_{j\ell,1}+N_{j\ell,0}))^2}{\sum_{\ell=1}^{m(j)} N_{j\ell,1}} + \frac{(N_{m(j)+1, 1}+N_{m(j)+1, 0})^2}{N_{m(j)+1, 1}} = \frac{(a+b)^2}{a} + \frac{(c+d)^2}{c} &\geq \frac{(a+b + c + d)^2}{a+c}\\
&= \frac{(\sum_{\ell=1}^{m(j)+1} (N_{j\ell,1}+N_{j\ell,0}))^2}{\sum_{\ell=1}^{m(j)+1} N_{j\ell,1}}
\end{aligned}
\end{equation*}
and the relationship follows by induction.

Thus, when $\E\left(Y \mid s(X) = j\ell, Z=1\right)$ and $\V\left(Y \mid s(X) = j\ell, Z=1\right)$ are constant for all $j \in \lambda(\mathcal{X})$ and $\ell \in \left\{1, \dots, m(j)\right\}$, 
it follows that $\V\left( \bar{\tau}^{\lambda}_{strat} \right) \leq \V\left( \bar{\tau}^{s}_{strat} \right)$. 

We let
\begin{equation*}
\begin{aligned}
\alpha_1 &= \frac{1}{n^2} \E\left( \sum_{j \in \lambda(\mathcal{X})} \bar{\sigma}^2_{j,1} \left( \left( \sum_{\ell=1}^{m(j)} \frac{(N_{j\ell, 1}+N_{j\ell, 0})^2}{N_{j\ell, 1}} \right) - \left( \frac{ \left( \sum_{\ell=1}^{m(j)} (N_{j\ell,1}+N_{j\ell,0})\right)^2}{\sum_{\ell=1}^{m(j)} N_{j\ell,1}}  \right) \right) \right)\\
\alpha_0 &= \frac{1}{n^2} \E\left( \sum_{j \in \lambda(\mathcal{X})} \bar{\sigma}^2_{j,0} \left( \left( \sum_{\ell=1}^{m(j)} \frac{(N_{j\ell, 1}+N_{j\ell, 0})^2}{N_{j\ell, 0}} \right) - \left( \frac{ \left( \sum_{\ell=1}^{m(j)} (N_{j\ell,1}+N_{j\ell,0})\right)^2}{\sum_{\ell=1}^{m(j)} N_{j\ell,0}}  \right) \right) \right)
\end{aligned}
\end{equation*}
so that $\alpha = \alpha_1 + \alpha_0$ is the degree to which $\V\left( \bar{\tau}^{\lambda}_{strat} \right) \leq \V\left( \bar{\tau}^{s}_{strat} \right)$ when all substrata of $s(X)$ are constant. We see that this 
depends on $\bar{\sigma}^2_{j,1}$, $\bar{\sigma}^2_{j,0}$, and the distribution of $N_{j\ell, i}$ for each $j$ and $i$.

\subsection{Case II: unequal strata means or variances} \label{appA2}

We partition $\lambda(\mathcal{X})$ into three sets: 
\begin{itemize}
\item $\mathcal{A}$: $m(a) = 1$ for all $a \in \mathcal{A}$
\item $\mathcal{B}$: for all $b \in \mathcal{B}$, either
\begin{itemize}
\item $m(b) > 1$
\item $\sigma^2_{b\ell,i}$ and $\mu_{b\ell,i}$ are constant for all $\ell \in m(b)$, $i \in \left\{ 0, 1 \right\}$
\end{itemize}
\item $\mathcal{C}$: For all $c \in \mathcal{C}$
\begin{itemize}
\item $m(c) > 1$, and
\item $\sigma^2_{c\ell,i}$ or $\mu_{c\ell,i}$ is non-constant for some $i \in \left\{ 0, 1 \right\}$
\end{itemize}
\end{itemize}

In the previous section $\mathcal{C} = \varnothing$, so that the variance comparison is uncomplicated: $s(X)$ was ``overstratified" relative to $\lambda(X)$ and 
as a result $\V\left( \bar{\tau}^{\lambda}_{strat} \right) \leq \V\left( \bar{\tau}^{s}_{strat} \right)$.

In this case, $\mathcal{C} \neq \varnothing$ so that there may be variance reduction to stratification (see \cite{lohr2019sampling} for one reference). We define several terms
\begin{equation*}
\begin{aligned}
\beta_1 &= \frac{1}{n^2} \E\left( \sum_{b \in \mathcal{B}} \bar{\sigma}^2_{b,1} \left( \left( \sum_{\ell=1}^{m(b)} \frac{(N_{b\ell, 1}+N_{b\ell, 0})^2}{N_{b\ell, 1}} \right) - \left( \frac{ \left( \sum_{\ell=1}^{m(b)} (N_{b\ell,1}+N_{b\ell,0})\right)^2}{\sum_{\ell=1}^{m(b)} N_{b\ell,1}}  \right) \right) \right)\\
\beta_0 &= \frac{1}{n^2} \E\left( \sum_{b \in \mathcal{B}} \bar{\sigma}^2_{b,0} \left( \left( \sum_{\ell=1}^{m(b)} \frac{(N_{b\ell, 1}+N_{b\ell, 0})^2}{N_{b\ell, 0}} \right) - \left( \frac{ \left( \sum_{\ell=1}^{m(b)} (N_{b\ell,1}+N_{b\ell,0})\right)^2}{\sum_{\ell=1}^{m(b)} N_{b\ell,0}}  \right) \right) \right)\\
c_1 &= \frac{1}{n^2}\E\left(\sum_{c \in \mathcal{C}} \left(\sum_{\ell=1}^{m(c)} (N_{c\ell,1}+N_{c\ell,0})\right)^2 \left( \frac{\bar{\sigma}^2_{c, 1} + \mathbf{v}\left( \mu_{c\ell, 1} \right)}{\sum_{\ell=1}^{m(c)} N_{c\ell,1}}\right)\right) + \frac{1}{n^2}\V\left(\sum_{c \in \mathcal{C}}\left(\sum_{\ell=1}^{m(c)} N_{c\ell}\right) \left( \bar{\mu}_{c, 1} \right)\right)\\
c_0 &= \frac{1}{n^2}\E\left(\sum_{c \in \mathcal{C}} \left(\sum_{\ell=1}^{m(c)} (N_{c\ell,1}+N_{c\ell,0})\right)^2 \left( \frac{\bar{\sigma}^2_{c, 0} + \mathbf{v}\left( \mu_{c\ell, 0} \right)}{\sum_{\ell=1}^{m(c)} N_{c\ell,0}} \right)\right) + \frac{1}{n^2}\V\left(\sum_{c \in \mathcal{C}}\left(\sum_{\ell=1}^{m(c)} N_{c\ell}\right) \left( - \bar{\mu}_{c, 0} \right)\right)\\
c_2 &= \frac{1}{n^2} \left[ \E\left( \sum_{c \in \mathcal{C}} \sum_{\ell=1}^{m(c)} (N_{c\ell, 1}+N_{c\ell, 0})^2 \left(\frac{\sigma^2_{c\ell,1}}{N_{c\ell, 1}} + \frac{\sigma^2_{c\ell,0}}{N_{c\ell, 0}} \right) \right) + \V\left(\sum_{c \in \mathcal{C}} \sum_{\ell=1}^{m(c)} N_{c\ell} \left( \mu_{c\ell,1} - \mu_{c\ell,0}\right)\right) \right]\\
\eta &= c_1 + c_0 - c_2\\
\nu &= \beta_1 + \beta_0
\end{aligned}
\end{equation*}

We see that $\V\left( \bar{\tau}^{\lambda}_{strat} \right) > \V\left( \bar{\tau}^{s}_{strat} \right)$ if $\eta > \nu$, 
where $\eta$ refers to the reduction in variance by stratifying on $s(X)$ within $\mathcal{C}$ and $\nu$ refers to 
the increase in variance by ``over-stratifying" on $\mathcal{B}$.

\section{Proof of Theorem \ref{theorem3}} \label{appB}

We consider a sample of $n$ i.i.d. observations of $(X, Y, Z)$ from the data generating process in Equation \ref{dgp_equation}.
Let $\pi(x) = \Prob\left( Z = 1 \mid X = x \right)$ refer to the ``true propensity function" and $\hat{\pi}(x) = N_{x,1} / N_x$ refer to the ``empirical propensity function." 

We define two stratification estimators of the ATE as follows:
\begin{equation*}
\begin{aligned}
\bar{\tau}^{\hat{\pi}}_{strat} &= \sum_{x \in \mathcal{X}} \frac{N_x}{n} \left( \bar{Y}_{x,1} - \bar{Y}_{x, 0} \right)
\end{aligned}
\;\;\;\;\;
\begin{aligned}
\bar{\tau}^{\pi}_{strat} &= \sum_{x \in \mathcal{X}} \frac{N_x}{n} \left( \frac{\hat{\pi}(x)}{\pi(x)} \bar{Y}_{x,1} - \frac{1-\hat{\pi}(x)}{1-\pi(x)} \bar{Y}_{x,0} \right)
\end{aligned}
\end{equation*}
where $N_x$, $N_{x,1}$, $N_{x,0}$, $\bar{Y}_{x,1}$, and $\bar{Y}_{x,0}$ are defined similar to Appendix \ref{appA}:
\begin{equation*}
\begin{aligned}
N_x &= \sum_{i=1}^n \mathbf{1} \left(X_i = x\right)\\
N_{x,1} &= \sum_{i=1}^n \mathbf{1} \left(X_i = x\right) \mathbf{1} \left(Z_i = 1\right)\\
N_{x,0} &= \sum_{i=1}^n \mathbf{1} \left(X_i = x\right) \mathbf{1} \left(Z_i = 0\right)\\
\bar{Y}_{x,1} &= \frac{1}{N_{x,1}} \sum_{i=1}^n \mathbf{1} Y_i \left(X_i = x\right) \mathbf{1} \left(Z_i = 1\right)\\
\bar{Y}_{x,0} &= \frac{1}{N_{x,0}} \sum_{i=1}^n \mathbf{1} Y_i \left(X_i = x\right) \mathbf{1} \left(Z_i = 0\right)
\end{aligned}
\end{equation*}

Now, we consider an arbitrary $x \in \mathcal{X}$ with $N_{x,1} > 0$ and $N_{x,0} > 0$.
We introduce some notation. 
\begin{equation*}
\begin{aligned}
\mu_{x,1} &= \E\left(Y \mid X = x, Z = 1\right)\\
\mu_{x,0} &= \E\left(Y \mid X = x, Z = 0\right)\\
\sigma^2_{x,1} &= \V\left(Y \mid X = x, Z = 1\right)\\
\sigma^2_{x,0} &= \V\left(Y \mid X = x, Z = 0\right)\\
\end{aligned}
\end{equation*}

Conditioning on $\mathbf{N} = \left\{N_{x, 1}, N_{x, 0}: x \in \mathcal{X}\right\}$, we see
\begin{equation*}
\begin{aligned}
\V\left(\sum_{x \in \mathcal{X}} N_{x} \bar{Y}_{x,1} \mid \mathbf{N} \right) &= \sum_{x \in \mathcal{X}} N_{x}^2 \V\left(\bar{Y}_{x,1}\right) = \sum_{x \in \mathcal{X}} N_{x}^2 \frac{\sigma^2_{x,1}}{N_{x, 1}} = \sum_{x \in \mathcal{X}} \frac{(N_{x,1} + N_{x,0})^2}{N_{x,1}} \sigma^2_{x,1}\\
\V\left(\sum_{x \in \mathcal{X}} N_{x} \frac{\hat{\pi}(x)}{\pi(x)} \bar{Y}_{x,1} \mid \mathbf{N} \right) &= \sum_{x \in \mathcal{X}} N_{x}^2 \left(\frac{\hat{\pi}(x)}{\pi(x)}\right)^2 \V\left(\bar{Y}_{x,1}\right) = \sum_{x \in \mathcal{X}} N_{x}^2 \left(\frac{\hat{\pi}(x)}{\pi(x)}\right)^2 \frac{\sigma^2_{x,1}}{N_{x, 1}}\\
& = \sum_{x \in \mathcal{X}} \left(\frac{\hat{\pi}(x)}{\pi(x)}\right)^2 \frac{(N_{x,1} + N_{x,0})^2}{N_{x,1}} \sigma^2_{x,1}\\
\end{aligned}
\end{equation*}

and

\begin{equation*}
\begin{aligned}
\E\left(\sum_{x \in \mathcal{X}} N_{x} \bar{Y}_{x,1} \mid \mathbf{N} \right) = \sum_{x \in \mathcal{X}} N_{x} \E\left(\bar{Y}_{x,1}\right) &= \sum_{x \in \mathcal{X}} N_{x} \E\left(Y \mid X = x, Z = 1\right) = \sum_{x \in \mathcal{X}} N_{x} \mu_{x,1},\\
\E\left(\sum_{x \in \mathcal{X}} N_{x} \frac{\hat{\pi}(x)}{\pi(x)} \bar{Y}_{x,1} \mid \mathbf{N} \right) &= \sum_{x \in \mathcal{X}} N_{x} \frac{\hat{\pi}(x)}{\pi(x)} \E\left(\bar{Y}_{x,1}\right)\\
& = \sum_{x \in \mathcal{X}} \frac{\hat{\pi}(x)}{\pi(x)} N_{x} \E\left(Y \mid X = x, Z = 1\right) \\
&= \sum_{x \in \mathcal{X}} \frac{\hat{\pi}(x)}{\pi(x)} N_{x} \mu_{x,1}.
\end{aligned}
\end{equation*}

Thus, we have that 
\begin{equation*}
\begin{aligned}
\V\left(\sum_{x \in \mathcal{X}} N_{x} \bar{Y}_{x,1}\right) &= \E\left(\V\left(\sum_{x \in \mathcal{X}} N_{x} \bar{Y}_{x,1} \mid \mathbf{N} \right)\right) + \V\left(\E\left(\sum_{x \in \mathcal{X}} N_{x} \bar{Y}_{x,1} \mid \mathbf{N} \right)\right)\\
&= \E\left(\sum_{x \in \mathcal{X}} \frac{(N_{x,1} + N_{x,0})^2}{N_{x,1}} \sigma^2_{x,1}\right) + \V\left(\sum_{x \in \mathcal{X}} N_{x} \mu_{x,1}\right)\\
\V\left(\sum_{x \in \mathcal{X}} N_{x} \frac{\hat{\pi}(x)}{\pi(x)} \bar{Y}_{x,1}\right) &= \E\left(\V\left(\sum_{x \in \mathcal{X}} N_{x} \frac{\hat{\pi}(x)}{\pi(x)} \bar{Y}_{x,1} \mid \mathbf{N} \right)\right) + \V\left(\E\left(\sum_{x \in \mathcal{X}} N_{x} \frac{\hat{\pi}(x)}{\pi(x)} \bar{Y}_{x,1} \mid \mathbf{N} \right)\right)\\
&= \E\left(\sum_{x \in \mathcal{X}} \left(\frac{\hat{\pi}(x)}{\pi(x)}\right)^2 \frac{(N_{x,1} + N_{x,0})^2}{N_{x,1}} \sigma^2_{x,1}\right) + \V\left(\sum_{x \in \mathcal{X}} \frac{\hat{\pi}(x)}{\pi(x)} N_{x} \mu_{x,1}\right)
\end{aligned}
\end{equation*}

We can broaden this to entire set of observations:
\begin{equation*}
\begin{aligned}
\V\left(\bar{\tau}^{\hat{\pi}}_{strat}\right) &= \V\left(\sum_{x \in \mathcal{X}} \frac{N_x}{n} \left( \bar{Y}_{x,1} - \bar{Y}_{x, 0} \right)\right)\\
&= \frac{1}{n^2} \left[  \E\left(\sum_{x \in \mathcal{X}} (N_{x,1} + N_{x,0})^2 \left(\frac{\sigma^2_{x,1}}{N_{x,1}} + \frac{\sigma^2_{x,0}}{N_{x,0}} \right)\right) + \V\left(\sum_{x \in \mathcal{X}} N_{x} \left( \mu_{x,1} - \mu_{x,0} \right)\right) \right]\\
&= \frac{1}{n^2} \left[  \E\left(\sum_{x \in \mathcal{X}} \left(\frac{N_{x,1}}{\left(\hat{\pi}(x)\right)^2} \sigma^2_{x,1} + \frac{N_{x,0}}{\left(1-\hat{\pi}(x)\right)^2} \sigma^2_{x,0} \right)\right) + \V\left(\sum_{x \in \mathcal{X}} N_x \left( \frac{\hat{\pi}(x)}{\hat{\pi}(x)} \mu_{x,1} - \frac{1-\hat{\pi}(x)}{1-\hat{\pi}(x)}\mu_{x,0} \right) \right) \right]\\
\V\left(\bar{\tau}^{\pi}_{strat}\right) &= \V\left(\sum_{x \in \mathcal{X}} \frac{N_x}{n} \left( \frac{\hat{\pi}(x)}{\pi(x)} \bar{Y}_{x,1} - \frac{1-\hat{\pi}(x)}{1-\pi(x)} \bar{Y}_{x, 0} \right)\right)\\
&= \frac{1}{n^2} \E\left(\sum_{x \in \mathcal{X}} (N_{x,1} + N_{x,0})^2 \left( \left(\frac{\hat{\pi}(x)}{\pi(x)}\right)^2 \frac{\sigma^2_{x,1}}{N_{x,1}} + \left(\frac{1-\hat{\pi}(x)}{1-\pi(x)}\right)^2 \frac{\sigma^2_{x,0}}{N_{x,0}} \right)\right)\\
&\;\;\;\;\;\; + \frac{1}{n^2} \V\left(\sum_{x \in \mathcal{X}} N_{x} \left( \frac{\hat{\pi}(x)}{\pi(x)} \mu_{x,1} - \frac{1-\hat{\pi}(x)}{1-\pi(x)} \mu_{x,0} \right)\right) \\
&= \frac{1}{n^2} \E\left(\sum_{x \in \mathcal{X}} \frac{N_{x,1}}{\left(\pi(x)\right)^2} \sigma^2_{x,1} + \frac{N_{x,0}}{\left(1-\pi(x)\right)^2} \sigma^2_{x,0} \right)\\
&\;\;\;\;\;\; + \frac{1}{n^2} \V\left(\sum_{x \in \mathcal{X}} N_{x} \left( \frac{\hat{\pi}(x)}{\pi(x)} \mu_{x,1} - \frac{1-\hat{\pi}(x)}{1-\pi(x)} \mu_{x,0} \right)\right) \\
\end{aligned}
\end{equation*}

\subsection{Case I: $\mu(x) = \tau(x) = 0$} \label{appB1}

First, consider the degenerate case in which $\mu(x) = \tau(x) = 0$ for all $x \in \mathcal{X}$. In this case, the second term vanishes 
and we are left to compare $\E\left(\sum_{x \in \mathcal{X}} \left(\frac{N_{x,1}}{\left(\hat{\pi}(x)\right)^2} \sigma^2_{x,1} + \frac{N_{x,0}}{\left(1-\hat{\pi}(x)\right)^2} \sigma^2_{x,0} \right)\right)$ 
with $\E\left(\sum_{x \in \mathcal{X}} \frac{N_{x,1}}{\left(\pi(x)\right)^2} \sigma^2_{x,1} + \frac{N_{x,0}}{\left(1-\pi(x)\right)^2} \sigma^2_{x,0} \right)$. 
Since we have defined $\hat{\pi}$ empirically, we have that $\E\left(\hat{\pi}(x)\right) = \pi(x)$. 
We also have, by iterated expectations that $\E(N_{x,1}) = \E(\E(N_{x,1} \mid N_{x})) = \E( N_x \pi(x) ) = \pi(x) \E( N_x )$. 

Now, define $g(x,y) = y^2 / x$, so that $g(N_{x,1}, N_{x}) = N_x^2 / N_{x,1} = N_{x,1} / \hat{\pi}(x)^2$.
$g(N_{x,1}, N_{x})$ can be shown to be a convex function as its Hessian is positive semidefinite (\cite{boyd2004convex}). 
Thus, by Jensen's inequality, 
\begin{equation*}
\begin{aligned}
\E\left(g(N_{x,1}, N_{x}) \right) &\geq g\left(\E\left(N_{x,1}, N_{x} \right)\right) \\
\E\left( \frac{N_x^2}{N_{x,1}} \right) &\geq g\left(\pi(x) N_x, N_{x} \right)\\
\E\left( \frac{N_{x,1}}{\hat{\pi}(x)^2} \right) &\geq \frac{N_x}{\pi(x)} = \frac{N_x\pi(x)}{\pi(x)^2} = \frac{\E\left(N_{x,1}\right)}{\pi(x)^2} = \E\left(\frac{N_{x,1}}{\pi(x)^2}\right)
\end{aligned}
\end{equation*}

Intuitively, this result shows that when there is ``nothing to estimate" in that $y$ is degenerate, the variance of the $\bar{\tau}^{\hat{\pi}}_{strat}$ estimator is 
greater than that of the $\bar{\tau}^{\pi}_{strat}$ estimator. 

\subsection{Case II: $\lvert \mu(x) \rvert + \lvert \tau(x) \rvert > \epsilon$ for some $x \in \mathcal{X}$ and for a data-dependent $\epsilon$} \label{appB2}

Setting aside this degenerate case, we first define 
\begin{equation*}
\begin{aligned}
a &= \E\left(\sum_{x \in \mathcal{X}} \left(\frac{N_{x,1}}{\left(\hat{\pi}(x)\right)^2} \sigma^2_{x,1} + \frac{N_{x,0}}{\left(1-\hat{\pi}(x)\right)^2} \sigma^2_{x,0} \right)\right) - 
\E\left(\sum_{x \in \mathcal{X}} \frac{N_{x,1}}{\left(\pi(x)\right)^2} \sigma^2_{x,1} + \frac{N_{x,0}}{\left(1-\pi(x)\right)^2} \sigma^2_{x,0} \right)
\end{aligned}
\end{equation*}

$a$ is nonnegative as demonstrated above. We assume that there exists at least one $x$ for which either or both of $\mu_{x,1} \neq 0$ and $\mu_{x,0} \neq 0$ is true.
Without loss of generality, we focus on the case in which there is \textit{exactly one} such $x$ (i.e., $\mu_{x,1} \neq 0$ and / or $\mu_{x,0} \neq 0$, while $\mu_{x',1} = \mu_{x',0} = 0$ for all $x' \in \mathcal{X} \setminus x$)
\begin{equation*}
\begin{aligned}
\V\left(\sum_{x \in \mathcal{X}} N_x \left( \frac{\hat{\pi}(x)}{\pi(x)} \mu_{x,1} - \frac{1-\hat{\pi}(x)}{1-\pi(x)}\mu_{x,0} \right) \right) &= \V\left(N_x \left( \frac{\hat{\pi}(x)}{\pi(x)} \mu_{x,1} - \frac{1-\hat{\pi}(x)}{1-\pi(x)}\mu_{x,0} \right) \right) \\
&= \V\left(N_x \left( \frac{\hat{\pi}(x)}{\pi(x)} \mu_{x,1} - \frac{1-\hat{\pi}(x)}{1-\pi(x)}\mu_{x,0} \right)\right)\\
&= \V\left(N_x \left( \frac{N_{x,1}}{N_x\pi(x)} \mu_{x,1} - \frac{N_{x,0}}{N_x(1-\pi(x))}\mu_{x,0} \right)\right)\\
&= \V\left(\frac{N_{x,1}}{\pi(x)} \mu_{x,1} - \frac{N_{x,0}}{(1-\pi(x))}\mu_{x,0} \right)\\
&= \V\left(\frac{N_{x,1}}{\pi(x)} \mu_{x,1} - \frac{N_{x,0}}{(1-\pi(x))}\mu_{x,0} \right)\\
\V\left(\sum_{x \in \mathcal{X}} N_x \left( \mu_{x,1} - \mu_{x,0}\right) \right) &= \V\left( N_x \left( \mu_{x,1} - \mu_{x,0}\right) \right)\\
\end{aligned}
\end{equation*}

We rewrite $\mu_{x,1} = \left(\mu_{x,1}-\mu_{x,0}\right) + \mu_{x,0}$ and see that 
\begin{equation*}
\begin{aligned}
\V\left(\sum_{x \in \mathcal{X}} N_x \left( \frac{\hat{\pi}(x)}{\pi(x)} \mu_{x,1} - \frac{1-\hat{\pi}(x)}{1-\pi(x)}\mu_{x,0} \right) \right) &= \V\left(\frac{N_{x,1}}{\pi(x)} \mu_{x,1} - \frac{N_{x,0}}{(1-\pi(x))}\mu_{x,0} \right)\\
&= \V\left(\frac{N_{x,1}}{\pi(x)} \left[ \left(\mu_{x,1}-\mu_{x,0}\right) + \mu_{x,0} \right] - \frac{N_{x,0}}{(1-\pi(x))}\mu_{x,0} \right)\\
&= \V\left(\frac{N_{x,1}}{\pi(x)} \left(\mu_{x,1}-\mu_{x,0}\right) + \mu_{x,0} \left(\frac{N_{x,1}}{(\pi(x))} - \frac{N_{x,0}}{(1-\pi(x))}\right) \right)
\end{aligned}
\end{equation*}

We know that $\V\left(\frac{N_{x,1}}{\pi(x)} \left(\mu_{x,1}-\mu_{x,0}\right) + \mu_{x,0} \left(\frac{N_{x,1}}{(\pi(x))} - \frac{N_{x,0}}{(1-\pi(x))}\right) \right)$
is the sum of the variances of the two terms plus twice their covariance, so we focus on 
$\V\left(\frac{N_{x,1}}{\pi(x)} \left(\mu_{x,1}-\mu_{x,0}\right) \right)$. 

\begin{equation*}
\begin{aligned}
\V\left(\frac{N_{x,1}}{\pi(x)} \left(\mu_{x,1}-\mu_{x,0}\right) \right) &= \E \left( \V\left(\frac{N_{x,1}}{\pi(x)} \left(\mu_{x,1}-\mu_{x,0}\right) \mid N_x \right) \right) + \V\left( \E \left( \frac{N_{x,1}}{\pi(x)} \left(\mu_{x,1}-\mu_{x,0}\right) \mid N_x \right)\right)\\
&= \E\left(N_x \frac{\left(\mu_{x,1} - \mu_{x,0}\right)^2\left(1 - \pi(x)\right)}{\pi(x)}\right) + \V \left( N_x \left(\mu_{x,1} - \mu_{x,0}\right) \right)
\end{aligned}
\end{equation*}

It follows that both
\begin{equation*}
\begin{aligned}
\E\left(N_x \frac{\left(\mu_{x,1} - \mu_{x,0}\right)^2\left(1 - \pi(x)\right)^2}{\pi(x)}\right) &\geq 0\\
\V\left(\mu_{x,0} \left(\frac{N_{x,1}}{(\pi(x))} - \frac{N_{x,0}}{(1-\pi(x))}\right) \right) &\geq 0
\end{aligned}
\end{equation*}

Similarly,
\begin{equation*}
\begin{aligned}
\V\left(\mu_{x,0} \left(\frac{N_{x,1}}{(\pi(x))} - \frac{N_{x,0}}{(1-\pi(x))}\right) \right) &= \E \left( \V\left(\mu_{x,0} \left(\frac{N_{x,1}}{(\pi(x))} - \frac{N_{x,0}}{(1-\pi(x))}\right) \mid N_x \right) \right)\\
&\;\;\;\;\; + \V\left( \E \left(\mu_{x,0} \left(\frac{N_{x,1}}{(\pi(x))} - \frac{N_{x,0}}{(1-\pi(x))}\right) \mid N_x \right)\right)\\
&= \E \left( \V\left(\mu_{x,0} \left(\frac{N_{x,1}}{\pi(x)} - \frac{N_{x,0}}{(1-\pi(x))}\right) \mid N_x \right) \right)\\
&= \E \left( \V\left(\mu_{x,0} \left(\frac{N_{x,1}}{\pi(x)} - \frac{N_x - N_{x,1}}{(1-\pi(x))}\right) \mid N_x \right) \right)\\
&= \E \left( \V\left(\mu_{x,0} \left(N_{x,1}\left(\frac{1}{\pi(x)} + \frac{1}{(1-\pi(x))}\right) - \frac{N_x}{(1-\pi(x))} \right) \mid N_x \right) \right)\\
&= \E \left( \V\left(\mu_{x,0} \left(N_{x,1}\left(\frac{1}{\pi(x)} + \frac{1}{(1-\pi(x))}\right) \right) \mid N_x \right) \right)\\
&= \E \left( \V\left(\mu_{x,0} \left(N_{x,1}\left(\frac{1}{\pi(x)(1-\pi(x))}\right) \right) \mid N_x \right) \right)\\
&= \E \left( \left(\frac{\mu_{x,0}}{\pi(x)(1-\pi(x))}\right)^2 \V\left( \left(N_{x,1} \right) \mid N_x \right) \right)\\
&= \E \left( \left(\frac{\mu_{x,0}}{\pi(x)(1-\pi(x))}\right)^2 \left( \pi(x)(1-\pi(x)) \left(N_{x} \right) \right) \right)\\
&= \E \left( \frac{\mu_{x,0}^2}{\pi(x)(1-\pi(x))} N_{x} \right) \\
\end{aligned}
\end{equation*}
so we are left to analyze 
\begin{equation*}
\begin{aligned}
\mbox{Cov}\left(\frac{N_{x,1}}{\pi(x)}, \frac{N_{x,1}}{(\pi(x))} - \frac{N_{x,0}}{(1-\pi(x))} \right)
\end{aligned}
\end{equation*}

We can show that this is nonnegative using the law of total covariance (see for example \cite{casella2002statistical})
\begin{equation*}
\begin{aligned}
\mbox{Cov}\left(\frac{N_{x,1}}{\pi(x)}, \frac{N_{x,1}}{(\pi(x))} - \frac{N_{x,0}}{(1-\pi(x))} \right) &= \E\left( \mbox{Cov}\left(\frac{N_{x,1}}{\pi(x)}, \frac{N_{x,1}}{(\pi(x))} - \frac{N_{x,0}}{(1-\pi(x))} \mid N_x \right) \right)\\
& + \mbox{Cov}\left( \E\left(\frac{N_{x,1}}{\pi(x)} \mid N_x \right) , \E\left(\frac{N_{x,1}}{(\pi(x))} - \frac{N_{x,0}}{(1-\pi(x))} \mid N_x \right) \right)\\
&= \E\left( \mbox{Cov}\left(\frac{N_{x,1}}{\pi(x)}, \frac{N_{x,1}}{(\pi(x))} - \frac{N_{x,0}}{(1-\pi(x))} \mid N_x \right) \right)\\
&= \E\left( \E \left(\frac{N_{x,1}^2}{\pi(x)^2} - \frac{N_{x,1}N_{x,0}}{\pi(x)(1-\pi(x))} \mid N_x \right) \right)
\end{aligned}
\end{equation*}

and so we solve
\begin{equation*}
\E\left( \E \left(\frac{N_{x,1}^2}{\pi(x)^2} - \frac{N_{x,1}N_{x,0}}{\pi(x)(1-\pi(x))} \mid N_x \right) \right) = \E\left( \frac{\pi(x)N_x + \pi(x)^2 N_x (N_x - 1)}{\pi(x)^2} - \E \left( \frac{N_{x,1}N_{x,0}}{\pi(x)(1-\pi(x))} \mid N_x \right) \right)
\end{equation*}
\begin{equation*}
\begin{aligned}
\hspace{2in}&= \E\left( \frac{\pi(x)(1-\pi(x))N_x + \pi(x)^2 N_x^2}{\pi(x)^2} - \E \left( \frac{N_{x,1}N_{x,0}}{\pi(x)(1-\pi(x))} \mid N_x \right) \right)\\
&= \E\left( \frac{\pi(x)(1-\pi(x))N_x + \pi(x)^2 N_x^2}{\pi(x)^2} - \E \left( \frac{N_{x,1}(N_x-N_{x,1})}{\pi(x)(1-\pi(x))} \mid N_x \right) \right)\\
&= \E\left( \frac{\pi(x)(1-\pi(x))N_x + \pi(x)^2 N_x^2}{\pi(x)^2} - \E \left( \frac{N_xN_{x,1}-N_{x,1}^2}{\pi(x)(1-\pi(x))} \mid N_x \right) \right)\\
&= \E\left( \frac{\pi(x)(1-\pi(x))N_x + \pi(x)^2 N_x^2}{\pi(x)^2(1-\pi(x))} - \frac{N_x^2}{1-\pi(x)} \right)\\
&= \E\left( \frac{\pi(x)(1-\pi(x))N_x}{\pi(x)^2(1-\pi(x))} + \frac{N_x^2}{(1-\pi(x))} - \frac{N_x^2}{1-\pi(x)} \right)\\
&= \E\left( \frac{N_x}{\pi(x)} \right)\\
\end{aligned}
\end{equation*}

Now, consider the entire variance
\begin{equation*}
\begin{aligned}
&\V\left(\frac{N_{x,1}}{\pi(x)} \left(\mu_{x,1}-\mu_{x,0}\right) + \mu_{x,0} \left(\frac{N_{x,1}}{(\pi(x))} - \frac{N_{x,0}}{(1-\pi(x))}\right) \right) \\
&\;\;\;\;\;= \V\left(\frac{N_{x,1}\left(\mu_{x,1}-\mu_{x,0}\right)}{\pi(x)}\right) \\
&\;\;\;\;\;+ \V\left(\mu_{x,0}\left(\frac{N_{x,1}}{(\pi(x))} - \frac{N_{x,0}}{(1-\pi(x))}\right) \right) \\
&\;\;\;\;\;+ 2\mu_{x,0}\left(\mu_{x,1}-\mu_{x,0}\right) \mbox{Cov}\left(\frac{N_{x,1}}{\pi(x)}, \frac{N_{x,1}}{(\pi(x))} - \frac{N_{x,0}}{(1-\pi(x))} \right)\\
&\;\;\;\;\;= \E\left(N_x \frac{\left(\mu_{x,1} - \mu_{x,0}\right)^2\left(1 - \pi(x)\right)}{\pi(x)}\right) + \V \left( N_x \left(\mu_{x,1} - \mu_{x,0}\right) \right) \\
&\;\;\;\;\;+ \E \left( \frac{\mu_{x,0}^2}{\pi(x)(1-\pi(x))} N_{x} \right) \\
&\;\;\;\;\;+ 2\mu_{x,0}\left(\mu_{x,1}-\mu_{x,0}\right) \E\left( \frac{N_x}{\pi(x)} \right)\\
&\;\;\;\;\;= \V \left( N_x \left(\mu_{x,1} - \mu_{x,0}\right) \right)\\
&\;\;\;\;\;+ \left(\mu_{x,1} - \mu_{x,0}\right)^2\left(1 - \pi(x)\right) \E\left(\frac{N_x}{\pi(x)}\right) + \frac{\mu_{x,0}^2}{(1-\pi(x))} \E \left( \frac{N_{x}}{\pi(x)} \right) + 2\mu_{x,0}\left(\mu_{x,1}-\mu_{x,0}\right) \E\left( \frac{N_x}{\pi(x)} \right)\\
&\;\;\;\;\;= \V \left( N_x \left(\mu_{x,1} - \mu_{x,0}\right) \right) + \E\left(\frac{N_x}{\pi(x)}\right) \left( \left(\mu_{x,1} - \mu_{x,0}\right)\sqrt{1 - \pi(x)} + \frac{\mu_{x,0}}{\sqrt{1-\pi(x)}} \right)^2
\end{aligned}
\end{equation*}

Thus we have that 
\begin{equation*}
\begin{aligned}
\V\left(\sum_{x \in \mathcal{X}} N_x \left( \frac{\hat{\pi}(x)}{\pi(x)} \mu_{x,1} - \frac{1-\hat{\pi}(x)}{1-\pi(x)}\mu_{x,0} \right) \right) \geq \V\left(\sum_{x \in \mathcal{X}} N_x \left( \mu_{x,1} - \mu_{x,0}\right) \right)
\end{aligned}
\end{equation*}

Now, we define 
\begin{equation*}
\begin{aligned}
b &= \V\left(\sum_{x \in \mathcal{X}} N_x \left( \frac{\hat{\pi}(x)}{\pi(x)} \mu_{x,1} - \frac{1-\hat{\pi}(x)}{1-\pi(x)}\mu_{x,0} \right) \right) - \V\left(\sum_{x \in \mathcal{X}} N_x \left( \mu_{x,1} - \mu_{x,0}\right) \right)
\end{aligned}
\end{equation*}

and we see that $\V\left(\bar{\tau}^{\hat{\pi}}_{strat}\right) \leq \V\left(\bar{\tau}^{\pi}_{strat}\right)$ when $b \geq a$. 
This is true when $\mu_{x,1}$ and / or $\mu_{x,0}$ are ``far enough" from zero that the variance benefit of estimating 
propensity scores outweighs the numerical effect detailed in Appendix \ref{appB1}. Formally, we may choose a data-dependent $\epsilon$ 
such that if $\lvert\mu(x)\rvert + \lvert\tau(x)\rvert > \epsilon$ for at least one $x \in \mathcal{X}$, then $b \geq a$.

\section{Proof of Corollary \ref{corollary1}} \label{appC}

The proof follows directly from the proof outlined in Appendix \ref{appA1}. 
To make the extension to Theorem \ref{theorem2} as clear as possible, we index level sets of $\pi(\mathcal{X})$ using $j$. 
For each $j \in \pi(\mathcal{X})$, we let $m(j) = \lvert \left\{ x : \pi(x) = j \right\} \rvert$. 
By the assumptions of Corollary \ref{corollary1}, we have that for all $j \in \pi(\mathcal{X})$ and all $\ell = \left\{1, \dots, m(j)\right\}$,
\begin{itemize}
\item $\sigma^2_{j\ell} = \sigma_j^2$
\item $\mu_{j\ell, 1} = \bar{\mu}_{j, 1}$
\item $\mu_{j\ell, 0} = \bar{\mu}_{j, 0}$
\end{itemize}
where the second two points are true because both $\mu(x)$ and $\tau(x)$ are constant for each $x$ with $\pi(x) = j$. Thus, it follows that 
$$\V \left( \sum_{j \in s(\mathcal{X})} \bar{\tau}_{\textrm{ipw}}^{\hat{\pi},s} \right) \leq \V \left( \sum_{j \in s(\mathcal{X})} \left( \sum_{\pi: g(\pi) = j} \frac{N_{\pi}}{N_j} \bar{\tau}_{\textrm{ipw}}^{\hat{\pi},\pi} \right) \right)$$

\section{Proof of Corollary \ref{corollary2}} \label{appD}

As in Appendix \ref{appC}, the proof follows more or less directly from the proof of Theorem \ref{theorem2}. 
By assumption, there exist $x, x' \in \mathcal{X}$ such that 
\begin{itemize}
\item $x \neq x'$
\item $\pi(x) \neq \pi(x')$
\item $s(x) = s(x')$
\item $\mu(x) = \mu(x')$
\item $\tau(x) = \tau(x')$
\end{itemize}
We let $j = s(x) = s(x')$ denote a level set of $s$ for which the above conditions are true, 
and we define $m(j) = \lvert \left\{ x : \pi(x) = j \right\} \rvert$. 
By the assumptions of Corollary \ref{corollary2}, we have that for all $\ell = \left\{1, \dots, m(j)\right\}$,
\begin{itemize}
\item $\sigma^2_{j\ell} = \sigma_j^2$
\item $\mu_{j\ell, 1} = \bar{\mu}_{j, 1}$
\item $\mu_{j\ell, 0} = \bar{\mu}_{j, 0}$
\end{itemize}
Thus, it follows that 
$$\V \left( \sum_{j \in s(\mathcal{X})} \bar{\tau}_{\textrm{ipw}}^{\hat{\pi},s} \right) \leq \V \left( \sum_{j \in s(\mathcal{X})} \left( \sum_{\pi: g(\pi) = j} \frac{N_{\pi}}{N_j} \bar{\tau}_{\textrm{ipw}}^{\hat{\pi},\pi} \right) \right)$$

\end{appendix}
\end{document}